\def\BibTeX{{\rm B\kern-.05em{\sc i\kern-.025em b}\kern-.08em
    T\kern-.1667em\lower.7ex\hbox{E}\kern-.125emX}}
\definecolor{block-gray}{gray}{0.85}
\newtcolorbox{newnotation}{colback=block-gray,grow to right
  by=-0mm,grow to left by=-0mm,boxrule=0pt,boxsep=0pt,
  left=0pt,right=0pt,top=0pt,bottom=0pt}
\newcommand{\Regs}{\mathcal{R}}
\newcommand{\Pc}{\mathcal{PC}}
\newcommand{\Locs}{\mathcal{M}}
\newcommand{\ANames}{\mathit{N}}
\newcommand{\Val}{\mathit{V}}
\newcommand{\pc}{\mathit{pc}}
\newcommand{\subst}[3]{#1[#2 \mapsto #3]}
\newcommand{\Stores}{\mathit{Stores}}
\newcommand{\Observations}{\mathit{Obs}}
\newcommand{\trace}{\mathit{trace}}
\newcommand{\inst}{\iota}
\newcommand{\MILlang}{\mathcal{I}}
\newcommand{\dom}[1]{\mathrm{dom}(#1)}
\newcommand{\ass}[3]{#2 \leftarrow #1 ? #3}
\newcommand{\asstrue}[2]{#1 \leftarrow #2}
\newcommand{\load}[2]{ld\ #1\ #2}
\newcommand{\store}[3]{st\ #1\ #2 \ #3}
\newcommand{\dl}[1]{dl\ #1}
\newcommand{\ds}[1]{ds\ #1}
\newcommand{\il}[1]{il\ #1}
\newcommand{\expr}{e}
\newcommand{\instr}{\iota}
\newcommand{\conf}{\mathit{States}}
\newcommand{\step}[2]{\xrightarrow[#2]{#1}}
\newcommand{\singlestep}[2]{\xrightarrow[#2]{#1}}
\newcommand{\doublestep}[2]{\xrightarrow[#2]{#1}\mathrel{\mkern-14mu}\rightarrow}
\newcommand{\triplestep}[2]{\xrightarrow[#2]{#1}\mathrel{\mkern-14mu}\rightarrow\mathrel{\mkern-14mu}\rightarrow}
\newcommand{\singlestepdown}[2]{\downarrow}
\newcommand{\storage}{s}
\newcommand{\Inst}{\mathit{I}}
\newcommand{\commits}{\mathit{C}}
\newcommand{\decodes}{\mathit{F}}
\newcommand{\restype}{\tau}
\newcommand{\newstate}{\sigma}
\newcommand{\completed}[1]{\mathcal{C}(#1)}
\newcommand{\exc}{\footnotesize \textsc{Exe}}
\newcommand{\pexe}{\footnotesize \textsc{Pexe}}
\newcommand{\exe}{\exc}
\newcommand{\cmt}{\footnotesize\textsc{Cmt}}
\newcommand{\ftc}{\footnotesize\textsc{Ftc}}
\newcommand{\ret}{\footnotesize\textsc{Ret}}
\newcommand{\prd}{\footnotesize\textsc{Prd}}
\newcommand{\rbk}{\footnotesize\textsc{Rbk}}
\newif\ifshowchanges
\newcommand{\changed}[1]{\textcolor{red}{#1}}
\newcommand{\changed}[1]{#1}
\begin{document}

\title{InSpectre: Breaking and Fixing Microarchitectural Vulnerabilities by Formal Analysis
% {\footnotesize \textsuperscript{*}Note: Sub-titles are not captured in Xplore and
% should not be used}
% \thanks{Identify applicable funding agency here. If none, delete this.}
}

\author{Roberto Guanciale}
\email{robertog@kth.se}
\orcid{}
\affiliation{%
  \institution{KTH Royal Institute of Technology}
%  \department{School of Electrical Engineering and Computer Science}
  \city{Stockholm}
  \postcode{SE-100 44}
  \country{Sweden}
}
\author{Musard Balliu}
\orcid{0001-6005-5992}
\email{musard@kth.se}
\affiliation{%
  \institution{KTH Royal Institute of Technology}
%  \department{School of Electrical Engineering and Computer Science}
  \city{Stockholm}
  \postcode{SE-100 44}
  \country{Sweden}
}
\author{Mads Dam}
\orcid{0001-5432-6442}
\email{mfd@kth.se}
\affiliation{%
  \institution{KTH Royal Institute of Technology}
%  \department{School of Electrical Engineering and Computer Science}
  \city{Stockholm}
  \postcode{SE-100 44}
  \country{Sweden}
}

\begin{abstract}
The recent Spectre attacks have  demonstrated the fundamental insecurity of current computer microarchitecture. 
The attacks use features like pipelining, out-of-order and speculation to extract arbitrary information about the memory contents of a process. 
A comprehensive formal microarchitectural model capable of representing the forms of out-of-order and speculative behavior that can meaningfully
be implemented in a high performance pipelined architecture has not yet emerged. Such a model would be very useful, as it would allow
the existence and non-existence of vulnerabilities, and soundness of countermeasures to be formally established. 

This paper
 presents such a model targeting single core processors. 
%
%This suggests that the security of software cannot
%be achieved without comprehensive models of the underlying hardware. The models should capture microarchitectural features
%like OoO and speculative executions, which lie at the core of these vulnerabilities.
%% Securing software against these attacks
%% requires comprehensive models of the underlying hardware to be used in conjunction with the software at the time of security analysis.
%% This suggests that these models should capture microarchitectural features that determine observations of a program's side effects on the 
%% memory subsystem, thus enabling sensitive information leaks via timing channels.
%
%We present InSpectre, the first comprehensive microarchitecture model for out-of-order and speculative execution  targeting single core processors. 
The model is intentionally very general and provides an infrastructure to define models of real CPUs. 
It incorporates microarchitectural features that underpin all known Spectre vulnerabilities. We use the model to elucidate the security 
of existing and new vulnerabilities, as well as to formally analyze the effectiveness of proposed countermeasures.  
% We validate the semantics formally with respect to the standard program order execution and use it to define security, showing that our model can reproduce
% all four variants of Spectre vulnerabilities. 
Specifically, we discover three new (potential) vulnerabilities, including a new variant of Spectre v4, a vulnerability on speculative fetching, and a vulnerability on out-of-order execution,  
and analyze the effectiveness of  existing countermeasures including constant time and serializing instructions.

\end{abstract}

% \begin{IEEEkeywords}
% component, formatting, style, styling, insert
% \end{IEEEkeywords}

\maketitle

\section{Introduction}
The wealth of vulnerabilities that have followed on from Spectre and Meltdown \cite{DBLP:conf/sp/KocherHFGGHHLM019,lipp2018meltdown} have provided ample evidence of the fundamental insecurity of current computer microarchitecture. The extensive use of instruction level parallelism in the form of out-of-order (OoO) and speculative execution has produced designs with side channels that  can be exploited by attackers to learn sensitive information about the memory contents of a process. One witness of the subtlety of the issues is the more than 50 years passed since pipelining, caching, and OoO execution, cf. IBM S/360, was first introduced. 

Another witness is the fact that two years after the discovery of Spectre, a comprehensive understanding of the security implications of pipeline related microarchitecture features has yet to emerge. One result is the ongoing arms race between researchers discovering new Spectre-related vulnerabilities \cite{DBLP:conf/uss/CanellaB0LBOPEG19}, and CPU vendors providing 
patches followed by informal arguments~\cite{cortexA53}. The security and effectiveness of the currently proposed
countermeasures is unknown, and there are continuously new  vulnerabilities appearing that exploit specific microarchitecture features. 

It is important to note that side channels and functional correctness are to a large extent orthogonal.
The latter is
usually proved by reducing pipelined behaviour to sequential behaviour through some form of refinement-based argument. The past decades have seen a significant body of work in this area, 
cf.  \cite{BurchD94,sawada2002verification,AagardCDJ01,ManoliosS05}, addressing rich sets of features of concrete pipeline designs such as OoO, speculation, and self-modifying code. 
Functional correctness, however, focuses on programs' input-output behaviour and fails to adequately capture the differential aspects of speculation and instruction reordering that are at the root of Spectre-like vulnerabilities. For a systematic study of the latter we argue that new tools that are not necessarily tied to any specific pipeline architecture are needed. 

Along this line, several recent works \cite{secspec,guarnieri2018spectector,mcilroy2019spectre,Disselkoen2019TheCT} have started to propose formal microarchitectural models using information flow analysis to identify information leaks arising from speculative execution in a principled manner. 
These models capture specific speculation features, e.g, branch prediction, and
variants of Spectre, in particular variant 1, and design analyses that detect known attacks~\cite{secspec,guarnieri2018spectector,wang2019kleespectre}. While these approaches illustrate the usefulness of  formal models in analyzing microarchitecture leaks, features lying at
the heart of modern CPUs such as OoO execution and many forms of speculation remain largely
unexplored, implying that new vulnerabilities may still exist. 

%\todo{REmoved paragraph}
\paragraph*{Contributions}
This work presents InSpectre, the first comprehensive model capable of
 capturing OoO execution and all forms of speculation that can be meaningfully 
 implemented in the context of a high performance pipeline. The model is intentionally very general and
provides an infrastructure to define models of real CPUs
(Section~\ref{sec:fmm}), which can be used to analyze effectiveness of
countermeasures for a given processor. 

Our first contribution is a novel semantics  supporting microarchitectural 
features such as OoO execution, non-atomicity of instructions, and various forms of
speculation, including branch prediction, jump target prediction, return address prediction, and dependency prediction.
Additionally, the semantics supports features such as
address aliasing, dynamic references, store forward, and OoO memory commits,
which are necessary to model all known variants of Spectre. The semantics implements the stages of an abstract
pipeline supporting OoO (Section~\ref{sec:ooo}) and
speculative execution (Section~\ref{sec:spec}).
In line with existing work \cite{secspec,guarnieri2018spectector}, our security condition
formalizes the intuition that optimizations should not
introduce additional information leaks (conditional
noninterference, Section~\ref{sec:secmod}).
We use this condition to % validate the semantics by proving memory consistency with respect to the standard program order execution, and  
show that InSpectre can reproduce all four variants of Spectre.

As a second contribution, we use InSpectre to discover three new potential vulnerabilities. %  related to 
% OoO, speculative fetching, and speculation of
% dependencies
The first vulnerability shows that CPUs supporting only OoO may leak sensitive information. 
We discovered the second vulnerability while attempting to validate a CPU vendor's claim that microarchitectures like Cortex A53 are immune to Spectre 
vulnerabilities because they support only speculative fetching \cite{cortexA53}. Our model reveals that this may not be the case. The third vulnerability is a variant
of Spectre v4 showing that speculation of a  dependency, rather than speculation of a non-dependency as in Spectre v4, between a load and a store operation
may also leak sensitive information.

Finally, as a third contribution,  we leverage InSpectre to analyze
the effectiveness of some existing  countermeasures.
We found that
constant-time~\cite{Bernstein05} analysis is unsound for processors supporting only OoO, 
and propose a provably secure fix that enables constant-time analysis
to ensure security for such processors.

\newcommand{\stylecode}[1]{\texttt{#1}}
% \subsection{Spectre-BTB: Exploiting jump prediction} %{Musard}
% \label{sec:background:btb}
% % Spectre v2
% % 
% % https://arxiv.org/pdf/1801.01203.pdf
% % https://support.google.com/faqs/answer/7625886

% A known countermeasure to Spectre-BTB is the \emph{retpoline} technique developed by Google~\cite{}. In a nutshell, retpolines are instruction snippets that isolate indirect jumps from speculative execution
% via call and return instructions.

% \subsection{Return prediction}\label{sec:background:ret}

% Countermeasures consist in preventing the stack from being deeper than
% $N$ and cleaning the RSB in case of modifications of return address in the stack.

% countermeasure: ratpoline

% \subsection{Dependency prediction}
% \label{sec:background:deps}

% Special barriers and configurations have been designed to prevents loads to
% be executed before addresses of all previous store
% have been resolved.
% \todo[inline]{fix math style vs codestyle - microinstructions/instructions}

\section{Security Model}\label{sec:secmod}

Our security model has the following ingredients: 
(i) an \emph{execution} model which is given by the execution
semantics of a program; % (as defined in Section~\ref{sec:fmm});  
(ii) an \emph{attacker} model specifying the observations of an attacker;
(iii) a \emph{security policy} specifying the parts of the program
state that contain sensitive/high information, and the parts that
contain public/low information; 
(iv) a \emph{security condition} capturing  a program's security with respect
to an execution model, an attacker model, and a security policy.
% We
% define each of these 
% ingredients in the remainder of this section. 

First, we consider a general model of attacker that observes the interaction between the CPU and the memory subsystem. 
This model has been used (e.g.,~\cite{almeida2016verifying}) to capture information leaks via
cache-based side channels transparently without an explicit cache
model. It can capture trace-driven attackers that can interleave with the victim's execution and
indirectly observe, for instance using Flush+Reload \cite{gruss2016flush}, the victim's cache footprint via latency jitters.
The attacker can observe the address of a memory load $\dl{v}$ (data load from memory address $v$), the address of a memory store $\ds{v}$ (data store to memory address $v$), as well as 
the value of the program counter $\il{v}$ (instruction load from
memory address $v$) \cite{DBLP:conf/icisc/MolnarPSW05}.
% We model observations  as part of the
% microarchitectural semantics in Section~\ref{sec:ooo}.

We assume a transition relation $\mbox{$\singlestep{}{}$}\subseteq\conf\times\Observations\times\conf$ to model the execution semantics of a program as 
a state transformer  producing observations $l \in \Observations$. The reflexive and transitive closure of $\singlestep{}{}$  induces a set of executions $\pi \in \Pi$.  
The function $\trace: \Pi \mapsto \Observations^\ast$ extracts the sequence of observations of an execution.

The security policy is defined by an \emph{indistinguishability
  relation}  $\mbox{$\sim$}\subseteq\conf\times\conf$.
% $\newstate\in\conf$. 
% Two states $\newstate_1, \newstate_2 $ are \emph{indistinguishable}  if $\newstate_1 \sim \newstate_2$.
The relation $\sim$  determines the security of information that is initially stored in a state, modeling the set of initial states that an attacker is not allowed to discriminate.
% by making observations during an execution. 
These states represent the initial \emph{uncertainty} of an attacker about sensitive information.  

The security condition defines the security of a program on the
target execution model (e.g., the speculation model) $\singlestep{}{}_t$ 
conditionally on the security of the same program on the reference, i.e. sequential, model $\singlestep{}{}_r$, by requiring that the target model does not leak more information than the reference model for a  policy $\sim$. 

 %
%
%
% \begin{definition}[Conditional Noninterference]\label{def:ci}
%  Let $P$ be a program and $\sim$ a security policy. Let also
%  $\singlestep{}{}_t$ and $\singlestep{}{}_r$ be  transition relations for the
%  target model and the reference  model. 
% Then $P$ is conditionally noninterferent iff for all
%   $\newstate_1, \newstate_2 \in \conf $ such that $\newstate_1 \sim \newstate_2$ if for every  $\pi_1 = \newstate_1 \singlestep{}{}_r
%   \cdots$ there exists 
%  $\pi_2 = \newstate_2 \singlestep{}{}_r  \cdots$ such that $\trace(\pi_1) =
%  \trace(\pi_2)$ then
%  for every  $\rho_1 = \newstate_1 \singlestep{}{}_t  \cdots$  there exists  $\rho_2
%  = \newstate_2 \singlestep{}{}_t  \cdots$
%  such that  $\trace(\rho_1) = \trace(\rho_2)$. 
% \end{definition}
%
% \todo{RG: changed program with system}
 \begin{definition}[Conditional Noninterference]\label{def:ci}
 Let $\sim$ be a security policy and
 $\singlestep{}{}_t$ and $\singlestep{}{}_r$ be  transition relations for the
 target and reference  models of a system. 
 The system is conditionally noninterferent if for all
  $\newstate_1, \newstate_2 \in \conf $ such that $\newstate_1 \sim \newstate_2$, if for every  $\pi_1 = \newstate_1 \singlestep{}{}_r
  \cdots$ there exists 
 $\pi_2 = \newstate_2 \singlestep{}{}_r  \cdots$ such that $\trace(\pi_1) =
 \trace(\pi_2)$ then
 for every  $\rho_1 = \newstate_1 \singlestep{}{}_t  \cdots$  there exists  $\rho_2
 = \newstate_2 \singlestep{}{}_t  \cdots$
 such that  $\trace(\rho_1) = \trace(\rho_2)$. 
\end{definition}

% Conditional noninterference only captures the new information leaks
% that may be introduced by executing a program with transition relation
% $\singlestep{}{}_t$, and it ignores any leaks that might 
% have been present when executing the program with transition relation
% $\singlestep{}{}_r$. Appendix~\ref{app:ni} elucidates the advantages
% of such definition as compared to standard noninterference. 

Conditional noninterference captures only the new information leaks
that may be introduced by model $\singlestep{}{}_t$, and ignores any leaks already present in model $\singlestep{}{}_r$. 
\changed{
The target model is constructed in two steps. First, we present an OoO model that extends the sequential model, which is deterministic, by allowing evaluation to proceed out-of-order. Then the OoO model is further extended by adding speculation. At each step the traces of the abstract model are included in the extended model, and a memory consistency result demonstrates that the per location sequence of memory stores is the same for both models. This establishes functional correctness. Conditional noninterference then establishes security of each extension. Each such step strictly increases the set of possible traces by adding nondeterminism. Since refinement is often viewed as essentially elimination of nondeterminism, one can think of the extensions as ``inverse refinements''. 
Since conditional noninterference considers a possibilistic setting, it does not account for information leaks through the number of initial indistinguishable states.
}

%\changed{For the purpose of this paper, the target model includes all behaviors of the reference model. As we will see, the speculative model includes all behaviours of the OoO model,  which includes all behaviours of the sequential model. Moreover, we show that the sequential model is deterministic. In general, the rationale of conditional noninterference is that it can also be used to show that a target model (e.g., concrete pipeline) securely implements the reference model (e.g., an abstract pipeline). Moreover, conditional noninterference considers a possibilistic setting, hence it does not account for the number of initial indistinguishable states leading to information leaks.}

\changed{We now elucidate the advantages of conditional noninterference as compared to standard notions of  noninterference and declassification. 
Suppose we define the security condition directly on the target model, in the style of standard noninterference.}
\changed{\begin{definition}[Noninterference]
 Let $P$ be a program with transition relation $\singlestep{}{}$ and
 $\sim_P$ a security policy. $P$ satisfies noninterference \changed{up to $\sim_P$} if for all
 $\newstate_1, \newstate_2 \in \conf $ such that $\newstate_1 \sim_P
 \newstate_2$  
 and executions $\pi_1 = \newstate_1 \singlestep{}{}  \cdots$, 
 there exists an execution  $\pi_2 = \newstate_2 \singlestep{}{}  \cdots$ such that $\trace(\pi_1) = \trace(\pi_2)$.
\end{definition}
}

\changed{Noninterference ensures that if the observations do not enable an attacker 
to refine his knowledge  of sensitive information beyond what is allowed by the policy $\sim_P$,  the program can be considered secure.
Noninterference can accommodate partial release of sensitive information by refining the definition of the indistinguishability relation $\sim_P$. 
In our context, a precise definition of $\sim_P$ can be challenging to define. 
However,  we ultimately aim at showing that the OoO/speculative model does not leak more information than the in-order (sequential) model, thus  
capturing the intuition that microarchitectural features like OoO and speculation should not introduce additional leaks. 
Therefore, instead of defining the policy $\sim_P$ explicitly, we split it into two relations $\sim$ (as in Def.~\ref{def:ci}) and $\sim_D$, where the former models information of the initial state that 
is known by the attacker, i.e., the public resources,  and the latter models information that the attacker is allowed to learn during the  execution via observations. Hence, $\mbox{$\sim_P$} = \mbox{$\sim$} \cap \mbox{$\sim_D$}$. 
This characterization allows for a simpler formulation of the security condition that is transparent on the definition of $\sim_D$, as described in Def.~\ref{def:ci}. 
}

% In particular, the relation $\sim_D$ is captured implicitly by only considering those execution pairs of $\singlestep{}{}_r$ that produce
% the same attacker observations. 
% \vspace*{-.1cm}

\section{Formal Microarchitectural Model}
\label{sec:fmm}

We introduce a Machine Independent Language (MIL) which we use to define the semantics of microarchitectural features such as
OoO and speculative execution. We use MIL as a form of abstract microcode language: A target language for translating ISA instructions and reasoning
about features that may cause vulnerabilities like Spectre. Microinstructions in MIL 
represent atomic actions that can be executed by
the CPU, emulating the pipeline phases in an abstract manner.
This model is intentionally very general and provides an infrastructure to define 
models of real microarchitectures.

We consider a domain of values $v\in \Val$,
a program counter $\pc \in \Pc$,
a finite set of register/flag identifiers $r_0,\ldots,r_n, f, z \in \Regs \subseteq \Val$, and
a finite set of memory addresses $a_0,\ldots,a_m \in \Locs \subseteq
\Val$.
The language can be easily extended to support other type of
resources, e.g., registers for vector operations. % and boolean flags.
We assume a total order $<$ on a set of names $t_0,t_1,\ldots \in \ANames$, which we use to uniquely identify  microinstructions. 
We write $\ANames_1 < \ANames_2$ if
for every pair $(t_1, t_2) \in \ANames_1 \times \ANames_2$ it holds that
$t_1 < t_2$.

\newcommand{\fundefined}[2]{#1(#2)\mbox{$\uparrow$}}
\newcommand{\fdefined}[2]{#1(#2)\mbox{$\downarrow$}}
\newcommand{\proj}[2]{#1\mbox{$\mid_{#2}$}}

Microinstructions $\instr \in \MILlang$ are  conditional atomic single assignments.
A microinstruction $\instr=\ass{c}{t}{o}$ is uniquely identified by its name $t \in
\ANames$ and consists of 
a boolean guard $c$, which determines if the assignment
should be executed, and an operation $o \in Op$.
\todo{RG: removed
A MIL \emph{program} $I$ is a set of microinstructions
$\ass{c_i}{t_i}{o_i} \in \mathcal{I}$.
}
% \todo[inline]{RG: We do not need this in the new notation}
% We write $I_{<t}$ to denote the set of microinstructions in $I$ that preceed the microinstruction identified by a name $t$ according to the order relation $<$, namely,  
%  $I_{<t} = \{\ass{c_i}{t_i}{o_i} \in I \mid t_i < t\}$. We define the set $I_{\geq t}$ similarly.
The MIL language has three types of operations: 
\[
\begin{array}{l}
% t ::= t_1 \ \mid \ t_2 \ \mid  \cdots \\
e ::= v \ \mid \ t \ \mid \ e_1 + e_2 \ \mid \ e_1 > e_2 \ \mid  \cdots \\
o ::= \expr  \mid\ \load{\restype}{t_a} \mid \ \store{\restype}{t_a}{t_v}
\end{array}
\]
% \todo{Note that we assume that parameters of loads/stores cannot be values.}

An internal operation $e$ is an expression over standard finite arithmetic and
can additionally refer to names in $\ANames$ and values in $\Val$.
A resource load operation $\load{\restype}{t_a}$, where $\restype \in \{\Pc, \Regs,
\Locs\}$, loads the value of resource $\restype$ addressed by $t_a$. We support three types of resources: The program counter $\Pc$, registers $\Regs$, and memory locations $\Locs$. 
A resource store operation $\store{\restype}{t_a}{t_v}$ uses the value
of $t_v$ to update the resource $\restype$ addressed by $t_a$.
% Observe that the parameters of load and store resources are names in $\ANames$.\todo[inline]{MB: Why? Add forward pointer to where we clarify this.}

\newcommand{\fn}{\mathit{fn}}
\newcommand{\bn}{\mathit{bn}}
\newcommand{\translate}{\mathit{translate}}

The free names $\fn(\instr)$ of an instruction $\instr = \ass{c}{t}{o}$ is the set of names occurring in $c$ or $o$,
the bound names, $\bn(\instr)$, is the singleton $\{t\}$, and the names $n(\instr)$ is $\fn(\instr) \cup \bn(\instr)$.

To model the internal state of a CPU pipeline, we can translate an ISA instruction as multiple microinstructions.
For an ISA instruction at address $v \in \Locs$ and a name $t\in\ANames$, the function $\translate(v, t)$
returns the MIL translation of the instruction at address $v$, ensuring that the names of the microinstructions thus generated are greater than $t$. 
Because we assume code to not be self-modifying, an instruction can be statically identified by its address in memory.
% We use the name $t$ to record the name (wrt. order relation $<$) that has not been used in the translation process.
% \todo[inline]{
%   RG: Here we assume non-self modifying code.
%   Mads: Why?
%   In case of self, modifying code $T$ should look back to identify the
%   modified data. However, it is not guaranteed that the new code will
%   be executed. A problem with self modifying code is that it usually
%   break the simulation/refinement.
% } 
We assume that the translation function satisfies the properties:
\label{axiom:tranpilation}
\begin{inparaenum}[(i)]
    \item for all $\iota_1,\iota_2\in \translate(v,t)$, if $\iota_1 \neq \iota_2$ then $\bn(\iota_1) \cap
      \bn(\iota_2) = \emptyset$; for all $\iota \in \translate(v,t)$,
    \item  $\fn(\iota) < \bn(\iota)$, and 
    \item $\{t\} < n(\iota)$.
\end{inparaenum}

These properties ensure that names uniquely identify
microinstructions, the name parameters of a
single instruction form a Directed Acyclic Graph, 
the translated microinstructions are assigned names greater than $t$, and
the translation of two different ISA instructions does not
have direct inter-instruction dependencies (but may have indirect ones).
% \todo[inline]{Mads: Is the last statement true? RG: In our case yes.
%   The problem with $T$ being history dependant is that $T$ itself may
%   introduce channels that are not available in the sequential model.
%   On the other hand, we are probably the first work that support
%   non-atomicity of instructions}

% As a result of these properties, we can represent the set of microinstructions $I$ resulting from the translation 
% of an ISA instruction as a partially ordered multiset (pomset) $P(I)$, consisting of: 
% \begin{enumerate}
% \item Unique identifiers of the pomset's elements, as given by the bound names
% ($\cup_{\iota \in I}\bn(\iota)$),
% \item a labeling function mapping names to condition-operation pairs,  $\{t \mapsto (c, o)
% \mid \ass{c}{t}{o} \in I\}$, and, 
% \item the partial order induced by the transitive closure of
% parameter names of an instruction $$\sqsubseteq = {\{(t, t') \mid \iota \in I,
%   t \in \fn(\iota), t' \in \bn(\iota)\}}^{\star}\ .$$
%   \end{enumerate}
%   %
% A pomset represents the ``happens-before'' relation of microinstructions decoded by the CPU for a single ISA instruction.

\subsection{MIL Program Examples}

We introduce some illustrative examples of MIL programs, using their
graph representation.
For clarity, we omit conditions whenever they are true 
% (i.e., the
% corresponding operation should be executed unconditionally) 
and visualize only the immediate dependencies between graph elements.

% For clarity, we
% omit the address of program counter, since the program counter is
% unique and updated at the same fixed address. \todo{Mads: ???} We also
% visualize only the immediate dependencies between pomset elements, and
% do not compute the transitive closure.  

% \begin{example}\label{ex1}
  Consider an ISA instruction
  that increments the value of register $r_1$ by one, i.e.,
  \stylecode{$r_1$:= $r_1$+$1$}. The instruction  
can be translated in MIL as follows:
\[
  \left \{
  \begin{array}{l}
    \asstrue{t_1}{r_1},\:
    \asstrue{t_2}{\load{\Regs}{t_1}},\:
    \asstrue{t_3}{t_2+1},\:
    \asstrue{t_4}{\store{\Regs}{t_1}{t_3}},\\
    \asstrue{t_5}{\load{\Pc}{}}, \:
    \asstrue{t_6}{t_5+4},\:
    \asstrue{t_7}{\store{\Pc}{}{t_6}}
  \end{array}
  \right \}
\]

Intuitively, $t_1$ refers to the identifier of target register $r_1$, $t_2$
loads the current value of register $r_1$, $t_3$ executes the increment, and
$t_4$ stores the result of $t_3$ in the register store. 
The translation of an ISA instruction also updates the program
counter to enable the execution of the next instruction. In this case, the program counter
is increased by 4, unconditionally. Notice that we omit the program
counter's address, since there is only one such resource.
We can graphically represent this set of microinstructions using the
following graph:

\newcommand{\picexpT}[3][xshift=.0cm]{\node[pblock,label={$#2$},name=#2,#1]{$#3$};}
\newcommand{\picloadT}[4][xshift=.0cm]{\node[pblock,label={$#2$},name=#2,#1]{$\load{#3}{#4}$};}
\newcommand{\picloadTT}[4][xshift=.0cm]{\node[pblock,label={[xshift=-0.2cm]$#2$},name=#2,#1]{$\load{#3}{#4}$};}

\newcommand{\picstoreT}[5][xshift=.0cm]{\node[pblock,label={$#2$},name=#2,#1]{$\store{#3}{#4}{#5}$};}
\newcommand{\picstoreTT}[5][xshift=.0cm]{\node[pblock,label={[xshift=-0.2cm]$#2$},name=#2,#1]{$\store{#3}{#4}{#5}$};}
\newcommand{\picexp}[4][xshift=.0cm]{\node[pblock,label={[xshift=-.2cm]$#2$},name=#2,rectangle split, rectangle split horizontal, rectangle split parts=2, #1]{\nodepart{one}  $#3$ \nodepart{two} $#4$};}
\newcommand{\picstore}[6][xshift=.0cm]{\node[pblock,label={[xshift=-.2cm]$#2$},name=#2,rectangle split, rectangle split horizontal, rectangle split parts=2, #1]{\nodepart{one}  $#3$ \nodepart{two} $\store{#4}{#5}{#6}$};}
\newcommand{\picload}[5][xshift=.0cm]{\node[pblock,label={[xshift=-.2cm]$#2$},name=#2,rectangle split, rectangle split horizontal, rectangle split parts=2, #1]{\nodepart{one}  $#3$ \nodepart{two} $\load{#4}{#5}$};}

\newcommand{\pomsetnewline}{0.5cm}

\newcommand{\pomsetstyle}{
  \tikzstyle{every node}=[font=\scriptsize]
  \tikzset{pblock/.style = {rectangle, draw=black!50, top
    color=white,bottom color=white!20, align=center}}
}
% \newcommand{\pomsetstyle}{
%   \tikzstyle{every node}=[font=\small]
%   \tikzset{pblock/.style = {rectangle, thick,draw=black!50, top
%     color=white,bottom color=white!20, align=center}}
% }

%\vspace*{-0.3cm}

\begin{pomset}
  \begin{center}
\begin{tikzpicture}[node distance=2.3cm,auto,>=latex']

\pomsetstyle                      
\picexp{t_1}{\top}{r_1}
\picload[right of=t_1]{t_2}{\top}{\Regs}{t_1}
\picexp[right of=t_2]{t_3}{\top}{t_2+1}
\picstore[right of=t_3]{t_4}{\top}{\Regs}{t_1}{t_3}

\picload[below = \pomsetnewline of t_2]{t_5}{\top}{\Pc}{}
\picexp[right of=t_5]{t_6}{\top}{t_5+4}
\picstore[right of=t_6]{t_7}{\top}{\Pc}{}{t_6}

\draw[->]  (t_1) edge (t_2) (t_2) edge (t_3) (t_3) edge (t_4);
\draw[->]  (t_5) edge (t_6) (t_6) edge (t_7);
\end{tikzpicture}
\end{center}
\vspace*{-.3cm}
\caption{\stylecode{$r_1$ := $r_1$+$1$}}
\label{example:1}
\end{pomset}

%\vspace*{-0.3cm}

In the following we adopt syntactic sugar to use expressions, in place of  names, for the address
and value of load and store operations. This can be eliminated by introducing
the proper intermediary internal assignments. 
This permits to rewrite the previous example as:
% \[
%   \left \{
%   \begin{array}{l}
%     \asstrue{t_1}{\load{\Regs}{r_1}};\quad
%     \asstrue{t_2}{\store{\Regs}{r_1}{t_2+1}};\\
%     \asstrue{t_3}{\load{\pc}{}};\quad
%     \asstrue{t_4}{\store{\pc}{}{t_3+4}}
%   \end{array}
%   \right \}
% \]
% \todo{should we use $r_1$ or $1$? MB: We use $r_1$ as discussed.}

\vspace*{-0.3cm}
\begin{pomset}
\begin{center}
\begin{tikzpicture}[node distance=2.4cm,auto,>=latex']
  \pomsetstyle
\picloadT{t_2}{\Regs}{r_1}
\picstoreT[right of=t_2]{t_4}{\Regs}{r_1}{t_2+1}
\picloadT[right of= t_4]{t_5}{\Pc}{}
\picstoreT[right of=t_5]{t_7}{\Regs}{}{t_5+4}
\draw[->]  (t_2) edge (t_4) ;
\draw[->]  (t_5) edge (t_7) ;
\end{tikzpicture}
\end{center}
\end{pomset}
\vspace*{-0.1cm}
The translation of multiple ISA instructions results in disconnected graphs. 
% For instance, \stylecode{$r_1$:=$A_1$[$r_1$]; $r_1$:=$A_2$[$r_1$]} can be
% implemented in MIL as follows:
% \begin{pomset}
% \begin{center}
% \begin{tikzpicture}[node distance=3.5cm,auto,>=latex']
%   \pomsetstyle
% \node[pblock,label={$t_{11}$},name=a1]{$\load{\Regs}{r_1}$};                      
% \node[pblock,label={$t_{12}$},name=a2, right of=a1]{$\load{\Locs}{A_1 + t_{11}}$};                      
% \node[pblock,label={$t_{13}$},name=a3, right of=a2]{$\store{\Regs}{r_1}{t_{12}}$};                      
% \node[pblock,label={$t_{21}$},name=a4, below=\pomsetnewline of a1]{$\load{\Regs}{r_1}$};                      
% \node[pblock,label={$t_{22}$},name=a5, right of = a4]{$\load{\Locs}{A_2 + t_{21}}$};                      
% \node[pblock,label={$t_{23}$},name=a6, right of = a5]{$\store{\Regs}{r_1}{t_{22}}$};                      
% \draw[->]  (a1) edge (a2) (a2) edge (a3);
% \draw[->]  (a4) edge (a5) (a5) edge (a6);
% \end{tikzpicture}
% \end{center}
% \end{pomset}
% where $A_1$ and $A_2$ are the base addresses of the two arrays.
% In this case we omitted the increases of program counter. 
This reflects the fact that inter-instruction dependencies may not be
statically identified due to dynamic references and must be 
dynamically resolved by the MIL semantics.  
When translating multiple instructions, we use the following convention for generated  names: the name $t_{ij}$  identifies the $j$-th microinstruction 
resulting from the translation of the $i$-th instruction. Our convention induces a total (lexicographical) order over names (i.e., $t_{ij} < t_{i'j'}$ iff $(i < i') \vee (i = i' \wedge j < j')$), 
which respects the properties of the translation function.

MIL is expressive enough to
support conditional instructions
% which prevent the execution of certain operations
% in some microarchitectural states, 
like conditional
arithmetic and conditional move. Conditional branches can be
modeled in MIL via microinstructions that are guarded by complementary
conditions. For instance,  the  \stylecode{beq a} instruction, which jumps to address $a$ if the $z$ flag is set,  can be translated as in Example~\ref{example:branch}.
% \[
%   \begin{array}{l}
%     \asstrue{t_1}{\load{\pc}{}};\quad
%     \asstrue{t_2}{\load{\Regs}{z}};\\
%     \ass{(t_2=1)}{t_3}{a};\quad
%     \ass{(t_2=1)}{t_4}{\store{\pc}{}{t_3}};\\
%     \ass{(t_2 \neq 1)}{t_5}{t_2 + 4};\quad
%     \ass{(t_2 \neq 1)}{t_6}{\store{\pc}{}{t_5}}\\
% \end{array}
% \]
% \[
%   \begin{array}{l}
%     \asstrue{t_1}{\load{\pc}{}};\quad
%     \asstrue{t_2}{\load{\Regs}{z}};\\
%     \ass{(t_2=1)}{t_3}{\store{\pc}{}{a}};\\
%     \ass{(t_2 \neq 1)}{t_4}{\store{\pc}{}{t_1+4}}
% \end{array}
% \]

% \vspace*{-0.3cm}

\begin{pomset}
\begin{center}
\begin{tikzpicture}[node distance=2.5cm,auto,>=latex']

  \pomsetstyle

  \picloadT{t_1}{\Regs}{z}
  \picexpT[right of=t_1]{t_2}{t_1 = 1}
  \picloadT[right of=t_2]{t_3}{\Pc}{}
  \picstore[below = \pomsetnewline of t_1]{t_4}{t_2}{\Pc}{}{a}
  \picstore[below = \pomsetnewline of t_3]{t_5}{\neg t_2}{\Pc}{}{t_3+4}

 \draw[->]
 (t_1) edge (t_2)
 (t_2) edge (t_4)
 (t_2) edge (t_5) (t_3) edge (t_5);

\end{tikzpicture} 
\end{center}
% \vspace*{-.3cm}
\caption{\stylecode{beq a}}
\label{example:branch}
\end{pomset}

% \vspace*{-0.3cm}

%
%To show the adequacy of the OoO and speculative semantics
%we prove that they are \emph{memory consistent} with respect to the
%in-order (sequential) semantics, namely that writes to the same memory location are seen in the
%order. Full details of memory consistency and in-order semantics are reported in Appendix~\ref{sec:inorder}.

% 
% \todo{Notice that if we unconditionally update the store with
%   something that is executed conditionally then we get stuck}
\newcommand{\den}[1]{[#1]}
\newcommand{\lblempty}{\cdot}

\section{Out-of-Order Semantics}\label{sec:ooo}
% \todo{What about the initial configuration?}
\changed{
% \noindent \textbf{(What's next)}
This section presents an OoO semantics for MIL programs, which is extended in
Section~\ref{sec:spec} to account for speculation. The in-order semantics in Section \ref{sec:inorder} is obtained by constraining the OoO semantics to enforce program order evaluation. We prove memory consistency  for both the OoO and speculative semantics. The results show that, for each semantics, the sequences of per location memory stores are the same, thus establishing functional correctness of the OoO and speculative semantics.
}

% This section presents a general OoO semantics for MIL programs. 

\subsection{States, Transitions, Observations}
We formalize the semantics via a transition relation $\newstate \doublestep{l}{} \newstate'$, which maps a state $\newstate$ to a state $\newstate'$, and 
produces a (possibly empty, represented by a dot ($\lblempty$)) observation $l \in \Observations$, eliding the dot when convenient.
% We formalize the semantics via a transition relation $\newstate \doublestep{l}{} \newstate'$, which maps a state $\newstate \in \conf$ to a state $\newstate' \in \conf$, and 
% produces a (possibly empty $\lblempty$) observation $l \in \Observations$, eliding the dot ($\lblempty$) when convenient.
% \todo{we should define $\mathcal{L}$}
%We write $\newstate \doublestep{}{} \newstate'$ whenever
%a transition yields an empty observation, i.e., $l = \lblempty$.
% Observations model the interaction of the CPU with the memory subsystem. 
As in Section~\ref{sec:secmod}, $\Observations = \{\lblempty,\dl{v}, \ds{v}, \il{v}\}$ 
captures the attacker model. 
% Specifically, we have three types of observations: $\dl{v}$ (data load from memory address $v$), 
% $\ds{v}$ (data store to memory address $v$), and $\il{v}$ (instruction load from memory address $v$). All other operations are internal to the CPU and do not interact with the memory subsystem.

States $\newstate$ are tuples $(\Inst, \storage, \commits, \decodes)$ where:
\begin{inparaenum}[(i)]
\item $\Inst$ is a set of MIL microinstructions,
\item $\storage \in \Stores =\ANames \rightharpoonup \Val$ is a (partial) storage function from names to values recording
microinstructions' execution results,
\item $\commits \subseteq \ANames$ is a set of names of store operations that have been committed  to the memory subsystem, 
\item $\decodes \subseteq \ANames$ is a set of names of program counter store  operations that have been processed, causing the ISA instruction at the stored location to be fetched and decoded.
\end{inparaenum}

In the following we write $\subst{\storage}{t}{v}$ for substitution of value $v$ for
name $t$ in  store $\storage$. We use $\fdefined{f}{x}$
to represent that the partial function $f$ is defined on $x$, and
$\fundefined{f}{x}$ if not $\fdefined{f}{x}$. We write $\dom{f}$ for the
domain of a partial function $f$.
We also use  $\proj{f}{D}$ to represent the restriction of  function $f$ to  domain $D$.
The semantics of expressions is  $[e]: \Stores \rightharpoonup \Val$ and
is defined as expected. An expression  is undefined if at least one name is
undefined in a storage, i.e., $\fundefined{[e]}{\storage}
\Leftrightarrow \fn(e) \not \subseteq \dom{\storage}$.
  For $\newstate=(\Inst, \storage, \commits, \decodes)$ we use
  $[e]\newstate$ for $[e]\storage$,
  $\fundefined{\newstate}{t}$ for $\fundefined{\storage}{t}$, and
  $\inst \in \newstate$ for $\inst \in \Inst$.

\subsection{Microinstruction Lifecycle}\label{sec:ooo:cycle}
%The following diagram represents the microinstruction lifecycle in
%the OoO semantics:
%

\newcommand{\statestyle}{
  \tikzstyle{every node}=[font=\scriptsize]
  \tikzset{pblock/.style = {circle, draw=black!50, top
      color=white,bottom color=white!20, align=center}}
  \tikzset{edge/.style = {->,> = latex'}}
}

\newcommand{\runempty}[3][xshift=.0cm]{
  \node[#1, name=#2, label=$#3$, draw, circle,  minimum width=0.03cm, fill=gray!50] {};
}
\newcommand{\runemptyplain}[3][xshift=.0cm]{
  \node[#1, name=#2, label=#3, draw, circle,  minimum width=0.03cm, fill=gray!50] {};
}

\newcommand{\runpred}[4][xshift=.0cm]{
  \node[#1, name=#2, label=$#3$, draw, circle, dotted,  minimum width=0.03cm] {$#4$};
}
\newcommand{\runpredplain}[4][xshift=.0cm]{
  \node[#1, name=#2, label=#3, draw, circle, dotted,  minimum width=0.03cm] {$#4$};
}

\newcommand{\runexec}[4][xshift=.0cm]{
  \node[#1, name=#2, label=$#3$, draw, circle, minimum width=0.03cm] {$#4$};
}
\newcommand{\runexecplain}[4][xshift=.0cm]{
  \node[#1, name=#2, label=#3, draw, circle, minimum width=0.03cm] {$#4$};
}
\newcommand{\runspec}[4][xshift=.0cm]{
  \node[#1, name=#2, label=$#3$, draw, circle, dashed, minimum width=0.03cm] {$#4$};
}
\newcommand{\runspecplain}[4][xshift=.0cm]{
  \node[#1, name=#2, label=#3, draw, circle, dashed, minimum width=0.03cm] {$#4$};
}
\newcommand{\runcomplete}[4][xshift=.0cm]{
\node[state,accepting,inner sep=1pt, minimum size=0.5cm] (#2) [#1] {$#4$};
}
\newcommand{\runcommit}[4][xshift=.0cm]{
  \node[#1, name=#2, label=#3, draw, circle, minimum width=0.03cm,
  very thick] {$#4$};
}
\newcommand{\rundecode}[4][xshift=.0cm]{
\node[state,accepting,inner sep=1pt, minimum size=0.5cm,label=#3] (#2) [#1] {$#4$};
}
\newcommand{\rundecodeplain}[4][xshift=.0cm]{
\node[state,accepting,inner sep=1pt, minimum size=0.5cm,label=#3] (#2) [#1] {$#4$};
}
\newcommand{\runspecdecodeplain}[4][xshift=.0cm]{
\node[state,accepting,dashed,inner sep=1pt, minimum size=0.5cm,label=#3] (#2) [#1] {$#4$};
  % \node[#1, name=#2, label=#3, draw, circle, dashed, minimum width=0.6cm] {$#4$};
  % \node[#1, yshift=-.05cm, name=#2-inner, draw, circle, dashed, minimum width=0.5cm] {};
}
\newcommand{\runspecdecode}[4][xshift=.0cm]{
\runspecdecodeplain{#1}{#2}{#3}{$#4$}
}

\newcommand{\specstate}{h}
\newcommand{\tracerightarrow}[2]{
  \node[below right =0.2cm and 0.05cm of #1, name=tr#1-a] {};
  \node[right =0.6cm of tr#1-a, name=tr#1-b] {};
  \draw[->] (tr#1-a) edge node{$#2$} (tr#1-b);
}
\newcommand{\traceleftarrow}[2]{
  \node[below left =0.2cm and -0.05cm of #1, name=tr#1-a] {};
  \node[left =0.6cm of tr#1-a, name=tr#1-b] {};
  \draw[->] (tr#1-a) edge node[above]{$#2$} (tr#1-b);
}
\newcommand{\tracedownarrow}[2]{
  \node[below =0.3cm of #1, name=tr#1-a] {};
  \node[below =0.45cm of tr#1-a, name=tr#1-b] {};
  \draw[->] (tr#1-a) edge node[right]{$#2$} (tr#1-b);
}

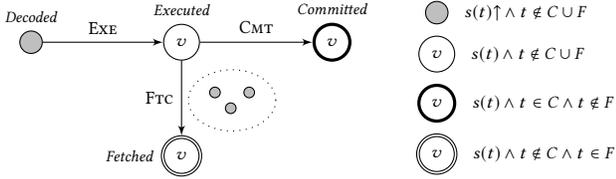
\begin{figure}
\begin{center}
\begin{tikzpicture}[node distance=2cm,auto,>=latex']

\statestyle                      

\runempty{1}{\textit{Decoded}}
\runexec[right of = 1]{2}{\textit{Executed}}{v}
\runcommit[right of = 2]{3}{$\textit{Committed}$}{v}
\rundecode[below =1cm of 2]{4}{left:$\textit{Fetched}$}{v}
\draw[->] (1) edge node{$\exe$} (2);
\draw[->] (2) edge node{$\cmt$} (3);
\draw[->] (2) edge node[left]{$\ftc$} (4);

\node[above right =0.6 cm and 0.2cm of 4, name=11, draw, circle,  inner sep=0.05cm, minimum width=0.03cm, fill=gray!50] {};
\node[right =0.3cm of 11, name=12, draw, circle,  inner sep=0.05cm, minimum width=0.03cm, fill=gray!50] {};
\node[below right =0.1cm and 0.1cm of 11, name=13, draw, circle,  inner sep=0.05cm, minimum width=0.03cm, fill=gray!50] {};

% \runempty[right =0.5cm of 11]{12}{}
% \runempty[below right =0.2cm and 0.2cm of 11]{13}{}
\node[draw,ellipse,dotted,fit = (11)(12)(13)] {};

\runempty[above right =0.1cm and 1.1cm of 3]{l1}{}
\node[right =0.1cm of l1, name=l2] {$\fundefined{\storage}{t} \land t \not \in \commits \cup \decodes$};
\runexec[below =0.15cm of l1]{l3}{}{v}
\node[right =0.1cm of l3, name=l4]{$\storage(t) \land t \not \in \commits \cup \decodes$};
\runcommit[below =0.15cm of l3]{l5}{}{v}
\node[right =0.1cm of l5, name=l6]{$\storage(t) \land t \in \commits \land t \not \in \decodes$};
\rundecode[below =0.15cm of l5]{l7}{}{v}
\node[right =0.1cm of l7, name=l8]{$\storage(t) \land t \not \in \commits \land t \in \decodes$};
\end{tikzpicture}
\end{center}
\caption{
  OoO semantics: Microinstruction lifecycle
}
\label{fig:ooo}
\end{figure}

% %
% \begin{figure}[t]
% \begin{center}
% \includegraphics[width=8.9cm]{InstructionLifeCycleOOO.pdf}
% \end{center}

% % \vspace*{-0.3cm}
% % \caption{OoO semantics: Microinstruction lifecycle}
% % \label{fig:ooo} 
% \end{figure}
\todo{changed picture}
Figure \ref{fig:ooo} represents the microinstruction lifecycle in the
OoO semantics.
For a given state $(\Inst, \storage, \commits, \decodes)$, a
microinstruction $\inst =\ass{c}{t}{o} \in \Inst$ can be in one of four different
states.
% and interact with the memory subsystem producing observations $l \in \Observations$. 
A microinstruction  in state \text{Decoded} (represented by a gray circle) has not been
executed, committed or fetched ($\fundefined{s}{t}$, $t \not \in
\commits$, $t \not \in \decodes$), and its guard is either true ($[c]s$)
or undefined ($\fundefined{[c]}{s}$). If the guard is false, i.e,
$\neg[c]s$, the instruction is considered as \text{Discarded} (not shown). 
A microinstruction is able to move to state \text{Executed} (represented by a simple circle whose
content is $\storage(t)$)
if its guard evaluates to true 
and all dependencies have been executed. 
% In moving between these two states, a memory load instruction may require loading a value from a memory address that has already been committed to the memory subsystem. 
% This yields the observation of the loaded memory address. As we will see, the internal operations and memory stores are first executed internally and do not interact with the memory subsystem. 
Subsequently, an \text{Executed} 
store microinstruction can either be committed to the memory subsystem
(\text{Committed}: $t \in \commits$, represented by a bold circle),
% thus yielding an observation of the store address (\text{Committed}),
or, if it is a program counter store, assign the program counter, causing a new ISA
instruction to be fetched and decoded (\text{Fetched}: $t \in \decodes$, represented by a double
circle).
% , yielding an observation of the program counter load
% address (\text{Fetched}). 
Accordingly, the transition of a program counter store to
state \text{Fetched} leads to the spawn of a collection of newly decoded microinstructions
(i.e., the translation of the subsequent ISA instruction) in state
\text{Decoded}.
The labels of the edges in the diagram correspond to the names of the
transition rules of Section~\ref{sec:ooo:semantics}.

\subsection{Semantics of Single Microinstructions}
The semantics is defined in two steps: we first define the 
semantics of single microinstructions, then  introduce the operational semantics
of MIL programs.
The semantics of a microinstruction $\den{\inst} : \conf \rightarrow (\Val
\times \Observations) \cup \{\perp\}$ returns either a value and an
observation, \changed{or $\perp$ if the microintruction cannot be executed}. 
% The semantics is undefined if the microinstruction cannot be executed in a state $\newstate = (\Inst, \storage, \commits, \decodes)$.

\noindent\textbf{(Internal operations)}
The semantics of internal operations is straightforward:
\begin{newnotation}
\[\den{\ass{c}{t}{e}}\newstate = \left \{
      \begin{array}{ll}
        ([e]\newstate, \lblempty) & \mbox{if } \fdefined{[e]}{\newstate} \\
        \perp & \mbox{otherwise}
      \end{array}
    \right .
  \]
\end{newnotation}
  \newcommand{\emptystore}{\emptyset}
An internal operation can be executed as soon as its
dependencies are available.
In Example~\ref{example:1}, the semantics of internal operation $t_1$ is defined for the
empty storage $\emptystore$, since it does not refer to any names. However, the semantics of $t_3$ is undefined in $\emptystore$, since
it depends on the value of $t_2$ that is not available in $\emptystore$.

\noindent\textbf{(Store operations)}
The semantics of store operations is defined as follows:
\begin{newnotation}
\[\den{\ass{c}{t}{\store{\restype}{t_a}{t_v}}}\newstate =
    \left \{
      \begin{array}{ll}
        (\den{t_v}\newstate, \lblempty) & \mbox{if } \fdefined{\den{t_v}}{\newstate} \land \fdefined{\den{t_a}}{\newstate}\\
        \perp & \mbox{otherwise}
      \end{array}
    \right .\]
\end{newnotation}
A resource update can be executed as soon
as both the address of the resource and the value are available. Observe that this rule models the \emph{internal} execution of a
resource update and not its commit to the memory subsystem. 
These internal updates are not observable by a programmer, therefore there is no restriction on their execution order.
As an example, the ISA program
\stylecode{$r_1$:= 0; $r_2$:= $r_1$; $r_1$:= 1} can be implemented by the
following microinstructions:
% \[
%   \begin{array}{l}
%     \asstrue{t_1}{r_1}; \quad 
%     \asstrue{t_2}{0};\quad
%     \asstrue{t_3}{\store{\Regs}{t_1}{t_2}};\quad \\
%     \asstrue{t'_1}{r_1}; \quad 
%     \asstrue{t'_2}{r_2};\quad
%     \asstrue{t'_3}{\load{\Regs}{t'_1}};\quad 
%     \asstrue{t'_4}{\store{\Regs}{t'_2}{t'_3}};\quad \\
%     \asstrue{t''_1}{r_1}; \quad 
%     \asstrue{t''_2}{1};\quad
%     \asstrue{t''_3}{\store{\Regs}{t''_1}{t''_2}};\quad \\
%     \asstrue{t'''_1}{r_1}; \quad 
%     \asstrue{t'''_2}{r_3};\quad
%     \asstrue{t'''_3}{\load{\Regs}{t'''_1}};\quad 
%     \asstrue{t'''_4}{\store{\Regs}{t'''_2}{t'''_3}}
% \end{array}
% \]
% \[
%   \begin{array}{l}
%     \asstrue{t_1}{\store{\Regs}{r_1}{0}};\quad \\
%     \asstrue{t'_1}{\load{\Regs}{r_1}};\quad 
%     \asstrue{t'_2}{\store{\Regs}{r_2}{t'_1}};\quad \\
%     \asstrue{t''_1}{\store{\Regs}{r_1}{1}};\quad \\
%     \asstrue{t'''_1}{\load{\Regs}{r_1}};\quad 
%     \asstrue{t'''_2}{\store{\Regs}{r_3}{t'''_1}}
% \end{array}
% \]
%
% \vspace*{-0.3cm}

% \begin{pomset}
\begin{center}
\begin{tikzpicture}[node distance=4cm,auto,>=latex']

  \pomsetstyle
  
\node[pblock,label={$t_{11}$},name=a1]{$\store{\Regs}{r_1}{0}$};                      

\node[pblock,label={[xshift=-.2cm]$t_{21}$},name=a2,  right  = 1cm of a1]{$\load{\Regs}{r_1}$};                      

\node[pblock,label={[xshift=-.2cm]$t_{22}$},name=a3, below = \pomsetnewline of a2 ]{$\store{\Regs}{r_2}{t_{21}}$};                      

\node[pblock,label={$t_{31}$},name=a4,  right  = 1cm of a2]{$\store{\Regs}{r_1}{1}$};

% \node[pblock,label={[xshift=-.2cm]$t_{41}$},name=a5, below  = 1cm of a2]{$\load{\Regs}{r_1}$};                      

% \node[pblock,label={$t_{42}$},name=a6, right of = a5]{$\store{\Regs}{r_3}{t_{41}}$};                      

\draw[->]  (a2) edge (a3) % (a5) edge (a6)
;

\end{tikzpicture}
\end{center}
% \end{pomset}
%
% \vspace*{-0.1cm}

The semantics of  $t_{11}$, i.e., $\store{\Regs}{r_1}{0}$,  and $t_{31}$, i.e., $\store{\Regs}{r_1}{1}$, is defined in $\emptystore$, and yields
$(0,\lblempty)$ and $(1,\lblempty)$, respectively. As we will see,
the operational semantics is in charge of ordering resource updates  to
preserve consistency and dependencies.

\newcommand{\strfor}{\textit{str-may}}
\newcommand{\strcert}{\textit{str-crt}}
\newcommand{\stract}{\textit{str-act}}

\noindent\textbf{(Load operations)}
While the semantics of  internal operations and store operations only depends on 
the execution of their operands, load operations may depend on past store operations. This requires identifying the previous resource update that determines 
the correct value to be loaded.  We use the following definitions to compute the set of store operations that may affect a load operation.

\begin{definition}
Consider a load operation $\ass{c}{t}{\load{\restype}{t_a}}$
\begin{itemize}
\item
    $\strfor(\newstate,t) = \{
    \ass{c'}{t'}{\store{\restype}{t'_a}{t'_v}} \in \newstate \mid
    t' < t \land
(\den{c'}{\newstate} \vee \fundefined{\den{c'}}{\newstate})
\land
(\den{t'_a}{\newstate} = \den{t_a}{\newstate} \vee
\fundefined{\newstate}{t'_a}{} \vee
\fundefined{\newstate}{t_a})
\}$
 is the set of stores
 that \emph{may} affect the load address of $t$ in state $\newstate$.
\item
   $\stract(\newstate, t) = \{
\ass{c'}{t'}{\store{\restype}{t'_a}{t'_v}} \in \strfor(\newstate, t) \mid\  \neg \exists \ass{c''}{t''}{\store{\restype}{t''_a}{t''_v}} \in \strfor(\newstate, t)
.\
t'' > t' \land
\den{c''}{\newstate} \land \den{t''_a}{\newstate} \in \{\den{t_a}{\newstate}, \den{t'_a}{\newstate}\}
\}$ 
 is the set of \emph{active} stores. 
\end{itemize}
\end{definition}
% \todo{RG: is it clear $\den{c''}{\newstate}$? and that
%   $\den{t''_a}{\newstate} \in \{\den{t_a}{\newstate},
%   \den{t'_a}{\newstate}\}$ means that $\fdefined{\newstate}{t_a}$?}
The stores that may affect the address of $t$ are the stores that: $(i)$ have not been discarded, namely they can be executed ($[c]\newstate$) or may be executed ($ \fundefined{c}{\newstate}$), and $(ii)$ the store address in $t'_a$ may result 
in the same address as the load address in $t_a$, namely either they both evaluate to the same address ($\newstate(t'_a) = \newstate(t_a)$), or the store address is unknown ($\fundefined{\newstate}{t'_a}$), or the load
address is unknown ($\fundefined{\newstate}{t_a}$).

The active stores of $t$ are  the stores that may affect
the load address computed by $t_a$, and, there  are no subsequent stores $t''$ on the same address as the load address in $t_a$, or on the same address as the store address in $t'_a$.
This set determines the ``minimal'' set of store operations that may
affect a load operation from address $t_a$.

The definitions of $\stract(\newstate, t)$ and $\strfor(\newstate,t)$ are
   naturally extended to stores
   $\ass{c}{t}{\store{\restype}{t_a}{t_v}}$.
   These definitions allow us to define the semantics of loads:
   \begin{newnotation}
\[
  \begin{array}{l}
    \den{\ass{c}{t}{\load{\restype}{t_a}}}\newstate  =\\
    \qquad
    \left \{
      \begin{array}{ll}
        (\den{t_s}{\newstate}, l) & \mbox{if } \bn(\stract(\newstate, t)) =
                                                    \{t_s\} \land \\
        & \fdefined{\newstate}{t_a} \land 
          \fdefined{\newstate}{t_s}
        \\
        \perp & \mbox{otherwise}
      \end{array}
                \right .
                \\
    \mbox{where }l = \\
    \qquad
                \left \{
                \begin{array}{ll}
                  \dl{\newstate(t_a)} & \mbox {if } t_s\in\commits \wedge \restype = \Locs \\
                  \lblempty & \mbox{otherwise}
                \end{array}
                              \right .
      \end{array}
    \]
    \end{newnotation}
A load operation can be executed if the set of active stores consists of a singleton set with bound name  $t_s$, i.e., the store causing $t_a$ to be assigned is uniquely determined, 
and both the address $t_a$ of the load  and the address $t_s$ of the store 
can be evaluated in state $\newstate$.

Note that the semantics allows forwarding the result of a store
to another microinstruction before it is committed to 
memory. In fact, if the
active store is yet to be committed to
memory, i.e., $t_s\notin\commits$, it is possible for the
store to forward its data to the load, without causing an interaction
with the memory subsystem (i.e., $l=\lblempty$).
Otherwise, the load yields an observation of
a data load from address $\newstate(t_a)$.
%
%
    
%     \todo{RG: explain data forward}
% \todo{RG: Think if we should split the part where we define the
%   semantics and the observations}
% \[
%   \begin{array}{l}
%     \asstrue{t_1}{1};\quad \asstrue{t_2}{\store{\Locs}{t_1}{1}};\quad \\
%     \asstrue{t'_1}{0};\quad \asstrue{t'_2}{\store{\Locs}{t'_1}{2}};\quad \\
%     \asstrue{t''_1}{1};\quad \asstrue{t''_2}{\store{\Locs}{t''_1}{3}};\quad \\
%     \asstrue{t'''_1}{1};\quad \asstrue{t'''_2}{\load{\Locs}{t'''_1}}
% \end{array}
% \]

% \vspace*{-0.3cm}

\begin{pomset}
  \begin{center}
\begin{tikzpicture}[node distance=4cm,auto,>=latex']

\pomsetstyle
  
% \tikzset{res/.style={ellipse,draw,minimum height=0.5cm,minimum width=0.8cm}}

\node[pblock,label={$t_{11}$},name=a1]{$1$};

\node[pblock,label={$t_{12}$},name=a2, right of = a1]{$\store{\Locs}{t_{11}}{1}$};

\node[pblock,label={[xshift=-.2cm]$t_{21}$},name=a3,  below  = \pomsetnewline of a1]{$0$};                      

\node[pblock,label={[xshift=-.2cm]$t_{22}$},name=a4, right of=a3 ]{$\store{\Locs}{t_{21}}{2}$};                      

\node[pblock,label={[xshift=-.2cm]$t_{31}$},name=a5,  below  = \pomsetnewline of a3]{$1$};                      

\node[pblock,label={[xshift=-.2cm]$t_{32}$},name=a6, right of=a5 ]{$\store{\Locs}{t_{31}}{3}$}; 

\node[pblock,label={[xshift=-.2cm]$t_{41}$},name=a7,  below  = \pomsetnewline of a5]{$1$};                      

\node[pblock,label={[yshift=-0.4cm, xshift=1.0cm]$t_{42}$},name=a8, right of=a7 ]{$\load{\Locs}{t_{41}}$}; 

\draw[->]  (a1) edge (a2) (a3) edge (a4) (a5) edge (a6) (a7) edge (a8);

% \node[draw,dashed,ellipse,fit = (a2)(a6), above = .3cm of a8] {};
\node[draw,rectangle,minimum width=3.2cm,minimum height=3cm,fit = (a2)(a6)] {};
\node[draw,dashed,minimum width=2.5cm,minimum height=1.8cm,rectangle,fit = (a4)(a6), yshift=0.07cm] {};
\node[draw,dotted,rectangle,minimum width=1.5cm,minimum height=.85cm,fit = (a6), yshift=0.15cm] {};

\end{tikzpicture}
  \end{center}
%  \vspace*{-.3cm}
 \caption{\stylecode{*(1):=1; *(0):=2; *(1):=3; *(1);}}
   \label{example:active-store}
\end{pomset}

% \vspace*{-0.3cm}

Example~\ref{example:active-store} illustrates the semantics of loads.
The program writes $1$ into address $1$, then writes $2$ in $0$,
overwrites address $1$ with $3$, and finally loads from address $1$.
We use active stores to dynamically compute the dependencies of
load operations. Let $\newstate_0$
be a state containing microinstructions as in the example, and having empty storage.
For this state, the active store for the load  $t_{42}$, i.e., $\stract(\newstate_0, t_{42})$, consists of all stores of the example, as depicted by the solid rectangle. 
Since none of microinstructions that compute the addresses  have 
been executed, the address $t_{41}$ of the load is unknown, hence, we cannot exclude any store from affecting the
address that will be used by $t_{42}$. Therefore, the load 
cannot be executed in $\newstate_0$. This set of active stores will
shrink during execution as more information becomes available through the storage.  

Let the storage of $\newstate_1$ be $\{t_{11} \mapsto 1; t_{31}
\mapsto 1\}$, i.e., the result of executing $t_{11}$ and $t_{31}$. The active stores $\stract(\newstate_1, t_{42})$
consist of microinstructions depicted by the dashed rectangle. Observe that the store $t_{12}$ is in $\strfor(\newstate_1, t_{42})$,
however there exists a subsequent store, namely $t_{32}$, that
overwrites the effects of $t_{12}$ on the same memory address.
Therefore, $t_{12}$ is no longer an active store and it can safely be discarded.

Let the storage of $\newstate_2$ be $\{t_{11} \mapsto 1; t_{31} \mapsto 1, t_{41} \mapsto
1\}$, i.e., the result of executing  $t_{11}, t_{31}$ and $t_{41}$. The active stores
$\stract(\newstate_2, t_{42})$ now consist of the singleton set $\{t_{32}\}$ as depicted by the dotted rectangle. This is because the address $t_{41}$ 
of the load can be computed in state $\newstate_2$.  
% then the target of the load has been identified and $n(\stract(\Inst, \storage', \Locs, t'''_1)) =
% \{t''_2\}$. 
Although $t_{22}$ is still in $\strfor(\newstate_2, t_{42})$, there is a subsequent store, $t_{32}$, that will certainly affect
the address of the load. Therefore, $t_{22}$ is no longer
an active store.

Finally, let the storage of $\newstate_3$ be $\{t_{11} \mapsto 1; t_{31} \mapsto 1, t_{41} \mapsto
1, t_{32} \mapsto 3\}$, i.e., the result of executing
$t_{11}, t_{31}$, $t_{41}$, and  $t_{32}$.
Once $\stract$ has been reduced to a singleton set ($\{t_{32}\}$),
and the active-store has been executed ($\fdefined{\newstate_3}{t_{32}}$),
the semantics of the load is defined. This yields
the same value as  the store  in $t_{32}$. If the store  $t_{32}$ has been committed to memory, the execution of the load yields the observation $\dl{1}$.
% Notice that in the previous example the semantics of $t'''_2$ is
% defined even if $\fundefined{\storage}{t'_1}$.

% \todo{Example with Condition}
% \todo{Example with Observations}

% \todo{Partial order}

\subsection{Operational Semantics}
\label{sec:ooo:semantics}
We can now define the microinstructions' transition relation  $\newstate
\doublestep{l}{}\newstate'$, implementing the lifecycle of Section~\ref{sec:ooo:cycle}.

\noindent \textbf{(Execute)} A microinstruction can be executed if it hasn't already been executed
($\fundefined{\storage}{t}$),  the guard holds ($\den{c}{\storage}$), and the
dependencies have been resolved ($\fdefined{\den{\inst}}{\storage}$): 
\begin{newnotation}
  \[
    \begin{array}{cc}
(\exe)
&
\begin{array}{c}
  \inst =  \ass{c}{t}{o}  \in \Inst \qquad \fundefined{\storage}{t} \qquad
\den{c}{\storage} \qquad \den{\inst}\newstate = (v, l)
\\ \hline
\newstate = (\Inst, \storage, \commits, \decodes) \doublestep{l}{} (\Inst,
\subst{\storage}{t}{v}, \commits, \decodes)
\end{array}\\
\end{array}
\]
\end{newnotation}
Observe that if $\inst$ is a load from the memory subsystem, the rule
can produce the observation of a data load. 

\noindent \textbf{(Commit)} Once a memory store has been executed ($\fdefined{\storage}{t}$), it can be
committed to memory, yielding an observation. The rule  ensures that stores can only be committed once ($t \not \in
\commits$) and that stores on the same
address are committed in  program order, by checking that all past
stores are in $\commits$, i.e., $\bn(\strfor(\newstate, t))
\subseteq \commits$.
\begin{newnotation}
\[
    \begin{array}{cc}
(\cmt)
&
\begin{array}{c}
  \ass{c}{t}{\store{\Locs}{t_a}{t_v}} \in \Inst \qquad
  \fdefined{\storage}{t} \qquad t \not \in \commits \\ 
  \bn(\strfor(\newstate, t)) \subseteq \commits \\ \hline
\newstate = (\Inst, \storage, \commits, \decodes) \doublestep{\ds{s(t_a)}{}}{×} (\Inst,
  \storage, \commits \cup \{t\}, \decodes)
\end{array}
\end{array}
\]
\end{newnotation}
In summary, stores can be executed internally in any order, however, they are
committed in order.
In  Example~\ref{example:active-store}, if $\newstate$ has storage $\storage =
\{t_{11} \mapsto 1; t_{12} \mapsto 1;
t_{31} \mapsto 1; t_{32} \mapsto 3\}$ and commits $\commits = \emptyset$, then only $t_{12}$ can be committed,
since $t_{22}$ has not been executed and $\bn(\strfor(\newstate, t_{32})) \not \subseteq \commits$.
Notice that $t_{22}$ is in the may stores since its address has not been
resolved. Therefore, $t_{32}$ can be committed only after $t_{12}$
has been committed and $t_{21}$ has been executed. However, the commit
of $t_{32}$ does not have to wait for the commit or execution of
$t_{22}$. In fact, if
$\newstate'$ has storage $\storage' = \storage \cup \{t_{21} \mapsto 0\}$
then $\bn(\strfor(\newstate', t_{32})) = \{t_{12}\}$. That is, the order of store commits is only enforced
per location, as expected.

\noindent \textbf{(Fetch-Decode)} A program counter store enables the fetching and decoding (i.e., translating) of
a new ISA instruction. The rule for fetching is similar to the rule for
commit, since instructions are fetched in order. The set
$\decodes$ keeps track of program counter updates whose resulting
instruction has been fetched and ensures that 
instructions are not fetched or decoded twice. Fetching the result of
a program counter update yields
the observation of an instruction load from address $a$.
%OOOFETCH
\begin{newnotation}
\[
    \begin{array}{cc}
(\ftc)
&
  \begin{array}{c} 
    \ass{c}{t}{\store{\Pc}{}{t_v}} \in \Inst \qquad
    \storage(t) = a \qquad
    t \not \in \decodes
    \\
    \bn(\strfor(\newstate, t)) \subseteq \decodes
    \\ \hline
\newstate = (\Inst, \storage, \commits, \decodes) \doublestep{\il{a}}{} (\Inst
    \cup \Inst', \storage, \commits, \decodes \cup
    \{t\})
    \\
    \mbox{where }\Inst' = \translate(a, max(\Inst))
  \end{array}
      \end{array}
\]
\end{newnotation}
Write $max(\Inst)$ for the largest name $t$ in  $\Inst$ and
$\translate(a, max(\Inst))$ for the translation of the instruction at address $a$, ensuring that the names of the
microinstructions thus generated are greater than $max(\Inst)$.

\noindent \textbf{(Remarks on OoO semantics)}
\mathchardef\mhyphen="2D
\newcommand{\stepparam}{\mathit{step\mhyphen param}}
The three rules of the semantics reflect the atomicity of
MIL microinstructions: A transition can affect a single microinstruction by
either assigning a value to the storage, extending the set of commits,
or extending the set of fetches.
In the following, we use $\stepparam(\newstate, \newstate') = (\alpha, t)$
to identify the rule $\alpha \in \{\exe, \cmt(a,v), \ftc(\Inst)\}$
that enables $\newstate \doublestep{}{} \newstate'$ and the name $t$ of the
affected microinstruction. In case of commits we also extract the
modified address $a$ and the saved value $v$, in case of fetches we
extract the newly decoded microinstructions $\Inst$.
The semantics preserves several invariants:
Let $(\Inst, \storage, \commits, \decodes) = \newstate$ 
if $\alpha = \exe$ then $\fundefined{\newstate}{t}$;
if $\alpha = \cmt(a,v)$ then $t \not \in \commits$ and free names
(i.e., address and value) of the corresponding microinstruction are defined
in $\storage$;
if $\alpha = \ftc(\Inst')$ then $t \not \in \decodes$;
all state components are monotonic.

\noindent \textbf{(Initial state)}
% Notice that there is no state component that
% explicitly keeps track of the state of memory and registers. In fact, loads are
% resolved by using the value of the last preceding store that modified
% the same resource. For this reason, 
In order to bootstrap the computation, we assume that
the set of microinstructions of the initial state contains
one store for each memory address and register, the value
of these stores is the initial value of the corresponding resource,
and that these stores are in the storage and commits of the
initial state.

% \noindent \textbf{(In-order semantics and memory consistency)}
% We define the in-order (i.e.,  sequential) transition relation $\singlestep{}{}$ by restricting
% the OoO semantics and enforcing the execution of microinstructions in program order.
% We then prove that our OoO semantics is \emph{memory consistent} with respect to the
% in-order semantics, namely writes to the same memory location are seen in the
% order. Full details are reported in Appendix~\ref{sec:inorder}.

%The proof of the Theorem is in Appendix~\ref{appendix:proof:ooo:cho}.
\newcommand{\spec}{\delta}
\newcommand{\guesses}{P}
\newcommand{\newspec}[4]{EXT(#1, #2, #3, #4)}
\newcommand{\specPars}[5]{SPAS(#1, #2, #3, #4, #5)}

\section{Speculative Semantics}\label{sec:spec}

We now extend the OoO semantics to support speculation. 
We add two new components to the states: a set of names $\guesses \subseteq n(\Inst)$ 
whose values have been predicted as result of speculation, and a partial
 function $\spec:\ANames \rightharpoonup S$ 
recording, for each name $t$, the storage
dependencies at time of execution of the microinstruction identified by $t$. 
\changed{Therefore, a state in the speculative semantics is a tuple $h = (\Inst, \storage, \commits, \decodes, \spec, \guesses)$ where $\sigma = (\Inst, \storage, \commits, \decodes)$
is the corresponding state in the OoO semantics. Abusing notation we write $(\sigma, \spec, \guesses)$ to denote a state in the specutative semantics, 
and  use $\specstate, \specstate_1, \dots$  to range over these states.} 
Informally, $\spec(t)$ is a \emph{snapshot} of the storage that affects the value of $t$ due to speculative predictions.
As we will see, these \changed{snapshots} are needed in order to match speculative  states with non-speculative  states, and to
restore the state of the execution in case of misspeculation.

% \todo[inline]{This needs to be explained better, perhaps.}

% \todo{Move deps here}.
\newcommand{\deps}{\mathit{deps}}
\newcommand{\depsI}{\mathit{depsX}}
\newcommand{\asn}{\mathit{asn}}
\newcommand{\sources}{\mathit{srcs}}

\subsection{Managing Microinstruction Dependencies}
The execution of a microinstruction may depend on local (intra-)
instruction dependencies, the names appearing freely in a
microinstruction, as well as cross (inter-) instruction
dependencies, caused by memory or register loads.
\begin{definition}
Let $\ass{c}{t}{o} \in \newstate$.
The dependencies of $t$ in $\newstate$ are $$\deps(t,\newstate) =
\fn(\ass{c}{t}{o})\cup\depsI(t,\newstate)$$
where the cross-instruction
dependencies are defined as
\[
\depsI(t,\newstate) = \left\{\begin{array}{ll}
\emptyset, & \mbox{if }t\mbox{ is not a load} \\
\asn(\newstate, t)\cup\sources(\newstate, t), & \mbox{otherwise}.\
\end{array}
\right.
\]
Cross-dependencies are nonempty only for loads
and consist of the names of active stores affecting $t$ in state $\newstate$,
$\asn(\newstate, t) = \bn(\stract(\newstate,t))$, plus, 
the names of stores potentially intervening between the earliest
active store and $t$ (we call $\sources(\newstate, t)$ the \emph{potential sources of}
$t$), which are defined as 
\[
\sources(\newstate, t)  =
\begin{array}{rcl}
\bigcup \{ \fn(c'), \{t'_a\} & | &
                             \mathit{min}(\asn(\newstate, t))\leq t' <
                             t, \\
    & &                    \ass{c'}{t'}{\store{\restype}{t'_a}{t'_v}}
                                    \in \newstate \}
                                    \end{array}
\]
\end{definition}

% \todo{Add example}
Intuitively, a load depends on the execution of active stores
that may affect the address of that load. Moreover, the fact that a
name $t^\ast$ is in the set of active stores $asn$ depends on the addresses and guards of all
stores between $t^\ast$ and $t$. This is because their values will determine the actual 
store that affects the address of the load  $t$. Thanks to our ordering relation $<$ between names, we can use  
the minimum name $min(asn)$ in  $asn$ to compute all stores between any name in $asn$ and $t$, thus
extracting the free names of their guards and addresses. 
% Observe that the value parameter $t'_v$ of these stores is irrelevant to

The following figure illustrates  dependencies of the load  from Example~\ref{example:active-store}:
\begin{center}
\begin{tikzpicture}[node distance=4cm,auto,>=latex']

  \pomsetstyle
  
% \tikzset{res/.style={ellipse,draw,minimum height=0.5cm,minimum width=0.8cm}}

\node[pblock,label={$t_{11}$},name=a1]{$1$};

\node[pblock,label={[xshift=1.2cm]$t_{12}$},name=a2, right of = a1]{$\store{\Locs}{t_{11}}{1}$};

\node[pblock,label={[xshift=-.2cm]$t_{21}$},name=a3,  below  = \pomsetnewline of a1]{$0$};                      

\node[pblock,label={[xshift=1.2cm]$t_{22}$},name=a4, right of=a3 ]{$\store{\Locs}{t_{21}}{2}$};                      

\node[pblock,label={[xshift=-.2cm]$t_{31}$},name=a5,  below  = \pomsetnewline of a3]{$1$};                      

\node[pblock,label={[xshift=1.2cm]$t_{32}$},name=a6, right of=a5 ]{$\store{\Locs}{t_{31}}{3}$}; 

\node[pblock,label={[xshift=-.4cm, yshift=-.1cm]$t_{41}$},name=a7,  below  = \pomsetnewline of a5]{$1$};                      

\node[pblock,label={[xshift=1.2cm]$t_{42}$},name=a8, right of=a7 ]{$\load{\Locs}{t_{41}}$}; 

\draw[->]  (a1) edge (a2) (a3) edge (a4) (a5) edge (a6) (a7) edge (a8);

% \node[draw,dashed,ellipse,fit = (a2)(a6), above = .3cm of a8] {};
\node[draw,ellipse,minimum width=2.5cm,fit = (a2)] {};
\node[draw,ellipse,minimum width=2.5cm,fit = (a6)] {};
\node[draw,dashed,rectangle,minimum width=2cm,minimum height=2.8cm,fit = (a1)(a3)(a5)] {};
% \node[draw,dotted,circle,fit = (a8)] {};
\node[draw,dotted,ellipse,minimum width=2cm,fit = (a7)] {};
\end{tikzpicture}
\end{center}
If $\storage = \{t_{11} \mapsto 1; t_{21} \mapsto 0; t_{41} \mapsto 1\}$
% 
% Let $\storage(t_1) =
% 1$, $\storage(t'_1) = 0$, and $\storage(t'''_1) = 1$. 
then the set of active stores names $asn$ for $t_{42}$ is $\bn(\stract(\newstate,t_{42})) = \{t_{12}, t_{32}\}$,
%\Inst_{<t_{42}}, \storage, \restype, t_{41}))) = \{t_{12}, t_{32}\}$, 
as depicted by the solid ellipses. In particular, $min(asn)=t_{12}$. We consider all stores between $t_{12}$ and the load $t_{42}$ (i.e., $t_{12}, t_{22}$, and $t_{32}$),
and add to the set of cross-dependencies the names in their guards and addresses, namely $t_{11}, t_{21}$ and $t_{31}$, as depicted by the dashed rectangle.
Observe that $t_{21}$ is in the set of cross-dependencies, although  $t_{22}$ is not an active store. This is because membership of $t_{12}$  in the 
active stores' set depends on the address $t_{21}$ being set to $0$, i.e., $\storage(t_{21}) = 0$. Therefore, the set of cross-dependencies $\depsI(t_{42},\newstate) = \{t_{12}, t_{32}, t_{11}, t_{21}, t_{31}\}$. 
Finally, the local dependencies of the load $t_{42}$ consist of its parameter $t_{41}$ (the dotted ellipsis), such
that $\deps(t_{42},\newstate) = \{t_{12}, t_{32}, t_{11}, t_{21}, t_{31},t_{42}\}$.

We verify that the dependencies $\deps$ are computed correctly.

  \begin{definition}[$t$-equivalence]
    Let $\newstate_1$ and $\newstate_2$ be states with storage
    $\storage_1$ and $\storage_2$, and
    $\inst_1$ and $\inst_2$ be the microinstructions identified by $t$.
    Then  $\newstate_1$ and $\newstate_2$ are $t$-equivalent,
    $\newstate_1 \sim_t \newstate_2$,  if $\inst_1 = \inst_2$,
      $\proj{\storage_1}{\mathit{fn}(\inst_1)} =
      \proj{\storage_2}{\mathit{fn}(\inst_2)}$,
      and if $t$'s microinstruction is a load  with dependencies
  $\mbox{ } T_{i} = \deps(t, \newstate_i)$ and active stores $\mbox{ }$
  $SA_i = \stract(\proj{\newstate_i}{T_i}, t)$ for $i \in \{1,2\}$
  then 
  $SA_1 = SA_2$ and
  $
  \proj{\storage_1}{SA_1} =
  \proj{\storage_2}{SA_2}.
  $
\end{definition}
Intuitively, $t$-equivalence states that, if the microinstruction named with $t$  depends (in the sense of $\deps$) in both states on the same active stores and these stores assign the same value to $t$, 
then the microinstruction has the same dependencies, it is enabled, and it produces the same result in both states. 

We use three possible states of the example above to illustrate
$t$-equivalence: $\newstate_1$ is a state reachable in the OoO
semantics, $\newstate_2$ and $\newstate_3$ may result from misspeculating the value of $t_{31}$ to be
$0$ and $5$ respectively.
\begin{center}
\begin{tikzpicture}[node distance=1cm,auto,>=latex']

\statestyle                      

\runexecplain                   {1-11}{left:$t_{11}$}{1}
\runexecplain [right of = 1-11] {1-12}{left:$t_{12}$}{1}
\runexecplain [below = 0.1cm of 1-11] {1-21}{left:$t_{21}$}{0}
\runexecplain [right of = 1-21] {1-22}{left:$t_{22}$}{2}
\runexecplain[below = 0.1cm of 1-21] {1-31}{left:$t_{31}$}{1}
\runexecplain[right of = 1-31] {1-32}{left:$t_{32}$}{3}
\runexecplain [below = 0.1cm of 1-31] {1-41}{left:$t_{41}$}{1}
\runemptyplain[right of = 1-41] {1-42}{left:$t_{42}$}

\node    [below right =0.1cm and 0.15cm of 1-41, name=l1] {$\newstate_1$};
\node    [below right =0.0cm and 0.15cm of 1-22, name=leq1] {$\not \sim_{t_{42}}$};

\runexecplain  [right =1.5cm of 1-12]{2-11}{left:$t_{11}$}{1}
\runexecplain [right of = 2-11] {2-12}{left:$t_{12}$}{1}
\runexecplain [below = 0.1cm of 2-11] {2-21}{left:$t_{21}$}{0}
\runexecplain [right of = 2-21] {2-22}{left:$t_{22}$}{2}
\runexecplain[below = 0.1cm of 2-21] {2-31}{left:$t_{31}$}{0}
\runexecplain[right of = 2-31] {2-32}{left:$t_{32}$}{3}
\runexecplain [below = 0.1cm of 2-31] {2-41}{left:$t_{41}$}{1}
\runemptyplain[right of = 2-41] {2-42}{left:$t_{42}$}

\node    [below right =0.1cm and 0.15cm of 2-41, name=l2] {$\newstate_2$};
\node    [below right =0.0cm and 0.15cm of 2-22, name=leq2] {$\sim_{t_{42}}$};

\runexecplain  [right =1.5cm of 2-12]{3-11}{left:$t_{11}$}{1}
\runexecplain [right of = 3-11] {3-12}{left:$t_{12}$}{1}
\runexecplain [below = 0.1cm of 3-11] {3-21}{left:$t_{21}$}{0}
\runexecplain [right of = 3-21] {3-22}{left:$t_{22}$}{2}
\runexecplain[below = 0.1cm of 3-21] {3-31}{left:$t_{31}$}{5}
\runexecplain[right of = 3-31] {3-32}{left:$t_{32}$}{3}
\runexecplain [below = 0.1cm of 3-31] {3-41}{left:$t_{41}$}{1}
\runemptyplain[right of = 3-41] {3-42}{left:$t_{42}$}

\node    [below right =0.1cm and 0.15cm of 3-41, name=l2] {$\newstate_3$};

\end{tikzpicture}
\end{center}

The states $\newstate_1$ and $\newstate_2$ are not
$t_{42}$-equivalent.
In particular, 
$T_{1} = \deps(t_{42}, \newstate_1) = \{t_{31}, t_{32}\}$ (notice that
$t_{12}$ and $t_{22}$ are not in the dependencies because by we know
that $t_{31}\mapsto 1$ and $t_{41} \mapsto 1$) ,
$T_{2} = \deps(t_{42}, \newstate_2) = \{t_{11}, t_{21}, t_{31},
t_{12}\}$.
Notice that $\proj{\newstate_1}{T_1}$ and $\proj{\newstate_2}{T_2}$
contain all the information needed to evaluate the semantics of
$t_{42}$ in $\newstate_1$ and $\newstate_2$ respectively.
In this case $SA_1 = \stract(\proj{\newstate_1}{T_1}, t_{42}) = \{t_{32}\}$,
and $SA_2 = \stract(\proj{\newstate_2}{T_2}, t_{42}) = \{t_{12}\}$ hence
$SA_1 \neq SA_2$: the two states lead the load $t_{42}$ to take the
result produced by two different memory stores.

The states $\newstate_2$ and $\newstate_3$ are 
$t_{42}$-equivalent.
In fact, 
$T_{3} = \deps(t_{42}, \newstate_3) = \{t_{11}, t_{21}, t_{31},
t_{12}\}$ and $SA_3 = \stract(\proj{\newstate_3}{T_3}, t_{42}) =
\{t_{12}\}$.
Therefore, $SA_2 = SA_3$ and   $
  \proj{\storage_2}{SA_2} =
  \proj{\storage_3}{SA_3}
  $: The two states lead the load $t_{42}$ to take the
result produced by the same memory stores.

\begin{lemma}
\label{lem:deps-eq}
  If $\newstate_1  \sim_t  \newstate_2$ and $t$'s microinstruction  
  in $\newstate_1$ is $\instr = \ass{c}{t}{o}$, then
  $\deps(t, \newstate_1) = \deps(t, \newstate_2)$,
  $\den{c}\newstate_1 = \den{c}\newstate_2$,
  and if $\den{\inst}\newstate_1 = (v_1, l_1)$ and $\den{\inst}\newstate_2 =
  (v_2, l_2)$ then $v_1 = v_2$. 
\end{lemma}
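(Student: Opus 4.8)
The plan is to argue by cases on the shape of $t$'s microinstruction $\iota = \ass{c}{t}{o}$, using that $\newstate_1\sim_t\newstate_2$ forces $\iota_1=\iota_2=\iota$ and $\proj{\storage_1}{\fn(\iota)} = \proj{\storage_2}{\fn(\iota)}$. The guard claim $\den{c}\newstate_1 = \den{c}\newstate_2$ is then immediate in every case: since $\fn(c)\subseteq\fn(\iota)$ and the two storages agree on $\fn(\iota)$, the expression $c$ has the same definedness and the same value in both states. When $o$ is not a load, $\depsI(t,\newstate_i)=\emptyset$, so $\deps(t,\newstate_i)=\fn(\iota)$ in both states and $\deps(t,\newstate_1)=\deps(t,\newstate_2)$ follows from $\iota_1=\iota_2$; likewise the value is $[e]\newstate_i$ with $\fn(e)\subseteq\fn(\iota)$ for an internal operation, and $\den{t_v}\newstate_i$ with $t_v\in\fn(\iota)$ for a store, so agreement on $\fn(\iota)$ gives $v_1=v_2$.

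The real content is the load case $o=\load{\restype}{t_a}$, and I would first isolate and prove the single-state claim
\[
\stract(\proj{\newstate}{\deps(t,\newstate)}, t) = \stract(\newstate, t). \qquad (\star)
\]
Intuitively, $(\star)$ says that $\deps(t,\newstate)=\fn(\iota)\cup\asn(\newstate,t)\cup\sources(\newstate,t)$ is a \emph{sufficient support}: restricting the storage to it does not disturb the active-store computation. Granting $(\star)$, the lemma follows by combining it with the hypotheses of $\sim_t$. Applying $(\star)$ in each state with $T_i=\deps(t,\newstate_i)$ and $SA_i=\stract(\proj{\newstate_i}{T_i},t)$ yields $\stract(\newstate_i,t)=SA_i$, hence $\asn(\newstate_i,t)=\bn(\stract(\newstate_i,t))=\bn(SA_i)$. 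The definition of $\sim_t$ gives $SA_1=SA_2$, so $\asn(\newstate_1,t)=\asn(\newstate_2,t)$ and in particular the two states share $\mathit{min}(\asn(\newstate,t))$. Since the matched states carry the same microinstructions on the range $[\mathit{min}(\asn(\newstate,t)),t)$, the stores quantified in $\sources$ are literally the same in both, so their guards' free names and address names coincide and $\sources(\newstate_1,t)=\sources(\newstate_2,t)$; together with $\fn(\iota_1)=\fn(\iota_2)$ this gives $\deps(t,\newstate_1)=\deps(t,\newstate_2)$. For the value, the hypothesis $\den{\iota}\newstate_i=(v_i,l_i)$ forces $\bn(\stract(\newstate_i,t))$ to be a singleton $\{t_s\}$; by $(\star)$ and $SA_1=SA_2$ the name $t_s$ is the same in both, and the clause $\proj{\storage_1}{SA_1}=\proj{\storage_2}{SA_2}$ of $\sim_t$ gives $\storage_1(t_s)=\storage_2(t_s)$, i.e. $v_1=\den{t_s}\newstate_1=\den{t_s}\newstate_2=v_2$.

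The main obstacle is $(\star)$, and specifically the justification that cutting $\sources$ off at $\mathit{min}(\asn(\newstate,t))$ loses nothing. Restricting the storage to $\deps(t,\newstate)$ can only erase values consulted by stores whose names lie below $\mathit{min}(\asn(\newstate,t))$; such a store may thereby have its guard or address turn undefined and so re-enter $\strfor$ under the restricted storage. I would show that any such store is nonetheless still overwritten, and hence still absent from $\stract$: its overwriter can be taken to be a retained store $\hat t$ with $\mathit{min}(\asn(\newstate,t))\le \hat t < t$ whose guard is definitely true and whose address equals the load address, and because $\hat t$'s guard names and address lie in $\sources(\newstate,t)\subseteq\deps(t,\newstate)$, the overwriting condition is evaluated identically in $\newstate$ and in $\proj{\newstate}{\deps(t,\newstate)}$. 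Conversely no retained active store is lost, since the guards and addresses deciding membership in $\stract$ for names in $[\mathit{min}(\asn(\newstate,t)),t)$ are exactly what $\sources$ preserves.

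The delicate sub-cases are those where the load address $t_a$ or the guard of $\mathit{min}(\asn(\newstate,t))$ is undefined in $\newstate$. I would dispatch them by an argument along the total order $<$ on names: whenever $\mathit{min}(\asn(\newstate,t))$ has an undefined guard, every earlier matching store with a true guard is itself dominated by a later true-guard store lying in $[\mathit{min}(\asn(\newstate,t)),t)$ (otherwise that earlier store would already be active, contradicting minimality of $\mathit{min}(\asn(\newstate,t))$), and such dominating stores are precisely the ones $\sources$ retains. I note that this instruction-set agreement on $[\mathit{min}(\asn(\newstate,t)),t)$ is needed only for the dependency equality $\deps(t,\newstate_1)=\deps(t,\newstate_2)$, and is part of the setting in which $\sim_t$ is applied (matched states differing only in speculated storage values); the guard and value equalities use only $\fn(\iota)$-agreement, $(\star)$, and the two active-store clauses of $\sim_t$.
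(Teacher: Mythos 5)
Your proposal is correct and follows essentially the same route as the paper's own proof: the non-load cases and the guard are dispatched by agreement on $\fn(\iota)$, and the load case reduces to exactly the support claim $(\star)$ that $\stract(\proj{\newstate}{\deps(t,\newstate)},t)=\stract(\newstate,t)$, from which the three conclusions follow via the $SA_1=SA_2$ and $\proj{\storage_1}{SA_1}=\proj{\storage_2}{SA_2}$ clauses of $\sim_t$. If anything you are more complete than the paper, which only argues the containment $\stract(\newstate,t)\subseteq\stract(\proj{\newstate}{T},t)$ and leaves the converse direction and the derivation of $\deps(t,\newstate_1)=\deps(t,\newstate_2)$ implicit; your observation that the latter needs the two states to carry the same stores on $[\mathit{min}(\asn(\newstate,t)),t)$ is consistent with how the lemma is actually applied (states sharing the same $\Inst$).
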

\begin{proof}
See Appendix~\ref{proof:teq}.
\end{proof}

\subsection{Microinstruction Lifecycle}
% \begin{figure}[t]
% \begin{center}
% \includegraphics[width=8.9cm]{InstructionLifeCycleSpec.pdf}
% \end{center}
% \caption{Speculative semantics: Microinstruction lifecycle}
% \label{fig:speculation} 
% \end{figure}

\begin{figure}
\begin{center}
\begin{tikzpicture}[node distance=2.2cm,auto,>=latex']

\statestyle                      

\runempty{dec}{Decoded}{}{\textit{Decoded}}
\runpredplain[below =1.15cm of dec]{pred}{below:$\textit{Predicted}$}{v}
\runspec[right of = dec]{spec}{\textit{Speculated}}{v}
\runexec[right of = spec]{exec}{\textit{Retired}}{v}
\rundecode[below =1cm of exec]{fetch}{below:$\textit{Fetched}$}{v}
\runspecdecodeplain[below =1cm of spec]{spec-fetch}{below:$\textit{Speculatively fetched}$}{v}
\runcommit[right of = exec]{commit}{\textit{Committed}}{v}
\draw[->] (dec) edge node[left]{$\prd$} (pred);
\draw[->] (dec) edge node{$\exe$} (spec);
\draw[->] (pred) edge node[right, pos=0.07]{$\pexe$} (spec);
\draw[->] (spec) edge node{$\ret$} (exec);
\draw[->] (spec) edge[bend left] node[above]{$\rbk$} (dec);
\draw[->] (exec) edge node{$\ftc$} (fetch);
\draw[->] (exec) edge node{$\cmt$} (commit);
\draw[->] (spec-fetch) edge node{$\ret$} (fetch);
\draw[->] (spec-fetch) edge node[left,pos=0.05]{$\rbk$} (dec);
\draw[->] (spec) edge node{$\ftc$} (spec-fetch);

\runpred[below =1cm of pred]{l-pred}{}{v}
\node[right =0.1cm of l-pred, name=l-pred-1] {$t \in \guesses$};
\runspec[right of = l-pred-1]{l-spec}{}{v}
\node[right =0.1cm of l-spec, name=l-spec-1]{$\fdefined{\spec}{t} \land t \not \in \decodes$};
\runspecdecodeplain[right of = l-spec-1]{l-spec-fetch}{}{v}
\node[right =0.1cm of l-spec-fetch, name=l-spec-fetch-1] {$\fdefined{\spec}{t} \land t \in \decodes$};
\runexec[below =0.5cm of l-pred]{l-ret}{}{v}
\node[right =0.1cm of l-ret, name=l-ret-1] {$\fundefined{\spec}{t} \land t \not \in \decodes \cup \commits$};
\rundecode[below =0.5cm of l-spec]{l-fetch}{}{v}
\node[right =0.1cm of l-fetch, name=l-fetch-1] {$\fundefined{\spec}{t} \land t \in \decodes$};
\runcommit[below =0.5cm of l-spec-fetch]{l-commit}{}{v}
\node[right =0.1cm of l-commit, name=l-commit-1] {$\fundefined{\spec}{t} \land t \in \commits$};

\end{tikzpicture}
\end{center}
\caption{
  Speculative semantics: Microinstruction lifecycle
}
\label{fig:spec:lifecycle}
\end{figure}
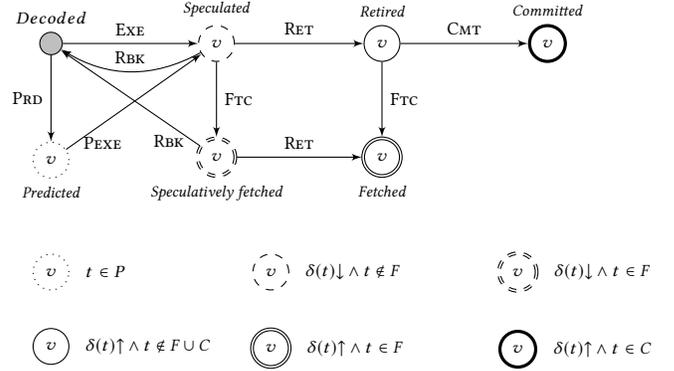

Figure~\ref{fig:spec:lifecycle} depicts the microinstruction lifecycle
under speculative execution. Compared to the OoO lifecycle of
Section~\ref{sec:ooo:cycle}, states \text{Decoded}, \text{Predicted}, \text{Speculated}, 
and \text{Speculatively Fetched}  correspond to state \text{Decoded}, state \text{Retired} corresponds to \text{Executed}, otherwise states
\text{Fetched} and \text{Committed} are the same. As depicted in the legend, transitions between states set different properties of a microinstruction's lifecycle, which we will model in the semantics. 

State \text{Predicted} (dotted circle) models microinstructions that have not yet been executed, but whose result values have been predicted. 
A \text{Decoded} microinstruction can transition to state \text{Predicted} by predicting its result value, thus recording that the value was predicted and causing the state of the microinstruction to be defined. 
A  microinstruction that is ready to be executed (in \text{Decoded}),
possibly relying on predicted values, can be executed and transition
to state \text{Speculated} (dashed circle),
recording its dependencies in the snapshot. Notice that state \text{Speculated} models both speculative and non-speculative execution of a microinstruction. 

%Alternatively, if a \text{Decoded} instruction's dependencies are available in storage $s$, the instruction can be executed, and move to the  \text{Speculated} state. 
From state \text{Speculated}, a microinstruction can: ($a$) roll back to \text{Decoded} (if the predicted values were wrong); ($b$) speculatively fetch the next ISA instruction to be executed, thus 
moving to state \text{Speculatively Fetched}, doubled dashed circle)
and generating newly decoded microinstructions; or ($c$) retire in
state \text{Retired} (single circle) if it no longer depends on speculated values.

Microinstructions in state \text{Speculatively Fetched} can either be
rolled back due to misspeculation, otherwise move to  state
\text{Fetched} (double circle). 
Finally, in state \text{Retired}, as in the OoO case, a PC store
microinstruction can be (non-speculatively) fetched and generate newly
decoded microinstructions, or, if it is a memory store, it can be committed
to the memory subsystem (bold circle).

\subsection{Microinstruction Semantics}

We now present a speculative semantics, denoted by the transition
relation
$(\newstate,\spec,\guesses)\triplestep{}{}(\newstate',\spec',\guesses')$,
that reflects the microinstructions' lifecycle in
Figure~\ref{fig:spec:lifecycle}.
\changed{
  We illustrate the rules of our semantics using the graph in Example~\ref{example:spec:sem} and the interpretation of states (circles) in Figure~\ref{fig:spec:lifecycle}. 
%   The example extends Example~\ref{example:branch} with one additional microinstruction that stores the value of register $z$ into 
%    memory address $16$. 
 Additionally, for two microinstruction identifiers $t$ and $t'$ in speculative state $h=(\Inst, \storage, \commits, \decodes, \spec, \guesses)$, we draw an edge from $t$
  to $t'$ labeled with $v$ whenever $\spec(t)(t') = v$.
}

\noindent \textbf{(Predict)}
The semantics allows to predict the value of an internal operation
choosing a value  $v \in\Val$. The rule updates the storage and records the predicted name, while ensuring that the microinstruction has not been executed already.

\begin{newnotation}
\[
\begin{array}{cc}
(\prd)
&
\begin{array}{c}
  \ass{c}{t}{e} \in \Inst  \qquad
  \fundefined{\storage}{t} \qquad \spec' = \spec \cup \{ t
  \mapsto \emptyset\} \qquad
%   \exists s' . \den{e}(\storage \cup \storage') = v
  \\
  \hline
  (\Inst, \storage, \commits, \decodes, \spec, \guesses) \triplestep{}{}
  (\Inst, \subst{\storage}{t}{v}, \commits, \decodes, \spec',
  \guesses \cup \{t\})
\end{array}
\end{array}
\]
\end{newnotation}
We remark that the semantics can predict a value only for an internal
operation ($\ass{c}{t}{e}$) that has not been already executed
($\fundefined{\storage}{t}$).
As we will see, this choice does not hinder 
expressiveness while it avoids the complexity in modeling speculative execution of program counter updates and loads.
\changed{
Concretely, the rule assigns an arbitrary value to the name of the predicted
microinstruction ($\subst{\storage}{t}{v}$)
and records that the result is speculated  ($\spec \cup \{ t
\mapsto \emptyset\}$). Observe that the snapshot $\spec'(t)$ is $\emptyset$ because the
prediction does not depend on the results of other microinstructions.
}

\changed{
Consider state $\specstate_0$ in Example~\ref{example:spec:sem} containing all microinstructions of
 our running program, which have
just been decoded (gray circles).
The CPU can predict that the value of arithmetic operation $t_2$ is $0$. 
Rule $\prd$ updates the storage with 
$t_2 \mapsto 0$ (dotted circle), the snapshot for
$t_2$ with an empty mapping, and adds $t_2$ to the prediction set.}

\newcommand{\upddep}{\mathit{dep\mhyphen save}}
\newcommand{\overbar}[1]{\mkern 1.7mu\overline{\mkern-1.7mu#1\mkern-1.7mu}\mkern 1.7mu}
\newcommand{\mapminus}[2]{#1 \setminus #2}

\noindent \textbf{(Execute)}
The  rules for execution, commit, and fetch reuse the OoO semantics. 
First for the case when the instruction has not been predicted already: 
\begin{newnotation}
\[
\begin{array}{cc}
(\exe)
&
  \begin{array}{c}
    \newstate \doublestep{l}{} \newstate' \quad
    \stepparam(\newstate,\newstate') = (\exe, t)
  \\ \hline
    (\newstate, \spec, \guesses) \triplestep{l}{} (\newstate', \spec \cup \{ t \mapsto \proj{\storage}{\deps(t, \newstate)}\}, \guesses)
\end{array}
\end{array}
\]
\end{newnotation}
\changed{
The rule executes a microinstruction $t$ using the OoO semantics and updates the snapshot $\spec$, recording that the 
execution of $t$ was determined by the value of its dependencies in  $\deps(t, \newstate)$ in storage $s$ of state $\newstate$.
Notice that the premise $\stepparam(\newstate,\newstate') = (\exe, t)$ ensures that
microinstruction $t$ has not been predicted. In fact, 
$\stepparam(\newstate,\newstate') = (\exe, t)$ only if
$\fundefined{\newstate}{t}$, while rule $\prd$ would update the storage with a value for 
 name $t$, hence $t \not \in \guesses$.
% In this case we record in
% $\spec$
% the current value of every dependency of $t$.
}

\changed{Consider now the state $\specstate_2$ resulting from
  the execution of $t_1$ and $t_3$ in Example~\ref{example:spec:sem}.
In $\specstate_2$ the CPU can execute the PC update $t_6$,
updating the storage with 
$t_6 \mapsto 36$.
The rule additionally updates the snapshot for $t_6$ with
the current values of its  dependencies, 
% of
% the microinstruction:
i.e., $\{t_2 \mapsto 0, t_3
      \mapsto 32\}$. Since the executed microinstruction $t_6$ is a store,
      its dependencies are the free names occurring in the microinstruction.
These snapshots are used by rules $\cmt$ and $\rbk$ to identify
mispredictions.
Similarly, the rule enables the execution of the memory store $t_4$ in
$\specstate_3$, which updates the storage with $t_4 \mapsto 1$ and 
 the snapshot for $t_4$ with  the values of its
dependencies $\{t_1 \mapsto 1\}$.}

\changed{The following rule enables the execution of microinstructions whose
  result has been previously predicted: }
\begin{newnotation}
\[
\begin{array}{cc}
(\pexe)
&
          \begin{array}{c}
            \newstate = (\Inst, \storage, \commits, \decodes) \quad t\in P \\
    (\Inst, \mapminus{\storage}{\{t\}}, \commits, \decodes)
    \doublestep{l}{}
    \newstate' \quad
    \stepparam(\newstate,\newstate') = (\exe, t)   
  \\ \hline
    (\newstate, \spec, \guesses) \triplestep{l}{} (\newstate', \spec \cup \{ t \mapsto \proj{\storage}{\deps(t, \newstate)}\}, \guesses \setminus \{t\})
\end{array}
\end{array}
\]
\end{newnotation}
\changed{
  The rule removes the value predicted for $t$ from the
  storage ($\mapminus{\storage}{\{t\}}$) to enable the actual execution
  of $t$ in the OoO semantics. It also removes $t$ from the set of predicted
  names $P$ and updates the snapshot with the new dependencies of $t$.   
}

\changed{In our example, rule $\pexe$ 
computes the actual value of $t_2$ in state $\specstate_4$, which was previously mispredicted as $0$.
The rule corrects the misprediction 
updating the storage with $t_2
\mapsto 1$ and the snapshot for $t_2$ with the values
of its dependencies, i.e., $t_1 \mapsto 1$.
Notice that in case of a misprediction,
the rule does not immediately roll back all other speculated microinstructions that are affected
by the mispredicted values, e.g., $t_6$.}

\noindent \textbf{(Commit)}
To commit a microinstruction it is sufficient to ensure that 
there are no dependencies left ($\fundefined{\spec}{t}$), i.e., the
microinstruction has been retired.
\changed{Since memory commits have observable side effects outside the
  processor pipeline, only retired
  memory stores can be sent to the memory subsystem.
}
\begin{newnotation}
\[
\begin{array}{cc}
(\cmt)
&
  \begin{array}{c}
    \newstate
    \doublestep{l}{}
    \newstate' \quad
    \stepparam(\newstate,\newstate') = (\cmt(a,v), t) \quad
     \fundefined{\spec}{t}
  \\ \hline
    (\newstate, \spec, \guesses) \triplestep{l}{} (\newstate', \spec, \guesses)
\end{array}
\end{array}
\]
\end{newnotation}
\changed{Consider the state $\specstate_4$
and the memory store $t_4$ in our example. Since $t_4$  has not been
retired (i.e., $\spec{(t_4)} = \{t_1 \mapsto 1\}$) it cannot be committed as
$\fdefined{\spec}{t_4}$. By contrast, the commit of $t_4$ is allowed in state $\specstate_8$ where $\fundefined{\spec}{t_4}$.
}

\noindent \textbf{(Fetch)}
Finally, for the case of (speculative or non-speculative) fetching,  the snapshot must be updated to record the dependency of the newly added microinstructions:
\begin{newnotation}
\[
\begin{array}{cc}
(\ftc)
&
  \begin{array}{c}
    \newstate
    \doublestep{l}{}
    \newstate' \qquad
    \stepparam(\newstate,\newstate') = (F(\Inst), t)
  \\ \hline
    (\newstate, \spec, \guesses) \triplestep{l}{} (\newstate', \spec \cup \{t' \mapsto \proj{\storage}{\{t\}} \: | \:  t' \in \Inst\}, \guesses)
\end{array}
\end{array}
\]
\end{newnotation}
\changed{
 Following the OoO semantics, if
  $\stepparam(\newstate,\newstate') = (F(\Inst), t)$ then $t$ is a
  PC update and $\storage(t)$ is the new value of the
  PC. For every newly added microinstruction in $t' \in \Inst$, we extend the snapshop $\spec$
  recording that $t'$ was added as result of updating the PC microinstruction $t$ with the value $\storage(t)$ 
  (formally, we project the storage $\storage$ on $t$, i.e., $\proj{\storage}{\{t\}}$). 
%   save the 
%   same part of the storage in $\spec$:
%   $\proj{\storage}{\{t\}}$ 
The new snapshop may be used later to roll back the newly added microinstructions in $I$ if the value of the  PC  is misspeculated.
}

\changed{For example, in state $\specstate_5$
the CPU can speculatively fetch the PC update $t_6$,
which sets the program counter to $36$.
Suppose that the newly added microinstructions in $\Inst$ (i.e., the microinstructions resulting from the
translation of the ISA instruction at address $36$) are  $t'_1$ and
$t'_2$. Following the OoO semantics, $\Inst$ is added to existing
microinstructions in $\newstate'$ . The rule 
additionally updates the snapshot for $t'_1$ and $t'_2$
recording the PC store that generated the new microinstructions, i.e., $t_6 \mapsto 36$.
}

\begin{pomset}
\begin{center}
\begin{tikzpicture}[node distance=2.5cm,auto,>=latex']

  \pomsetstyle

  \picloadT{t_1}{\Regs}{z}
  \picexpT[right of=t_1]{t_2}{t_1 = 1}
  \picloadT[right of=t_2]{t_3}{\Pc}{}
  \picstoreTT[below = \pomsetnewline of t_1]{t_4}{\Locs}{16}{t_1}
  \picstore[below = \pomsetnewline of t_2]{t_5}{t_2}{\Pc}{}{a}
  \picstore[below = \pomsetnewline of t_3]{t_6}{\neg t_2}{\Pc}{}{t_3+4}

 \draw[->]
 (t_1) edge (t_2)
 (t_2) edge (t_5)
 (t_2) edge (t_5) (t_3) edge (t_6)
 (t_2) edge (t_6)
 (t_1) edge (t_4) ;

\end{tikzpicture} 
\\[10pt]
\begin{tikzpicture}[node distance=0.8cm,auto,>=latex']

\statestyle                      

\runempty{11}{t_1}
\runempty[right of = 11]{12}{t_2}
\runempty[right of = 12]{13}{t_3}
\runempty[below of = 11]{14}{t_4}
\runempty[right of = 14]{15}{t_5}
\runempty[right of = 15]{16}{t_6}

\node    [below =0.1cm of 15, name=l1] {$\specstate_0$};

\tracerightarrow{13}{\prd, t_2}

\runempty[right =1.1cm of 13]{21}{}
\runpred[right of = 21]{22}{}{0}
\runempty[right of = 22]{23}{}
\runempty[below of = 21]{24}{}
\runempty[right of = 24]{25}{}
\runempty[right of = 25]{26}{}

\node    [below =0.1cm of 25, name=l2] {$\specstate_1$};
\tracerightarrow{23}{*}

\runexec[right  =1.1cm of 23]{31}{}{1}
\runpred[right of = 31]{32}{}{0}
\runexec[right of = 32]{33}{}{32}
\runempty[below of = 31]{34}{}
\runempty[right of = 34]{35}{}
\runempty[right of = 35]{36}{}

\node    [below =0.1cm of 35, name=l3] {$\specstate_2$};

\tracedownarrow{35}{\exe, t_6}

\runexec[below  =1.05cm of 34]{41}{}{1}
\runpred[right of = 41]{42}{}{0}
\runexec[right of = 42]{43}{}{32}
\runempty[below of = 41]{44}{}
\runempty[right of = 44]{45}{}
\runspec[right of = 45]{46}{}{36}

\node    [below =0.1cm of 45, name=l4] {$\specstate_3$};
\draw[->] (46) edge node[right]{$32$} (43);
\draw[->] (46) edge node[right,pos=.5]{$0$} (42);

\traceleftarrow{41}{\exe, t_4}

\runexec[below  =1cm of 24]{51}{}{1}
\runpred[right of = 51]{52}{}{0}
\runexec[right of = 52]{53}{}{32}
\runspec[below of = 51]{54}{}{1}
\runempty[right of = 54]{55}{}
\runspec[right of = 55]{56}{}{36}

\node    [below =0.1cm of 55, name=l5] {$\specstate_4$};
\draw[->] (56) edge node[right]{$32$} (53);
\draw[->] (56) edge node[right,pos=.5]{$0$} (52);
\draw[->] (54) edge node{$1$} (51);

\traceleftarrow{51}{\pexe, t_2}

\runexec[below  =1cm of 14]{61}{}{1}
\runexec[right of = 61]{62}{}{1}
\runexec[right of = 62]{63}{}{32}
\runspec[below of = 61]{64}{}{1}
\runempty[right of = 64]{65}{}
\runspec[right of = 65]{66}{}{36}

\node    [below =0.1cm of 65, name=l6] {$\specstate_5$};
\draw[->] (64) edge node{$1$} (61);
\draw[->] (62) edge node[above]{$1$} (61);
\draw[->] (66) edge node[right]{$32$} (63);
\draw[->] (66) edge node[right]{$0$} (62);

\tracedownarrow{65}{\ftc(\Inst), t_6};

\runexec[below  =1cm of 64]{71}{}{1}
\runexec[right of = 71]{72}{}{1}
\runexec[right of = 72]{73}{}{32}
\runspec[below of = 71]{74}{}{1}
\runempty[right of = 74]{75}{}
\runspecdecodeplain[right of = 75]{76}{}{36}
\runemptyplain[below of = 75]{77}{left:$t'_1$}
\runemptyplain[right of = 77]{78}{left:$t'_2$}

\node    [below =0.1cm of 77, name=l7] {$\specstate_6$};

\draw[->] (74) edge node{$1$} (71);
\draw[->] (72) edge node[above]{$1$} (71);
\draw[->] (76) edge node[right]{$32$} (73);
\draw[->] (76) edge node[right]{$0$} (72);
\draw[->] (77) edge node[right]{$36$} (76);
\draw[->] (78) edge node[right]{$36$} (76);

\tracerightarrow{73}{\ret, t_4}

\runexec[right  =1cm of 73]{81}{}{1}
\runexec[right of = 81]{82}{}{1}
\runexec[right of = 82]{83}{}{32}
\runexec[below of = 81]{84}{}{1}
\runempty[right of = 84]{85}{}
\runspecdecodeplain[right of = 85]{86}{}{36}
\runemptyplain[below of = 85]{87}{left:$t'_1$}
\runemptyplain[right of = 87]{88}{left:$t'_2$}

\draw[->] (82) edge node[above]{$1$} (81);
\draw[->] (86) edge node[right]{$32$} (83);
\draw[->] (86) edge node[right]{$0$} (82);
\draw[->] (87) edge node[right]{$36$} (86);
\draw[->] (88) edge node[right]{$36$} (86);

\node    [below =0.1cm of 87, name=l8] {$\specstate_7$};

\tracerightarrow{83}{\rbk, t_6}

\runexec[right  =1cm of 83]{91}{}{1}
\runexec[right of = 91]{92}{}{1}
\runexec[right of = 92]{93}{}{32}
\runexec[below of = 91]{94}{}{1}
\runempty[right of = 94]{95}{}
\runempty[right of = 95]{96}{}

\draw[->] (92) edge node[above]{$1$} (91);

\node    [below =0.1cm of 95, name=l9] {$\specstate_8$};

\tracedownarrow{95}{\cmt, t_4}

\runexec[below  =1 of 94]{101}{}{1}
\runexec[right of = 101]{102}{}{1}
\runexec[right of = 102]{103}{}{32}
\runcommit[below of = 101]{104}{}{1}
\runempty[right of = 104]{105}{}
\runempty[right of = 105]{106}{}

\draw[->] (102) edge node[above]{$1$} (101);

\node    [below =0.1cm of 105, name=l10] {$\specstate_9$};

\end{tikzpicture}
\end{center}
\caption{
  Execution trace of speculative semantics.
}
\label{example:spec:sem}
\end{pomset}

\noindent \textbf{(Retire)}
The following transition rule allows to retire a microinstruction in case of correct speculation:
\begin{newnotation}
\[
\begin{array}{cc}
(\ret)
&
\begin{array}{c}
  \fdefined{\storage}{t} \qquad
%  \storage^\ast = \spec(t) \qquad
  \dom{\spec(t)} \cap \dom{\spec} = \emptyset \\
  (\Inst, \storage, \commits, \decodes) \sim_t (\Inst, \spec(t), \commits, \decodes) \qquad
  t \not\in \guesses
  \\
  \hline
  (\Inst, \storage, \commits, \decodes, \spec, \guesses) \triplestep{}{}
  (\Inst, \storage, \commits, \decodes, \mapminus{\spec}{\{t\}}, \guesses)
\end{array}
\end{array}
\]
\end{newnotation}
The map $\spec(t)$ contains the snapshot of $t$'s
dependencies at time of $t$'s execution.
A microinstruction  can be retired only if all its dependencies have been
retired ($\dom{\spec(t)} \cap \dom{\spec} = \emptyset$),
the microinstruction has been executed (i.e. its value has not been
just predicted $\fdefined{\storage}{t} \wedge t \not\in \guesses$), 
and the snapshot of $t$'s dependencies is $\sim_t$ equivalent with the
current state, hence the semantics of $t$ has been correctly
speculated (see Lemma~\ref{lem:deps-eq}).
Retiring a microinstruction results in removing the state of its
dependencies from $\spec$, as captured by $\mapminus{\spec}{\{t\}}$.

\changed{
  For instance, in state $\specstate_6$ the
  PC store $t_6$ cannot be retired for two reasons: one of its
  dependencies has not been retired (i.e., $\spec(t_6) = \{t_2 \mapsto
  0, t_3 \mapsto 32\}$ and $\spec{(t_2)} = \{t_1 \mapsto 1\}$, hence $\dom{\spec(t_6)} \cap \dom{\spec} = \{t_2\}$),
  and the snapshot for $t_6$ differs with respect to the
  storage (i.e., $\spec{(t_6)}(t_2) \neq \storage(t_2)$).
  Instead, the microinstruction $t_4$ can be retired because its
  dependencies (i.e., $t_1$) have been retired (i.e.,
  $\fundefined{\spec}{t_1}$) and the snapshot for $t_4$ (i.e., $t_1 \mapsto 1$) exactly matches the
  values in the storage. Notice that retiring $t_4$ would simply remove the
  mapping for $t_4$ from $\spec$.
}

Notice that in case of a load, $ (\Inst, \storage, \commits, \decodes) \sim_t (\Inst, \spec(t), \commits, \decodes)$ may hold even if some
dependencies of $t$ differ in $\storage$ and  $\spec(t)$. In fact, a load may have been executed as a result of misspeculating
the address of a previous store. In this case, $\sim_t$ implies that the misspeculation has not affected the
calculation of $\stract$ of the load (i.e., it does not cause a store bypass), hence there is no reason to 
re-execute the load.
This mechanism is demonstrated in examples later in this section.

\noindent \textbf{(Rollback)}
A microinstruction $t$ can be rolled back when it is found to transitively reference a value that was wrongly speculated.
This is determined by comparing $t$'s dependencies at execution time ($\spec(t)$) with the current storage assignment ($\storage$). 
In case of a discrepancy, if $t$ is not a program counter store, the
assignment to $t$ can simply be undone, leaving speculated
microinstructions $t'$ that reference $t$ to be rolled back later, if
necessary.
\newcommand{\shortsetminus}{\mbox{$\setminus$}}
\begin{newnotation}
\[
\begin{array}{cc}
(\rbk)
&
\begin{array}{c}
  t \not\in \guesses \qquad
  t \not \in \decodes \qquad
   (\Inst, \storage, \commits, \decodes) \not \sim_t (\Inst, \delta(t), \commits, \decodes)
  \\
  \hline
  (\Inst, \storage, \commits, \decodes, \spec, \guesses) \triplestep{}{}
  (\Inst,  \mapminus{\storage}{\{t\}},
  \commits, \decodes,\mapminus{\spec}{\{t\}},\guesses)
\end{array}
\end{array}
\]
\end{newnotation}
However, if $t$ \emph{is} a program counter store, the
speculative evaluation using rule $\ftc$ will have caused a new microinstruction to be speculatively fetched. This fetch needs to be undone.
To that end let $t'\prec t$ ($t'$ refers to $t$) if
$t\in\dom{\spec(t')}$, let $\prec^+$ be the  transitive closure of $\prec$.
As expected $\prec^+$ is antisymmetric and its the reflexive closure is a
partial order.
Define then the set $\Delta^+$ as $\{t' \mid t'
\prec^+ t\}$: i.e., 
$\Delta^+$ is the set of names that reference $t$, not including $t$
itself.
Finally, let $\Delta^* = \Delta^+ \cup \{t\}$.
\begin{newnotation}
\[
\begin{array}{cc}
(\rbk)
&
\begin{array}{c}
  t \not\in \guesses \qquad
  t \in \decodes \qquad
   (\Inst, \storage, \commits, \decodes) \not \sim_t (\Inst, \delta(t), \commits, \decodes)
  \\
  \hline
  (\Inst, \storage, \commits, \decodes, \spec, \guesses) \triplestep{}{}
  (\Inst \setminus \Delta^+,  \mapminus{\storage}{\Delta^*},
  \commits, \decodes \setminus
  \Delta^*,\mapminus{\spec}{\Delta^*},\guesses \setminus \Delta^*)
\end{array}
\end{array}
\]
\end{newnotation}
\changed{
For example, in state $\specstate_7$ the program counter update $t_6$ can be
rolled back because $\storage(t_2) = 1 \neq 0 = \spec(t_6)(t_2)$. The
transition moves the microinstruction $t_6$ back to the decoded state
(i.e., the storage and snapshot $\specstate_8$ are undefined for
$t_6$) and removes every microinstruction that have been decoded by $t_6$
(i.e., $t'_1$ and $t'_2$).
}

Notice that  rollbacks can be performed out of order and that 
loads can be retired even in case of mispredictions if their
dependencies have been enforced. This permits to model
advanced recovery methods used by modern processors, including
concurrent and partial recovery in case of multiple mispredictions.

\noindent \textbf{Speculation of load/store dependencies}
\changed{Since the predicted values of internal operations (cf. rule $\prd$) can affect conditions and targets
of program counter stores, the speculative semantics supports speculation of control flow, as well as 
speculative execution of cross-dependencies resulting from prediction
of load/store's addresses. We illustrate these features with an example (Figure~\ref{fig:spec:load-store:dep}),
which depicts one possible execution of the program in Example~\ref{example:active-store}. 
}

\changed{
Consider the state $\specstate_0$ after
the CPU has executed and retired microinstructions $t_{11}, t_{12}, t_{21}, t_{22}$, and
$t_{41}$, thus resolving the first two stores and the load's address. 
}
In state $\specstate_0$ the CPU can predict the address (i.e., the value of $t_{31}$)
of the third store as $0$ and modify the state as in $\specstate_1$ (rule $\prd$).

This prediction enables speculative execution of the load $t_{42}$ in state $\specstate_1$: the active store's bounded names  $bn(\stract(\newstate_1, t_{42}))$ 
consist of the singleton set $\{t_{12}\}$, since $\storage_1(t_{21})=
\storage_1(t_{31}) = 0$, while $\storage_1(t_{41}) = 1$. 
Hence, we can apply rule $\exc$ to execute  $t_{42}$, thus updating the
storage  with $t_{42} \mapsto 1$, and recording the snapshot 
$\{t_{11} \mapsto 1, t_{21} \mapsto 0, t_{31} \mapsto 0, t_{41}
\mapsto 1, t_{12} \mapsto 1\}$ for $t_{42}$. Concretely, $t_{42}$'s dependencies in state $\specstate_1$ \todo{Pls check, it was $\newstate_3$} consists of 
the local dependencies (i.e.,  the load's address $t_{41}$), and the cross dependencies containing  $t_{12}$ (i.e., active store it loads the value from), as well as 
the potential sources of $t_{42}$, that is, the  addresses of all stores between the active store $t_{12}$ and the load $t_{42}$, namely $t_{11}, t_{21}$ and $t_{31}$.

\begin{figure}
% \vspace*{-.4cm}
\begin{center}
\begin{tikzpicture}[node distance=1cm,auto,>=latex']

\statestyle                      

\runexecplain                   {1-11}{left:$t_{11}$}{1}
\runexecplain [right of = 1-11] {1-12}{left:$t_{12}$}{1}
\runexecplain [below = 0.1cm of 1-11] {1-21}{left:$t_{21}$}{0}
\runexecplain [right of = 1-21] {1-22}{left:$t_{22}$}{2}
\runemptyplain[below = 0.1cm of 1-21] {1-31}{left:$t_{31}$}
\runemptyplain[right of = 1-31] {1-32}{left:$t_{32}$}
\runexecplain [below = 0.1cm of 1-31] {1-41}{left:$t_{41}$}{1}
\runemptyplain[right of = 1-41] {1-42}{left:$t_{42}$}

\node    [below right =0.1cm and 0.15cm of 1-41, name=l10] {$\specstate_0$};

\tracerightarrow{1-12}{\prd, t_{31}}

\runexec [right =1cm of 1-12] {2-11}{}{1}
\runexec [right of = 2-11]    {2-12}{}{1}
\runexec [below = 0.1cm of 2-11]    {2-21}{}{0}
\runexec [right of = 2-21]    {2-22}{}{2}
\runpred [below = 0.1cm of 2-21]    {2-31}{}{0}
\runempty[right of = 2-31]    {2-32}{}
\runexec [below = 0.1cm of 2-31]    {2-41}{}{1}
\runempty[right of = 2-41]    {2-42}{}

\node    [below right =0.1cm and 0.15cm of 2-41, name=l2] {$\specstate_1$};

  \node[below right =0.2cm and 0.2cm of 2-12, name=tr2-a] {};
  \node[below right =0.2cm and 0.6cm of tr2-a, name=tr2-b] {};
  \draw[->] (tr2-a) edge node[above]{$\exe, t_{42}$} (tr2-b);

\runexec [right =1.2cm of 2-32] {3-11}{}{1}
\runexec [right of = 3-11]    {3-12}{}{1}
\runexec [below = 0.1cm of 3-11]    {3-21}{}{0}
\runexec [right of = 3-21]    {3-22}{}{2}
\runpred [below = 0.1cm of 3-21]    {3-31}{}{0}
\runempty[right of = 3-31]    {3-32}{}
\runexec [below = 0.1cm of 3-31]    {3-41}{}{1}
\runexec [right of = 3-41]    {3-42}{}{1}

\draw[->] (3-42) edge node[above,pos=0.7]{$1$} (3-11);
\draw[->] (3-42) edge node[above,pos=0.7]{$0$} (3-21);
\draw[->] (3-42) edge node[above,pos=0.7]{$0$} (3-31);
\draw[->] (3-42) edge node[above,pos=0.7]{$1$} (3-41);
\draw[->] (3-42) edge[bend right] node[right]{$1$} (3-12);

\node    [below right =0.1cm and 0.15cm of 3-41, name=l3] {$\specstate_2$};

\node[below left =0.2cm and 0.2cm of 3-41, name=tr3-a] {};
\node[below left =0.2cm and 0.4cm of tr3-a, name=tr3-b] {};
\draw[->] (tr3-a) edge node[above, pos=0]{$\pexe, t_{31}$} (tr3-b);

\runexec [below =0.5cm of 2-41]                         {4-11}{}{1}
\runexec [right of = 4-11]                            {4-12}{}{1}
\runexec [below = 0.1cm of 4-11]                            {4-21}{}{0}
\runexec [right of = 4-21]                            {4-22}{}{2}
\runexec [below = 0.1cm of 4-21]                            {4-31}{}{1}
\runempty[right of = 4-31]                            {4-32}{}
\runexec [below = 0.1cm of 4-31]                            {4-41}{}{1}
\runexec [right of = 4-41]                            {4-42}{}{1}
\draw[->] (4-42) edge node[above,pos=0.7]{$1$}    (4-11);
\draw[->] (4-42) edge node[above,pos=0.7]{$0$}    (4-21);
\draw[->] (4-42) edge node[above,pos=0.7]{$0$}    (4-31);
\draw[->] (4-42) edge node[above,pos=0.7]{$1$}    (4-41);
\draw[->] (4-42) edge[bend right] node[right]{$1$}(4-12);

\node    [below right =0.1cm and 0.15cm of 4-41, name=l3] {$\specstate_3$};

\traceleftarrow{4-21}{\rbk, t_{42}}

\runexec [left =2cm of 4-11]                         {5-11}{}{1}
\runexec [right of = 5-11]                            {5-12}{}{1}
\runexec [below = 0.1cm of 5-11]                            {5-21}{}{0}
\runexec [right of = 5-21]                            {5-22}{}{2}
\runexec [below = 0.1cm of 5-21]                            {5-31}{}{1}
\runempty[right of = 5-31]                            {5-32}{}
\runexec [below = 0.1cm of 5-31]                            {5-41}{}{1}
\runempty[right of = 5-41]                            {5-42}{}

\node    [below right =0.1cm and 0.15cm of 5-41, name=l4] {$\specstate_4$};

\node    [below left =0.35cm and 0.1 cm of 5-41, name=line1] {};
\node    [right =4cm of line1, name=line2] {};
\draw[-] (line1) edge  (line2);

\runexec [below =0.5cm of 4-41]                         {4'-11}{}{1}
\runexec [right of = 4'-11]                            {4'-12}{}{1}
\runexec [below = 0.1cm of 4'-11]                            {4'-21}{}{0}
\runexec [right of = 4'-21]                            {4'-22}{}{2}
\runexec [below = 0.1cm of 4'-21]                            {4'-31}{}{5}
\runempty[right of = 4'-31]                            {4'-32}{}
\runexec [below = 0.1cm of 4'-31]                            {4'-41}{}{1}
\runexec [right of = 4'-41]                            {4'-42}{}{1}
\draw[->] (4'-42) edge node[above,pos=0.7]{$1$} (4'-11);
\draw[->] (4'-42) edge node[above,pos=0.7]{$0$} (4'-21);
\draw[->] (4'-42) edge node[above,pos=0.7]{$0$} (4'-31);
\draw[->] (4'-42) edge node[above,pos=0.7]{$1$} (4'-41);
\draw[->] (4'-42) edge[bend right] node[right]{$1$} (4'-12);

\node    [below right =0.1cm and 0.15cm of 4'-41, name=l'3] {$\specstate'_3$};

\traceleftarrow{4'-21}{\ret, t_{42}}

\runexec [left =2cm of 4'-11]                         {5'-11}{}{1}
\runexec [right of = 5'-11]                            {5'-12}{}{1}
\runexec [below = 0.1cm of 5'-11]                            {5'-21}{}{0}
\runexec [right of = 5'-21]                            {5'-22}{}{2}
\runexec [below = 0.1cm of 5'-21]                            {5'-31}{}{1}
\runempty[right of = 5'-31]                            {5'-32}{}
\runexec [below = 0.1cm of 5'-31]                            {5'-41}{}{1}
\runexec [right of = 5'-41]                            {5'-42}{}{1}

\node    [below right =0.1cm and 0.15cm of 5'-41, name=l4'] {$\specstate'_4$};
\end{tikzpicture}
\caption{Speculation of load/store dependencies}
\label{fig:spec:load-store:dep}

% \begin{tabular}{|l|@{}p{2.2cm}@{}|@{}p{2.2cm}@{}|l|@{}p{3cm}@{}|}
% \hline
% $i$ & \textbf{Storage $\storage_i$}& \textbf{Dependency $\spec_i$} & \textbf{$\guesses_i$} & \textbf{Note}   \\
% % \hline
% \hline
% 1 &  \mbox{$t_{11} \mapsto 1$}, \mbox{$t_{21} \mapsto 0$}, \mbox{$t_{41} \mapsto 1$}, \mbox{$t_{12} \mapsto 1$},  \mbox{$t_{22} \mapsto 2$} & $\emptyset$ & $\emptyset$ & \exc, \ret~on $t_{11}$, $t_{21}$, $t_{41}$, $t_{12}$, $t_{22}$ \\
% \hline
% 2 & $\storage_1, t_{31} \mapsto 0$ & $t_{31} \mapsto \emptyset$ & $t_{31}$ & \prd~$t_{31} = 0$  \\
% \hline
% 3 &  $\storage_2, t_{42} \mapsto 1$ & \mbox{$t_{31} \mapsto \emptyset$} \mbox{$t_{42} \mapsto (t_{11} \mapsto 1)$} \mbox{$t_{42} \mapsto (t_{21} \mapsto 0)$} \mbox{$t_{42} \mapsto (t_{31} \mapsto 0)$} \mbox{$t_{42} \mapsto (t_{41} \mapsto 1)$} \mbox{$t_{42} \mapsto (t_{12} \mapsto 1)$}& $t_{31}$ & \exc~on $t_{42}$ with  
% $\spec_3 = \spec_2 \cup t_{42} \mapsto \proj{\storage_2}{\deps(t_{42}, \newstate_2)}$\\
% \hline
% 4 & $\storage_3\{t_{31} \mapsto 1\}$ & $\spec_3, t_{31} \mapsto \emptyset$ & $\emptyset$ & \pexe~on $t_{31}$
% \\
% \hline
% 5 &  $\storage_4 \setminus \{t_{42} \mapsto 1\}$ & $t_{31} \mapsto \emptyset$ & $\emptyset$ & \rbk~on $t_{42}$  \\
% \hline
% \multicolumn{5}{|c|}{Case for $\asstrue{t_{31}}{5}$}\\
% \hline
%   4' & $\storage_3\{t_{31} \mapsto 1\}$ & $\spec_3, t_{31} \mapsto \emptyset$ & $\emptyset$ & \pexe~on $t_{31}$\\
% \hline
% \end{tabular}
\end{center}
\vspace*{-.5cm}
\end{figure}
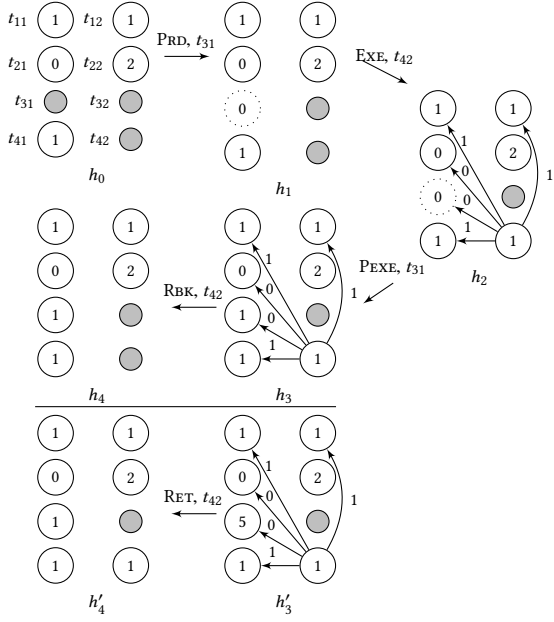

At this point, load $t_{42}$ cannot be retired by  rule $\ret$ in state $\specstate_2$ since its dependencies, e.g., $t_{31}$, are yet to be retired.
However, we can execute  $t_{31}$ by applying rule $\pexe$. The execution updates the state by removing $t_{31}$ from the prediction set and storing its correct value, 
as well as extending the snapshot with $t_{31} \mapsto \emptyset$, as depicted in state $\specstate_3$.

The execution of $t_{31}$ enables the premises of rule $\rbk$ to capture that the dependency misprediction led to 
misspeculation of the address of the load $t_{42}$.
Specifically, the set $asn$ at the time of $t_{42}$'s
execution
$bn(\stract((\Inst_3, \delta_3(t_{42}), \commits_3, \decodes_3), t_{41})) = \{t_{12}\}$ differs from 
the active store set $bn(\stract(\newstate_3, t_{41}))) = \{t_{32}\}$
in the current state. Therefore, we roll back the execution removing
the mappings for $t_{42}$ from the storage and the snapshot as in $\specstate_4$.

Finally, we remark that the speculative execution of loads is rolled back \emph{only if} a misprediction causes 
a violation of load/store dependencies. For instance, if the value of $t_{31}$ was $5$ instead of $1$, as depicted in $\specstate'_3$,
the misprediction of $t_{31}$'s value as $0$ in $\specstate_1$ does
not enable a rollback of the load. This is because the  actual value
of $t_{31}$ does not change the \todo{check $\delta$}
set of active stores. In fact, the set of active stores at the time of $t_{42}$'s execution $bn(\stract((\Inst'_3, \delta'_3(t_{42}), \commits'_3, \decodes'_3), t_{41})) = \{t_{12}\}$ is the same as 
the active store's set $bn(\stract(\newstate'_{3}, t_{41}))) = \{t_{12}\}$ in the current state.

\section{In-order Semantics}\label{sec:inorder}
We define the in-order (i.e.,  sequential) semantics by restricting the
scheduling of the OoO semantics and enforcing the execution of microinstructions in program order.

A microinstruction $\inst = \ass{c}{t}{o}$ 
is \emph{completed} in state $\newstate$ (written $\completed{\newstate,
 \inst}$) if one of the following conditions hold:
\begin{itemize}
 \item The instruction's guard evaluates to false in $\newstate$, i.e.
   $\neg \den{c}(\newstate)$.
 \item The instruction has been executed and is not a memory or a 
   program counter store, i.e., 
   $o \neq  \store{\Locs}{t_a}{t_v}
    \land
    o \neq \store{\Pc}{}{t_v} \land \fdefined{\newstate}{t}
    $.
    \item The instruction is a committed memory store or a fetched and
      decoded program counter store, i.e., $t \in \commits \cup \decodes$
\end{itemize}
The in-order transition rule allows an evaluation step to proceed only
if program-order preceding microinstructions have been completed.
% A microinstruction $\inst$ is \emph{completed} in state $\newstate$ (written $\completed{\newstate,
%   \inst}$)  if its lifecycle is complete, i.e., it is either discarded,
% committed (if it is a memory store), fetched (if it is a PC store), or executed, otherwise.

%\todo[inline]{Mads: Would it simplify notation to introduce a
%discarded set, similar to $\commits$ and $\decodes$?. RG: NO}

% \todo[inline]{Mads: The rule makes one think - is the union in the premise non-trivial? Maybe need to clarify this?}
\begin{newnotation}
\[
\begin{array}{c}
  \newstate \doublestep{l}{}  \newstate' \qquad \stepparam(\newstate, \newstate') =
  (\alpha, t) \\
  \forall \inst \in \newstate \mbox{ if } bn(\inst) < t  \mbox{ then }
  \completed{\newstate, \inst}
  \\ \hline
  \newstate \singlestep{l}{} \newstate'
\end{array}
\]
\end{newnotation}
%That is, if a microinstruction $t$ is affected then all previous (in
%program order) microinstructions have been completed.
%%
%A microinstruction $\inst = \ass{c}{t}{o}$ 
%is \emph{completed} in state $\newstate$ (written $\completed{\newstate,
 %\inst}$) if
%\begin{itemize}
% \item the instruction's guard evaluates to false in $\newstate$, i.e.
 %  $\neg \den{c}(\newstate)$, or
 %\item the instruction has been executed and is not a memory or a 
 %  program counter store, i.e. 
  % $o \neq  \store{\Locs}{t_a}{t_v}
  %  \land
  %  o \neq \store{\pc}{}{t_v} \land \fdefined{\newstate}{t}
  %  $, or
  %  \item the instruction is a committed memory store or a fetched and
   %   decoded program counter store, i.e. $t \in \commits \cup \decodes$
%\end{itemize}
%
%
It is easy to show that the sequential model is deterministic. In
fact, the OoO model allows each transition to modify  
one single name $t$, while the precondition of the in-order rule forces all
previous instructions to be completed, therefore only one transition at a time
is enabled.
\begin{definition}
  Let $\newstate_1:: \dots :: \newstate_n$ be the sequence of states of execution $\pi$, then
  $\mathit{commits}(\pi, a)$ is the list of memory commits at address $a$
  in $\pi$, and is empty if $n < 2$;
  $v :: \mathit{commits}(\newstate_2 :: \dots :: \newstate_n, a)$ if
  $\stepparam(\newstate_1, \newstate_2)=(\cmt(a,v), t)$; and
  $\mathit{commits}(\newstate_2 :: \dots :: \newstate_n, a)$ otherwise.
\end{definition}
We say that two models are memory consistent if 
writes to the same memory location are seen in the same
order.
\begin{definition}
The transition systems $\rightarrow_1$ and $\rightarrow_2$ are
\emph{memory consistent} if
for any program and initial state $\newstate_0$,
for all executions $\pi = \newstate_0 \rightarrow^\ast_1 \newstate$, there exists
  $\pi' = \newstate_0 \rightarrow^\ast_2 \newstate'$
  such that for all $a \in \Locs$ $\mathit{commits}(\pi, a)$ is a prefix of $\mathit{commits}(\pi', a)$.
\end{definition}
Intuitively, two models that are memory consistent yield the same
sequence of memory updates for each memory address. This
ensures that the final result of a program is the same in both models.
Notice that since we do not assume any fairness property for the transition systems then an execution $\pi$ of $\rightarrow_1$ may indefinitely postpone the commits for a given address. For this reason we only require
to find an execution such that
$\mathit{commits}(\pi, a)$ is a prefix of $\mathit{commits}(\pi', a)$.
We obtain memory consistency of both the OoO and the speculative semantics against the in-order semantics.

\begin{theorem}\label{thm:ooo:co}
 $\doublestep{}{}$ and $\singlestep{}{}$ are memory consistent.\hfill$\Box$
\end{theorem}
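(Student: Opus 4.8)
Unfolding the definition with $\rightarrow_1 = \doublestep{}{}$ and $\rightarrow_2 = \singlestep{}{}$, the task is: given an arbitrary OoO execution $\pi$ from $\newstate_0$ ending in $\newstate$, produce an in-order execution $\pi'$ from $\newstate_0$ with $\mathit{commits}(\pi,a)$ a prefix of $\mathit{commits}(\pi',a)$ for every $a \in \Locs$. Since $\singlestep{}{}$ is a deterministic restriction of $\doublestep{}{}$, there is a unique maximal in-order execution from $\newstate_0$, and I will take $\pi'$ to be a long enough prefix of it; the whole difficulty is to show that the per-location commit values of any OoO schedule coincide, store by store, with those of this canonical schedule.

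I would first isolate two structural facts about the OoO semantics. (i) Per location, commits occur in increasing name order and \emph{without gaps}: the premise $\bn(\strfor(\newstate,t)) \subseteq \commits$ of rule $\cmt$ forces every earlier store that may target the same address — including any whose address is still unresolved — to be committed before $t$, so $\mathit{commits}(\pi,a)$ is exactly the program-order sequence of the true-guarded, address-$a$ stores committed so far, with no store skipped. (ii) Fetches are likewise in program order, and by the translation properties of Section~\ref{sec:fmm} each newly decoded batch carries names strictly larger than every existing name; hence the dynamically generated instruction set grows as a monotone frontier in the order $<$.

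The crux is a value-determinacy lemma: for every name $t$, whenever $t$ is executed in a reachable OoO state its stored value $\storage(t)$ equals the value $V(t)$ that the in-order run assigns to $t$, and whenever a program-counter store $t$ is fetched it spawns exactly the same microinstructions as in-order. I would prove this by strong induction on $t$ in the order $<$, fixing the generated instruction set simultaneously. Internal operations and stores reduce immediately to the hypothesis, since all their free names are $< t$. The load case is where the OoO model earns its keep: in the \emph{non-speculative} semantics a load fires (rule $\exe$) only when $\bn(\stract(\newstate,t))$ is a singleton with $t_a$ and the active store defined, which by the definition of $\stract$ forces every earlier store that could alias the load address to have a resolved address; by the inductive hypothesis those addresses are canonical, so the active store selected in $\newstate$ is the canonical one and the value read is $V$ of that store, i.e.\ $V(t)$. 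Because program-counter-store values are thereby canonical, the fetched addresses — hence the whole instruction stream and its name assignment — agree with the in-order run, discharging the simultaneous part of the induction.

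Assembling the pieces, I run the deterministic in-order semantics far enough to commit, for each $a$, at least as many stores as $\pi$ did, obtaining $\pi'$. By (i) both $\mathit{commits}(\pi,a)$ and $\mathit{commits}(\pi',a)$ are gap-free program-order sequences of the same address-$a$ stores, and by the determinacy lemma their committed values agree store by store, so the former is a prefix of the latter; the prefix slack absorbs the fact that $\pi$, enjoying no fairness guarantee, may have postponed or not yet reached some commits. I expect the main obstacle to be making the simultaneous induction rigorous: value-determinacy presupposes that the two runs fetch the same instructions and assign the same names, yet that fact itself depends on the canonicity of the program-counter-store values. I would break the circularity by indexing the induction on $<$ and invoking the monotone-frontier property (ii) to guarantee that, when $t$ is reached, every name below it that will ever be generated is already present, so that $\stract(\newstate,t)$ is computed over the final, canonical collection of candidate stores rather than an incomplete one.
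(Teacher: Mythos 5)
Your proposal is correct in outline but takes a genuinely different route from the paper. The paper proves memory consistency by a \emph{commutation} argument: two auxiliary lemmas show that $\strfor$ and $\stract$ of a name $t$ neither depend on names larger than $t$ nor grow along transitions, and two ``switch'' lemmas then show that any two adjacent OoO transitions touching names $t_1$ and $t_2$ with $t_2<t_1$ can be swapped while reaching the same final state (carefully checking that two commits to the same address are never swapped, so per-location commit order is preserved); induction on trace length then sorts an arbitrary OoO trace into a program-order one. You instead argue \emph{value determinacy}: a simultaneous strong induction along $<$ showing that every executed name receives its canonical value and every fetch spawns the canonical instructions, combined with the gap-free, program-order structure of per-location commits. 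Both rest on the same structural facts (your ``monotone frontier'' observation is exactly the paper's lemma that $\stract(\newstate,t)$ is insensitive to names above $t$), but the paper's version is stronger where it matters elsewhere: the reordered trace ends in the \emph{same state} as the original, and the proof tracks exactly how observations change under reordering (a forwarded load becoming an observable $\dl{}$), facts that are reused in the proofs of Theorems~\ref{thm:spec:co} and~\ref{thm:ct}. Your argument is more direct for the commit sequences alone but yields neither of these byproducts. Two points to tighten: first, a singleton $\stract$ does \emph{not} force every earlier potentially aliasing store to have a resolved address --- a store with an unresolved address may be masked by a later store whose address provably equals the load's --- though your conclusion (the selected active store is the canonical one) survives this correction; second, your final step quietly assumes the in-order run makes enough progress to reach every store that $\pi$ committed, a liveness claim that deserves at least a sentence (the paper sidesteps it by permuting the steps of the given trace rather than constructing a fresh in-order run, though its own final assembly is also terse).
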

\begin{proof}
See Appendix~\ref{prf:oooco}.
\end{proof}

\begin{theorem}
\label{thm:spec:co}
$\triplestep{}{}$ and $\singlestep{}{}$ are memory consistent.\hfill$\Box$
\end{theorem}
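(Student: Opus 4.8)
The plan is to obtain the result by composition, treating the speculative semantics as one further ``inverse refinement'' layered on top of the OoO semantics rather than re-proving everything against the in-order semantics directly. Concretely, I would first establish that $\triplestep{}{}$ and $\doublestep{}{}$ are memory consistent, and then chain this with Theorem~\ref{thm:ooo:co}. This chaining is sound because memory consistency is transitive: the prefix relation on commit lists is transitive, and the quantifier alternation matches, so given a speculative execution I obtain an OoO witness whose per-location commit list extends it, and then (by Theorem~\ref{thm:ooo:co}) an in-order witness whose commit list extends that. The shared initial state is unproblematic, since an initial speculative state has $\spec$ and $\guesses$ empty and hence projects exactly onto the OoO/in-order initial state $\newstate_0$.

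The core of the speculative-to-OoO step is a forward simulation whose relation sends a speculative state $\specstate = (\Inst,\storage,\commits,\decodes,\spec,\guesses)$ to the OoO state obtained by \emph{projecting onto its confirmed fragment}. Intuitively $\dom{\spec}\cup\guesses$ marks the tentative part of the computation (predicted values, speculatively executed microinstructions, and speculatively fetched instructions), whereas $\fundefined{\spec}{t}$ marks the retired, committed-as-correct part; so the projection keeps only the storage values of retired names, keeps in its instruction set only those instructions whose fetch chain has been confirmed (the PC stores that fetched them are retired), and restricts $\commits$ and $\decodes$ accordingly. I would then classify the rules: $\prd$, $\exe$, $\pexe$, $\rbk$, and speculative $\ftc$ all act only on the tentative fragment and therefore map to the \emph{empty} OoO step. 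The real work is done by $\ret$, $\cmt$, and the fetches that confirm an instruction. Retiring a non-store microinstruction $t$ corresponds to \emph{executing} $t$ in the projection: the premise $(\Inst,\storage,\commits,\decodes) \sim_t (\Inst,\spec(t),\commits,\decodes)$, together with Lemma~\ref{lem:deps-eq}, guarantees that the recorded value is exactly the value a genuine OoO execution produces, and $\dom{\spec(t)}\cap\dom{\spec}=\emptyset$ guarantees that the needed dependencies are already in the projection. Retiring a speculatively fetched PC store (or the non-speculative $\ftc$ from a retired PC store) confirms its fetch and hence corresponds to an OoO $\ftc$. Finally, a speculative $\cmt$ maps to an OoO $\cmt$ with the same address and value, so the per-location commit lists of a speculative execution and its OoO image coincide step for step.

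I would support the simulation with a few invariants, proved by induction on speculative executions: (i) retired microinstructions are never rolled back, since the $\sim_t$ side-condition of both $\rbk$ rules presupposes $\fdefined{\spec}{t}$, and the rolled-back set $\Delta^+$ of the PC-store $\rbk$ is built from the $\prec$ relation on $\spec$ and therefore contains only tentative names; (ii) committed stores are retired (rule $\cmt$ requires $\fundefined{\spec}{t}$) and hence, by (i) and monotonicity of $\commits$, stable; and (iii) the confirmed projection is always a reachable OoO state assigning OoO-correct values. Invariant (iii) is where $t$-equivalence and Lemma~\ref{lem:deps-eq} are indispensable, as they certify that address speculation did not silently alter the active-store set of any retired load.

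The step I expect to be the main obstacle is reconciling the dependency computations $\strfor$ and $\stract$ between the full speculative storage $\storage$ and its confirmed projection. The speculative $\cmt$ rule checks $\bn(\strfor(\newstate,t)) \subseteq \commits$ against the full storage, and loads are governed by $\stract$ against the full storage; after projection some addresses become undefined, which can only \emph{enlarge} the may-store set, so I must show that every may-store relevant to a projected OoO commit is already retired with a stable address at that moment. This is exactly the property the snapshot mechanism $\spec$ and the $\sim_t$ conditions of $\ret$ and $\rbk$ are engineered to enforce, so the argument will hinge on matching $\stract$ computed in $\specstate$ to $\stract$ computed in the projection, transporting the active-store set via $t$-equivalence and the committed values via Lemma~\ref{lem:deps-eq}. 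The interplay of speculative fetches and rollbacks (which transiently add and remove instructions and emit instruction-load $\il{}$ observations) is bookkeeping-heavy, but since memory consistency tracks only $\cmt$ events it does not perturb the commit sequences once invariants (i)--(iii) are in place.
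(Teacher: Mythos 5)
Your proposal is correct and follows essentially the same route as the paper: the appendix likewise reduces to Theorem~\ref{thm:ooo:co} and relates each speculative state to the OoO state obtained by projecting away the speculated part ($\dom{\spec}$ and the instructions fetched by non-retired PC stores), with $\prd$, $\exe$, $\pexe$, $\rbk$ and speculative $\ftc$ treated as stuttering steps and $\ret$, $\cmt$, non-speculative $\ftc$ matched by OoO steps using $t$-equivalence and Lemma~\ref{lem:deps-eq}. The only differences are presentational: the paper packages the argument as a weak bisimulation (also giving the converse direction) and must impose a wellformedness condition on $\translate$ so that the ``produced-by'' relation between instructions and PC stores is functional --- a detail your confirmed-fragment projection would equally need once made precise.
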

\begin{proof}
See Appendix~\ref{prf:speco}.
\end{proof}

% For reasons of space the proofs are deferred to the full version of the paper.

%%% Local Variables:
%%% mode: latex
%%% TeX-master: "ccs2020-main"
%%% End:

\newcommand{\prediction}{\mathit{pred}}
\section{Attacks and Countermeasures
}\label{sec:newvul}

InSpectre can be used to model and analyze (combinations of) microarchitectural features  underpinning Spectre
attacks~\cite{DBLP:conf/sp/KocherHFGGHHLM019,maisuradze2018ret2spec,DBLP:conf/uss/CanellaB0LBOPEG19}, and,
importantly, to discover new vulnerabilities and to reason about the
security of proposed countermeasures. Observe that these results hold for our generic microarchitectural model,
while specific CPUs  would require instantiating InSpectre to model their microarchitectural features.
We remark that real-world feasibility of our new vulnerabilities 
falls outside the scope of this work.
% \todo{RG: say that we need a architecture specific model}

Specifically, we use the following recipe: We model a specific prediction strategy in InSpectre and try to prove
conditional noninterference for arbitrary programs. Failure to complete the security proof results in new 
classes of counterexamples as we report below. 
% \todo{RG: And verify countermeasure for
% a real CPU}

% These vulnerabilities have been discovered by modeling specific prediction
% strategies and  attempting to demonstrate conditional noninterference for
% arbitrary programs, which resulted in the identification of new classes of counterexamples.
% Although the real-world feasibility of the new vulnerabilities falls outside the scope of a formal model, 
% we can use our model to (in)validate 
% the security claims of CPU vendors, opening up new avenues for
% provable security and practical exploitation.
% \todo{RG: Is this too strong? we are not validating real CPU neither
%   having practical exploitation}

Concretely, prediction strategies and countermeasures are
modeled by constraining the nondeterminism in the microinstruction scheduler 
and in the prediction semantics (see rule $\prd$). 
The prediction function $\prediction_{p} : \Sigma \rightarrow \ANames \rightharpoonup
2^{\Val}$  captures a prediction strategy $p$ by computing the set of predicted values for a name $t \in \ANames$  and a  state 
$\newstate \in \Sigma$. We assume the
transition relation satisfies the following property:
If $(\newstate, \spec, \guesses) \triplestep{l}{}
  (\newstate', \spec',
  \guesses \cup \{t\})$ then
  $ t \in \dom{\prediction_p(\newstate)}$ and
  $\newstate'(t) \in \prediction_p(\newstate)(t)$. 
  This property ensures that the transition relation chooses predicted values from function $\prediction_p$.

Following the security model in Section~\ref{sec:secmod}, we check conditional noninterference by: ($a$) using the in-order transition relation
$\singlestep{}{}$ as reference model and speculative (OoO)
transition relation $\triplestep{}{}$ ($\doublestep{}{}$) as target
model; ($b$) providing the security policy
$\sim$ for memory and registers.
%($c$) and defining the attacker observations of the instruction and/or
%data cache.
To invalidate conditional
noninterference it is sufficient to find two $\sim$-indistinguishable
states that yield the same observations in the reference model and
different observations  
in the target model.
%For simplicity,  our examples only report the discriminating observation trace.  
%
We use the classification by Canella et al.~\cite{DBLP:conf/uss/CanellaB0LBOPEG19} to refer to existing attacks.
% \changed{
% We refer to Appendix~\ref{appx:cm} for the models and countermeasures of Spectre-BTB, Spectre-RSB, and Retpoline.
% }
%, as well as their countermeasures based on Retpoline~\cite{retpoline}.

\subsection{Spectre-PHT}\label{sec:spectre:v1}
Spectre-PHT~\cite{DBLP:conf/sp/KocherHFGGHHLM019} exploits the prediction
mechanism for the outcome of conditional branches.
Modern CPUs use \emph{Pattern History Tables} (PHT) to record 
patterns of past executions of conditional branches, i.e.,
whether the \emph{true} or the \emph{false} branch was executed, and  
then use it to predict the outcome of that branch. 
By poisoning the PHT to execute one direction (say the
\emph{true} branch), an attacker can fool the
prediction mechanism to 
execute the \emph{true} branch, even when the actual outcome of the branch is ultimately \emph{false}. 
The following program (and the corresponding MIL) illustrates
information leaks via Spectre-PHT: \\
\begin{tabular}{ll}
\begin{lstlisting}[mathescape=true]
$a_1: r_1 = A_1.size;$
\end{lstlisting}
  &
\adjustbox{valign=t}{
\begin{tikzpicture}[node distance=1.7cm,auto,>=latex']
\pomsetstyle
\picloadT{t_{11}}{\Locs}{A_1.size}
\picstoreT[right of=t_{11}]{t_{12}}{\Regs}{r_1}{t_{11}}
\draw[->]  (t_{11}) edge (t_{12}) ;
\picstoreT[right of=t_{12}]{t_{13}}{\Pc}{}{a_2}
\end{tikzpicture}
    }
    \vspace*{.1cm}
  \\
  \hline
\begin{lstlisting}[mathescape=true]
$a_2: \mathit{if}\; (r_0 < r_1)$
\end{lstlisting}
  &
    \adjustbox{valign=t}{
    \begin{tikzpicture}[node distance=1.5cm,auto,>=latex']
\pomsetstyle
\picloadT{t_{21}}{\Regs}{r_0}
\picexpT[right of = t_{21}]{t_{23}}{t_{21} < t_{22}}
\picloadT[right of = t_{23}]{t_{22}}{\Regs}{r_1}
\picstore[below = \pomsetnewline of t_{21}]{t_{24}}{t_{23}}{\Pc}{}{a_3}
\picstore[below = \pomsetnewline of t_{22}]{t_{25}}{\neg t_{23}}{\Pc}{}{a_4}
\draw[->]  (t_{21}) edge (t_{23}) ;
\draw[->]  (t_{22}) edge (t_{23}) ;
\draw[->]  (t_{23}) edge (t_{24}) ;
\draw[->]  (t_{23}) edge (t_{25}) ;
\end{tikzpicture}
    }
        \vspace*{.1cm}
  \\
  \hline
\begin{lstlisting}[mathescape=true]
$a_3:  \quad y = A_2[A_1[r_0]];$
\end{lstlisting}
  &
    \adjustbox{valign=t}{
    \begin{tikzpicture}[node distance=1.8cm,auto,>=latex']
\pomsetstyle
\picloadT{t_{31}}{\Regs}{r_0}
\picloadT[right of = t_{31}]{t_{32}}{\Locs}{(A_1 + t_{31})}
\picloadTT[below = \pomsetnewline of t_{32}]{t_{33}}{\Locs}{(A_2 + t_{32})}
\picstoreT[right of = t_{33}]{t_{34}}{\Regs}{r_0}{t_{33}}
\picstoreT[right of = t_{32}]{t_{35}}{\Pc}{}{a_4}
\draw[->]  (t_{31}) edge (t_{32}) ;
\draw[->]  (t_{32}) edge (t_{33}) ;
\draw[->]  (t_{33}) edge (t_{34}) ;
\end{tikzpicture}
    }
\end{tabular}\\
Suppose the security policy labels as public the data in arrays $A_1$
and $A_2$, and in register $r_0$, and that the attacker controls the
value of $r_0$. This program is secure at the ISA level as it ensures that $r_0$ always lies within the bounds of $A_1$.
However, an attacker can fool the prediction mechanism by first supplying values of $r_0$ that execute the \emph{true} branch, and then  a value that 
exceeds the size of $A_1$. This causes the CPU to perform an
out-of-bounds memory access of sensitive data, which is later 
used as index for a second memory access of $A_2$, thus leaving a
trace into the cache.
% which can be exploited by the attacker via well-known
% techniques~\cite{DBLP:conf/uss/CanellaB0LBOPEG19}.

Branch prediction  predicts values for MIL instructions that block the
evaluation of the guard of a PC store whose
target address has been already resolved. For  $\newstate = (\Inst, \storage, \commits, \decodes, \spec, \guesses)$, we model it as:
\[
 \prediction_{br}(\newstate) = \left \{ 
    t' \mapsto v \mid
    \ass{c}{t}{\store{\Pc}{}{t_a}} \in \Inst
    \land t' \in \fn(c) \land
    \fdefined{\storage}{t_a}
  \right  \}
\]
% In order to show that branch prediction can lead to violations of the
% security condition, we use the variation of Spectre v1 above for a rich ISA supporting pointer indirection and 
% conditional branches. The victim's code consists of 
% instructions at addresses $a_1$, $a_2$ and $a_3$, which we translate to
% MIL:
% \\
Let $\newstate_0$ be the state where only the
instruction in $a_1$ has been translated.
Then $\prediction_{br}(\newstate_0)$ is empty, since $\newstate_0$
contains a single unconditional PC update (the guard of
$t_{13}$ has no free names).
The CPU may  apply rules $\exc$, $\ret$, and $\ftc$  
on $t_{13}$ without waiting
the result of $t_{11}$. This leads to a new state $\newstate_1$ which is
obtained by updating the storage
with $\storage_1 = \{t_{13} \mapsto a_2\}$,  extending the  microinstructions' list with the translation
of $a_2$, and  the snapshot with  $\spec_1 = \{t_{2i} \mapsto t_{13} \mapsto a_2 \mbox{ for } 1 \leq i \leq 5\}$, while producing the observation $\il{a_2}$.
In this state $\prediction_{br}(\newstate_1) = \{t_{23} \mapsto 0,
t_{23} \mapsto 1\}$ since the conditions of the two PC
stores (i.e., $t_{24}$ and $t_{25}$) depend on $t_{23}$ which is yet
to be resolved.
The CPU can now apply rule $\prd$ using the prediction $t_{23} \mapsto 1$, thus guessing that 
the condition is true. The new state $\newstate_2$
contains $\storage_2 = \storage_1 \cup \{t_{23} \mapsto 1\}$, $\spec_2 =  \spec_1$, and
$\guesses_2 = \{t_{23}\}$.  

The CPU can follow the speculated
branch by applying rules $\exe$  and  $\ftc$ on  $t_{24}$, which results in  state $\newstate_3$ with
$\storage_3 = \storage_2 \cup \{t_{24} \mapsto a_3\}$,
$\spec_3 = \spec_2 \cup \{t_{24} \mapsto \{t_{23} \mapsto 1\}, t_{3i} \mapsto
\{t_{24} \mapsto a_3\} \mbox{ for } 1 \leq i \leq 5
\}$, and $\decodes_3=\{t_{13},
t_{24}\}$. Additionally, it produces the observation $\il{a_3}$.

Applying rule $\exe$ on  $t_{31}$ and $t_{32}$ results in a buffer overread and produces state $\newstate_4$ 
with $\storage_4 = \storage_3 \cup \{t_{31} \mapsto r_0, t_{32} \mapsto A_1[r_0]\}$, and observation $\dl{r_0}$.
Similarly, rule $\exe$ on $t_{33}$ produces the observation
$\dl{A_2 + A_1[r_0]}$. 

Clearly, if $r_0 \geq A_1.size$,  the observation reveals memory content
outside $A_1$, allowing an attacker to learn sensitive data. Observe that this is rejected by the security condition, since
such observation is not possible in the sequential semantics. %, thus rejecting the program as insecure.

\subsubsection{Countermeasure: Serializing Instructions}\label{sec:hcm}
Serializing instructions can be modeled by constraining the scheduling of microinstructions.
For example, we can model the Intel's \emph{lfence} instruction via a function 
$\mathit{lfence}(\Inst)$
that extracts all microinstructions resulting from the translation
of lfence.  

Concretely, for $\newstate = (\Inst, \storage, \commits, \decodes, \spec,
\guesses)$, $t \in \mathit{lfence}(\Inst)$
and $\newstate \triplestep{}{}  \newstate'$, it holds that:
\begin{inparaenum}[(i)]
\item if $\fundefined{\newstate}{t}$ and $\fdefined{\newstate'}{t}$
%   (a fence is executed)
  then
 for each $\ass{c}{t'}{\load{\Locs}{t_a}} \in \newstate$ such that
 $t' < t$ 
%  (a previous load) 
 $fn(c) \subseteq \dom{\storage}
 \setminus \dom{\delta}$ 
%  (the condition has been resolved) 
 and $c \Rightarrow (\fdefined{\newstate}{t'} \wedge
 \fundefined{\spec}{t'})$; 
%  (the load has been retired)
\item if  $\fundefined{\newstate}{t'}$, $\fdefined{\newstate'}{t'}$, 
  $t' > t$,
  and $\ass{c}{t'}{\load{\Locs}{t_a}} \in \newstate$, or
  $\ass{c}{t'}{\store{\Locs}{t_a}{t_v}} \in \newstate$, or
  $\ass{c}{t'}{\store{\Regs}{t_a}{t_v}} \in \newstate$,
%   (a subsequent memory operation or  register store)
  then $\fdefined{\newstate}{t} \:\wedge\:
  \fundefined{\spec}{t}$. 
%   (the fence has been retired)
\end{inparaenum}

Intuitively, the conditions restrict the scheduling of microinstructions to ensure that: 
(i) whenever a fence is executed, all previous loads have been retired, and (ii) subsequent memory 
operations or register stores can be executed only if the fence has been retired.

% \todo{Fence's conditions are tautologically true}
% Serialization instructions can prevent Specte-PHT if they are placed
% before the load from array $A_2$.
In order to reduce the
performance overhead, several works~(e.g. \cite{pardoe2018spectre}) use static
analysis to identify necessary serialization points in a program.
In the previous example, it is sufficient to place lfence after $t_{32}$ and
before $t_{33}$.
This does not prevent the
initial buffer overread of $t_{32}$, however, it suspends $t_{33}$ until
$t_{32}$ is retired. In case of misprediction, $t_{32}$ and $t_{33}$
will be rolled back, preventing the observation $\dl{A_2 + A_1[r_0]}$ which causes the information leak.
\subsubsection{Countermeasure: Implicit Serialization}\label{sec:spectre:cmov}
An alternative countermeasure to prevent Spectre-PHT is to use instructions
that introduce implicit serialization~\cite{fogh18,miller18}. For instance, adding 
the following gadget  between instructions $a_2$ and $a_3$ in the previous example
prevents Spectre-PHT on existing Intel CPUs:\\
\begin{tabular}{ll}
  \begin{lstlisting}[mathescape=true]
// cmp
$a'_3:  f = (r_0 \ge r_1)$
\end{lstlisting}
  &
    \adjustbox{valign=t}{
    \begin{tikzpicture}[node distance=2cm,auto,>=latex']
\pomsetstyle
\picloadT{t'_{31}}{\Regs}{r_0}
\picexpT[right of = t'_{31}]{t'_{33}}{t'_{31} \ge t'_{32}}
\picloadT[right of = t'_{33}]{t'_{32}}{\Regs}{r_1}
\picstoreTT[below = \pomsetnewline of t'_{33}]{t'_{34}}{\Regs}{f}{t'_{33}}
\picstoreT[below = \pomsetnewline of t'_{32}]{t'_{35}}{\Pc}{}{a''_3}
\draw[->]  (t'_{31}) edge (t'_{33}) ;
\draw[->]  (t'_{32}) edge (t'_{33}) ;
\draw[->]  (t'_{33}) edge (t'_{34}) ;
\end{tikzpicture}
    }
            \vspace*{.1cm}
  \\
  \hline
\begin{lstlisting}[mathescape=true]
$a''_3:  cmov f, r_0, 0$
\end{lstlisting}
  &
\adjustbox{valign=t}{
    \begin{tikzpicture}[node distance=2cm,auto,>=latex']
\pomsetstyle
\picloadT{t''_{31}}{\Regs}{f}
\picstore[right of=t''_{31}]{t''_{32}}{t''_{31}=1}{\Regs}{r_0}{0}
\picstoreT[right =0.45cm of t''_{32}]{t''_{34}}{\Pc}{}{a'''_3}
\draw[->]  (t''_{31}) edge (t''_{32}) ;
\end{tikzpicture}
  }% \\
% \begin{lstlisting}[mathescape=true]
% $a'''_3:  \quad y = A_2[A_1[r_0]];$
% \end{lstlisting}&
%                   \begin{tikzpicture}[node distance=2cm,auto,>=latex']
% \pomsetstyle
% \picloadT{t'''_{31}}{\Regs}{r_0}
% \picloadT[right of = t'''_{31}]{t'''_{32}}{\Locs}{(A_1 + t'''_{31})}
% \picloadT[below = \pomsetnewline of t'''_{32}]{t'''_{33}}{\Locs}{(A_2 + t'''_{32})}
% \picstoreT[right of = t'''_{33}]{t'''_{34}}{\Regs}{r_0}{t'''_{33}}
% \picstoreT[right of = t'''_{32}]{t'''_{35}}{\Pc}{}{a_4}
% \draw[->]  (t'''_{31}) edge (t'''_{32}) ;
% \draw[->]  (t'''_{32}) edge (t'''_{33}) ;
% \draw[->]  (t'''_{33}) edge (t'''_{34}) ;
% \end{tikzpicture}
\vspace*{.2cm}
\end{tabular}\\
Intuitively, this gadget forces mispredictions to always access 
$A_1[0]$,
Consider the extension of the previous example with the gadget and suppose $\prediction_{br}$ mispredicts  $t_{23} \mapsto 1$. The instruction in $a''_3$
introduces a data dependency between $t_{11}$ and $t_{32}$ since
 $\stract$ of $t_{31}$ includes $t''_{32}$ until $t''_{31}$
has been executed;  $\stract$ of $t''_{31}$ includes $t'_{34}$;
and $\stract$ of $t'_{32}$ includes $t_{12}$. These names
(and intermediate intra-instruction dependencies) are in the free
names of some condition of a PC store, hence they cannot be predicted by
$\prediction_{br}$ and their dependencies are enforced by the
semantics. In particular, when $t_{23}$ is mispredicted as $1$, 
$t''_{32}$ is executed after that $t_{11}$ has obtained the value from
the memory. This ensures that $t''_{32}$ sets $r_0$ to $0$ every time
a buffer overread occurs. Therefore misspeculations generate the
observations $\dl{A_1 + 0}$ and $\dl{A_2 + A_1[0]}$, which
do not violate the security condition (since $A_1$ is
labeled as public).

\subsubsection{New Vulnerability: Spectre-PHT ICache}\label{sec:new-attack:fetch}
When the first Spectre attack was published, some microarchitectures
(e.g., Cortex A53) were claimed immune to the attack because of ``allowing speculative fetching but not speculative
execution'' \cite{cortexA53}.
% \todo{https://developer.arm.com/support/arm-security-updates/speculative-processor-vulnerability,
% Cache\_Speculation\_Side-channels-v2.4.pdf
% }
% 
% \footnote{https://developer.arm.com/support/arm-security-updates/speculative-processor-vulnerability/latest-updates/cache-speculation-issues-update}
The informal argument was that mispredictions cannot cause buffer overreads or leave any footprint on the cache in absence of
speculative loads.
To check this claim, we  constrain the
semantics to  only allow speculation of PC values. 
% prevent speculative execution of loads and stores that affect resources other
% than the program counter.
% \begin{definition}
Specifically, we require for any transition $(\newstate, \spec, \guesses) \triplestep{}{} (\newstate', \spec', \guesses')$ that executes a microinstruction ($\stepparam(\newstate,\newstate') = (\exe,
  t)$) which is either a load
  ($\ass{c}{t}{\load{\tau}{t_a}} \in \newstate$) or a store
  ($\ass{c}{t}{\store{\tau}{t_a}{t_v}} \in \newstate$) of a resource other than the program counter 
  ($\tau \neq \Pc$) to have an empty snapshot on past microinstructions ($\dom{\spec} \cap \{t' \mid t' < t\} = \emptyset$).
% \end{definition}

The analysis of conditional noninterference for this model led to
the identification of a class of counterexamples, which we call Spectre-PHT ICache, where
branch prediction causes leakage of sensitive data via an ICache disclosure
gadget.
% Our new model reveals that if the instruction cache, rather than the data cache, is accessed
% speculatively when an instruction is fetched, Spectre-SF is possible. 

Consider a  program that jumps to the address pointed to by
\stylecode{sec} if a user has \stylecode{admin}
privileges, otherwise it continues to address $a_3$.
\\
\begin{tabular}{ll}
\begin{lstlisting}[mathescape=true]
$a_1: r_1=*sec$
\end{lstlisting}
  &
\adjustbox{valign=t}{
\begin{tikzpicture}[node distance=2cm,auto,>=latex']
\pomsetstyle
\picloadT{t_{11}}{\Locs}{sec}
\picstoreT[right of=t_{11}]{t_{12}}{\Regs}{r_1}{t_{11}}
\draw[->]  (t_{11}) edge (t_{12}) ;

\picstoreT[right of=t_{12}]{t_{13}}{\Pc}{}{a_2}
\end{tikzpicture}
}
        \vspace*{.1cm}
  \\
  \hline
\begin{lstlisting}[mathescape=true]
$a_2: if (*admin)$
       $(*r_1)()$
\end{lstlisting}
  &
    \adjustbox{valign=t}{
\begin{tikzpicture}[node distance=1.7cm,auto,>=latex']
\pomsetstyle
\picloadT{t_{21}}{\Locs}{admin}
\picexpT[right of = t_{21}]{t_{23}}{t_{21}  \neq 1}
\picloadT[right of = t_{23}]{t_{22}}{\Regs}{r_1}
\picstore[below = \pomsetnewline of t_{21}]{t_{24}}{t_{23}}{\Pc}{}{(a_2+4)}
\picstore[below = \pomsetnewline of t_{22}]{t_{25}}{\neg t_{23}}{\Pc}{}{t_{22}}
\draw[->]  (t_{21}) edge (t_{23}) ;
\draw[->]  (t_{22}) edge (t_{25}) ;
\draw[->]  (t_{23}) edge (t_{24}) ;
\draw[->]  (t_{23}) edge (t_{25}) ;
\end{tikzpicture}
    }
    \vspace*{.2cm}
\end{tabular}

% \stylecode{$a_1$: r1=*secret; $a_2$:if (*admin == 1) (*r1)();},
 In the sequential model, an attacker that
only observes the instruction cache can see
% sequence of addresses fetched are $a_1, a_2, \Locs[secret]$ if
% $\Locs[admin] = 1$
% and $a_1, a_2, a_3$ 
the sequence of observations $\il{a_1}::\il{a_2}::\il{a_2}$ if
$*admin \neq 1$, otherwise the sequence $\il{a_1}::\il{a_2}::\il{sec}$. 
% \todo[inline]{RG: use the proper labels, we do not report
%   data-memory loads here} otherwise. 

A CPU that supports only speculative fetching may first complete all microinstructions 
% ~\todo{RG: how about using operation for MIL and
%   instructions for ISA} 
in $a_1$, and then predict the result of $t_{23}$
to enable the execution of $t_{25}$. 
As a result the PC speculatively fetches the instruction at location $sec$ although $*admin \neq
1$. The transition yields the observation sequence $\il{a_1}::\il{a_2}::\il{sec}$ which was not possible in the sequential model, thus violating the
security condition and leaking the value of $sec$ via the
instruction cache.

Intel's lfence does not stop all microarchitectural
operations, like instruction fetching. For this reason lfence may be
ineffective against leakage via ICache. In fact, InSpectre  reveals that placing
a lfence between $t_{21}$ and $t_{23}$ does not prevent the leakage:
$t_{22}, t_{23}, t_{25}$ are neither memory operations nor register
stores, hence they can be speculated before the execution of the
lfence.

\subsection{Spectre-BTB and Spectre-RSB}\label{appx:cm}
Two variants of Spectre attacks~\cite{DBLP:conf/uss/CanellaB0LBOPEG19}  exploit a CPU's prediction mechanism for
jump targets to  leak sensitive data. 
In particular, Spectre-BTB~\cite{DBLP:conf/sp/KocherHFGGHHLM019} (Branch Target Buffer) poisons the prediction
of indirect jump targets.
To model this prediction strategy we assume a function $ijmps(\Inst)$
that extracts all PC stores resulting from the translation
of indirect jumps. This can be
accomplished by making the translation of these
instructions syntactically distinguishable from other control flow
updates.
As a result, prediction is possible for all indirect jumps whose
address is yet to be resolved: Namely, 
 $\prediction_{BTB}(\Inst, \storage, \commits, \decodes, \spec,
 \guesses)=$
\[
  \left \{
    t_a \mapsto v \mid
    \ass{c}{t}{\store{\Pc}{}{t_a}} \in ijmps(\Inst) \land
    \fundefined{\storage}{t_a}
  \right  \}
\]
We do not restrict the possible predicted values $v$, since
an accurate model of jump prediction requires knowing the strategy used by the CPU to update
the BTB buffer.

Spectre-RSB~\cite{sv2,maisuradze2018ret2spec} poisons the Return Stack
Buffer (RSB), which is used to temporally store the 
$N$ most recent return addresses: \stylecode{call} instructions push the return address on the RSB,
while \stylecode{ret} instructions pop from the RSB to predict the return target.
A misprediction can happen if: ($i$)  a return address on the
stack has been explicitly overwritten, e.g., when a program
handles a software exception using \stylecode{longjmp} instructions,
or, ($ii$) returning from a call stack deeper than $N$, the RSB is
empty and the CPU uses the same prediction as for the other indirect jumps.
We model
\stylecode{call} and \stylecode{ret} instructions  via program counter stores. A \stylecode{call} to address $b_1$ from
address $a_1$ can be modeled as\\
\begin{tikzpicture}[node distance=3cm,auto,>=latex']
\pomsetstyle
\node[name=a_1]{$a_1$};
\picloadT[right =.3cm of a_1]{t_{11}}{\Regs}{sp}
\picexpT[right of = t_{11}]{t_{12}}{t_{11} - 4}
\picstoreT[right of = t_{12}]{t_{13}}{\Regs}{sp}{t_{12}}
\picexpT[below = 0.4cm of t_{11}]{t_{14}}{a_1+4}
\picstoreT[right of = t_{14}]{t_{15}}{\Locs}{t_{11}}{t_{14}}
\picstoreT[right of = t_{15}]{t_{16}}{\Pc}{}{b_1}
\draw[->]  (t_{11}) edge (t_{12}) ;
\draw[->]  (t_{12}) edge (t_{13}) ;
\draw[->]  (t_{11}) edge (t_{15}) ;
\draw[->]  (t_{14}) edge (t_{15}) ;
\end{tikzpicture}\\
The \stylecode{call} instruction saves (e.g. $t_{15}$) the return address (e.g. $a_1+4$)
into the stack, decreases the stack pointer (e.g. $t_{13}$), and jumps to
address $b_1$ (e.g. $t_{16}$).

A \stylecode{ret} instruction  from
address $a_2$ can be modeled as\\
\begin{tikzpicture}[node distance=3cm,auto,>=latex']
\pomsetstyle
\node[name=a_2]{$a_2$};
\picloadT[right =.3cm of a_1]{t_{21}}{\Regs}{sp}
\picexpT[right of = t_{21}]{t_{22}}{t_{21} + 4}
\picstoreT[right of = t_{22}]{t_{23}}{\Regs}{sp}{t_{22}}
\picloadT[below = 0.4cm of t_{21}]{t_{24}}{\Locs}{t_{22}}
\picexpT[right of = t_{24}]{t_{25}}{t_{24}}
\picstoreT[right of = t_{25}]{t_{26}}{\Pc}{}{t_{25}}
\draw[->]  (t_{21}) edge (t_{22}) ;
\draw[->]  (t_{22}) edge (t_{23}) ;
\draw[->]  (t_{22}) edge (t_{24}) ;
\draw[->]  (t_{24}) edge (t_{25}) ;
\draw[->]  (t_{25}) edge (t_{26}) ;
\end{tikzpicture}\\
The instruction loads  the return address
from the stack ( $t_{24}$), increases the stack pointer ($t_{23}$), and returns
($t_{26}$).

\mathchardef\mhyphen="2D
\newcommand{\retaddress}{\mathit{ret\mhyphen ra}}
We assume functions $calls(\Inst)$ and
$rets(\Inst)$ to extract the
PC stores that belong to
a \stylecode{call} and
\stylecode{ret} respectively. Moreover, if $t \in bn(calls(\Inst))$,
we use $\retaddress(\Inst, t)$ to retrieve name of the microinstruction that saves the return address (e.g $t_{15}$) of
the corresponding \stylecode{call}.
We  model return address prediction as   
\begin{tabbing}
\quad \= \qquad \quad \=\kill
$\prediction_{RSB}(\Inst, \storage, \commits, \decodes, \spec, \guesses)=$ \\
 \> $\{ t_a \mapsto v \mid \ass{c}{t}{\store{\Pc}{}{t_a}} \in rets(\Inst)
      \land \fundefined{\storage}{t_a}\: \land $ \\
 \> \> $\exists t' \in bn(calls(\Inst)) . \: t' < t \land \:
      \storage(\retaddress(\Inst, t')) = v\
      \land$ \\
 \> \> $\textit{RSB-depth}(\Inst, t',t) \subseteq \{1 \dots N\}\}$
\end{tabbing}
%
%  \[
%  \left \{
%    t_a \mapsto v \mid
%    \begin{array}{l}
%    \ass{c}{t}{\store{\Pc}{}{t_a}} \in rets(\Inst)
 %     \land \fundefined{\storage}{t_a}\: \land \\
%      \exists t' \in bn(calls(\Inst)) . \: t' < t \land \:
%      \storage(\retaddress(\Inst, t')) = v
%      \land \\
%      \textit{RSB-depth}(\Inst, t',t) \subseteq \{1 \dots N\}
 %       \\
 %   \end{array}
 %     \right  \}
       %        \]
%\vspace*{-.1cm}
Prediction is possible only for \stylecode{ret} microinstructions $t$ that
have a prior matching \stylecode{call} $t'$, provided that the size of
intermediary stack depth is between $1$ and  $N$.
We define the latter as the set
$
\textit{RSB-depth}(\Inst, t', t) = \{
\#(
  bn(calls(\Inst)) \cap \{t' \dots t''\}
  ) - \#(
  bn(rets(\Inst)) \cap \{t' \dots t''\}
  ) \mid t' \leq t'' < t
\}
$, where $\{t' \dots t''\}$ is an arbitrary continuous sequence of
 names starting from $t'$ and ending before $t''$, and
$\#(
  bn(calls(\Inst)) \cap \{t' \dots t''\}
  )$ and 
  $\#(
  bn(rets(\Inst)) \cap \{t' \dots t''\}
  )$ count the number of \stylecode{call}s and \stylecode{ret}s in the
  sequence respectively.
The prediction consists in
assuming the target address (e.g. $t_a$) of the \stylecode{ret} to be equal to the return
address (e.g. $v$) that has been pushed into the stack by the matching \stylecode{call}.

In some microarchitectures (e.g.,~\cite{skylane}), 
 RSB prediction
falls back to the BTB in case of underflow, i.e., when returing from a
function with nested stack deeper than $N$. This case can be
modeled by considering the prediction strategy $\prediction_{RSB/BTB}$  defined as
\begin{tabbing}
\quad \= \qquad \=\kill
  $\prediction_{RSB} \bigcup \{ t_a \mapsto v \mid \ass{c}{t}{\store{\Pc}{}{t_a}} \in rets(\Inst)
    \land \fundefined{\storage}{t_a}\: \land $\\ 
  \> $\exists t' \in bn(calls(\Inst)) .\: t' < t \land \:
  \textit{RSB-depth}(\Inst, t',t) = \{1 \dots N'\} \land \: N' > N\}$
\end{tabbing}
% \todo{6.R.1: Fall back RSB}

The following  example shows how jump target prediction
may violate the security condition.
Consider the program \stylecode{*p:=\&f; (*p)()} that saves the address 
of a function (i.e., $\&f$) in a
function pointer at constant address $p$ and immediately invokes the function.
Assuming that these instructions are stored at addresses $a_1$ and $a_2$, their MIL
translation is:

%\vspace{-0.3cm}

\begin{pomset}
\begin{center}
\begin{tikzpicture}[node distance=3cm,auto,>=latex']
\pomsetstyle
\node[name=a'_1]{$a_1$};
\picstoreT[right =.3cm of a'_1]{t_{11}}{\Locs}{p}{\&f}
\picstoreT[right of = t_{11}]{t_{12}}{\Pc}{}{a_2}
\node[name=a'_2,below =\pomsetnewline of a'_1]{$a_2$};
\picloadT[right =.3cm of a'_2]{t_{21}}{\Locs}{p}
\picexpT[right of = t_{21}]{t_{22}}{t_{21}}
\picstoreT[right of = t_{22}]{t_{23}}{\Pc}{}{t_{22}}
\draw[->]  (t_{21}) edge (t_{22}) ;
\draw[->]  (t_{22}) edge (t_{23}) ;
\end{tikzpicture}
\end{center}
%\vspace*{-.4cm}
\caption{\stylecode{*p:=\&f; (*p)()}}
\label{example:spectre:v2}
\end{pomset}

%\vspace{-0.3cm}

Because our semantics can predict only internal operations (see rule $\prd$),  
the translation function introduces an additional internal operation, i.e., $t_{22}$ which allows 
predicting the value of the load $t_{21}$.

Suppose that the function $f$ simply returns and 
the security policy labels all data, except 
the program counter, as sensitive. The program is secure (at the
ISA level) 
% \todo{Why only at ISA level?} 
as it always transfers control to $f$,
producing the sequence of observations
$\il{a_1}::\ds{p}::\il{a_2}::\dl{p}::\il{\&f}$ independently of the
initial state.

Jump target prediction produces a different behavior.
Let $\newstate_0$ be the state containing only the translation of the
instruction in $a_1$.
Initially, $\prediction_{BTB}(\newstate_0)$ is empty since the state
contains no PC updates (e.g. $t_{12}$) that result from
translating indirect jumps.
The CPU may execute and fetch $t_{12}$, thus adding
$t_{21}$, $t_{22}$, and $t_{23}$ to the set of microinstructions $\Inst$.
In the resulting state $\prediction_{BTB}$ is $\{t_{22} \mapsto v
\mid v \in \Val\}$, since $t_{23}$ models an indirect jump and
$t_{22}$ has not been executed.
The CPU can therefore predict the value of $t_{22}$ without
waiting for the result of the load $t_{21}$.
If the predicted value is the address $g$ of the instruction
\stylecode{$r_1$:=*($r_2$)} the misprediction can use $g$ as 
gadget to leak sensitive information.
\begin{center}
\begin{tikzpicture}[node distance=3cm,auto,>=latex']
\pomsetstyle
\node[name=g]{$g$};
\picloadT[right =.3cm of g]{t_{31}}{\Regs}{r_2}
\picloadT[right of = t_{31}]{t_{32}}{\Locs}{r_1}
\picstoreT[right of = t_{32}]{t_{33}}{\Regs}{r_1}{t_{32}}
\draw[->]  (t_{31}) edge (t_{32}) ;
\draw[->]  (t_{32}) edge (t_{33}) ;
\end{tikzpicture}
\end{center}
In fact, the speculative semantics can produce the sequence of
observations $\il{a_1}::\ds{p}::\il{a_2}::\dl{p}::\il{g}::\dl{v}$,
where $v$ is the initial value of register $r_2$. The last
observation  of the sequence allows an attacker to learn sensitive data.
Observe that this leak is readily captured by the security condition, since
such observation sequence is not possible in the sequential semantics. % thus rejecting the program as insecure.

\subsubsection{Countermeasure: Retpoline}
A known countermeasure to Spectre-BTB is the \emph{Retpoline} technique developed by Google~\cite{retpoline}. 
In a nutshell, retpolines are instruction snippets that isolate indirect jumps from speculative execution
via call and return instructions.
% \emph{Retpoline} counters Spectre-BTB by emulating indirect jumps 
% via call and return instructions. 
% The following MIL program models a single indirect
% branch which (cf. Section~\ref{sec:spectre:btb}) is vulnerable to Spectre-BTB:
% \begin{center}
% \begin{tikzpicture}[node distance=3.5cm,auto,>=latex']
% \pomsetstyle
% \picloadT{t_1}{\Regs}{r_{11}}
% \picexpT[right of = t_1]{t_2}{t_1}
% \picstoreT[right of = t_2]{t_3}{\pc}{}{t_2}
% \draw[->]  (t_1) edge (t_2) ;
% \draw[->]  (t_2) edge (t_3) ;
% \end{tikzpicture}
% \end{center}
% In fact, predictions $\prediction_{jmp}$ that contain
% $t_2 \mapsto v$ for arbitrary $v$ may redirect the control flow to arbitrary gadgets.
Retpoline has the effect of  transforming  indirect jumps at address
$a_2$ of Example~\ref{example:spectre:v2} as:\\
\begin{tikzpicture}[node distance=3cm,auto,>=latex']
\pomsetstyle
\node[name=a_2]{$a_2$};
\picloadT[right =.3cm of a_2]{t_{21}}{\Regs}{sp}
\picexpT[right of = t_{21}]{t_{22}}{t_{21} - 4}
\picstoreT[right of = t_{22}]{t_{23}}{\Regs}{sp}{t_{22}}
\picexpT[below = \pomsetnewline of t_{21}]{t_{24}}{a_{3}}
\picstoreT[right of = t_{24}]{t_{25}}{\Locs}{t_{21}}{t_{24}}
\picstoreT[right of = t_{25}]{t_{26}}{\pc}{}{b_1}
\draw[->]  (t_{21}) edge (t_{22}) ;
\draw[->]  (t_{22}) edge (t_{23}) ;
\draw[->]  (t_{21}) edge (t_{25}) ;
\draw[->]  (t_{24}) edge (t_{25}) ;
\end{tikzpicture}\\
\begin{tikzpicture}[node distance=3.5cm,auto,>=latex']
\pomsetstyle
\node[name=a_1]{$a_3$};
\picstoreT[right =.3cm of a_1]{t_{31}}{\Pc}{}{a_3}
\end{tikzpicture}\\
\begin{tikzpicture}[node distance=3cm,auto,>=latex']
\pomsetstyle
\node[name=b_1]{$b_1$};
\picloadT[right =.3cm of b_1]{t_{41}}{\Regs}{sp}
\picexpT[right of = t_{41}]{t_{42}}{t_{41} + 4}
\picloadT[below = \pomsetnewline of t_{41}]{t_{43}}{\Locs}{p}
\picexpT[right of = t_{43}]{t_{44}}{t_{43}}
\picstoreT[right of = t_{42}]{t_{45}}{\Locs}{t_{42}}{t_{44}}
\picstoreT[right of = t_{44}]{t_{46}}{\Pc}{}{b_2}
\draw[->]  (t_{41}) edge (t_{42}) ;
\draw[->]  (t_{42}) edge (t_{45}) ;
\draw[->]  (t_{43}) edge (t_{44}) ;
\draw[->]  (t_{44}) edge (t_{45}) ;
\end{tikzpicture}\\
\begin{tikzpicture}[node distance=3cm,auto,>=latex']
\pomsetstyle
\node[name=b_2]{$b_2$};
\picloadT[right =.3cm of a_1]{t_{51}}{\Regs}{sp}
\picexpT[right of = t_{51}]{t_{52}}{t_{51} + 4}
\picstoreT[right of = t_{52}]{t_{53}}{\Regs}{sp}{t_{52}}
\picloadT[below = \pomsetnewline of t_{51}]{t_{54}}{\Locs}{t_{52}}
\picexpT[right of = t_{54}]{t_{55}}{t_{54}}
\picstoreT[right of = t_{55}]{t_{56}}{\Pc}{}{t_{55}}
\draw[->]  (t_{51}) edge (t_{52}) ;
\draw[->]  (t_{52}) edge (t_{53}) ;
\draw[->]  (t_{52}) edge (t_{54}) ;
\draw[->]  (t_{54}) edge (t_{55}) ;
\draw[->]  (t_{55}) edge (t_{56}) ;
\end{tikzpicture}\\
Instruction at $a_2$ calls a \emph{trampoline} starting at address $b_1$ and
instruction at $a_3$ loops indefinitely.
The first instruction of the trampoline overwrites the return address on the stack with
the value at address $p$ and its second instruction at $b_2$
returns.

We leverage our model to analyze the effectiveness of Retpoline for
indirect jumps.
Since address $b_1$ is known at compile time, $t_{26}$ does not trigger a
jump target prediction.
While executing the trampoline, the value of $t_{55}$ may be mispredicted,
especially if the load from $p$ has not been executed and the store $t_{45}$
is postponed.
However, $b_2$ is a \verb|ret|, hence
the value of $t_{55}$ is predicted via $\prediction_{RSB}$. Since there is no
call between $a_1$ and $b_2$, then prediction can only assign the address $a_3$ to
$t_{55}$ (i.e., $\proj{\prediction_{RSB}}{t_{55}} \subseteq \{t_{55} \rightarrow a_3\}$).
Therefore, the RSB entry generated by $a_2$ is used
and mispredictions are captured with the infinite loop in $a_3$. Ultimately, when
the value of $t_{55}$ is resolved, the correct return address is used
and the control flow is redirected to the value of $*p$, as expected.

%%% Local Variables:
%%% mode: latex
%%% TeX-master: "ccs2020-main"
%%% End:

\subsection{Spectre-STL}
\label{sec:new-attack:v4}
Spectre-STL~\cite{spectrev4}
(Store-To-Load) exploits the CPUs mechanism to predict load-to-store data dependencies. A load  cannot be
executed before executing all the past (in program order) stores that
affect the same memory address. However, if the
address of a past store has not been resolved, the CPU
may execute the load in speculation without waiting for the store,
predicting that the target address of the store is different from the
load's address. Mispredictions cause store bypasses 
leading to information leaks and access to stale data.
This behavior can be modeled as $\prediction_{STL}(\newstate, \spec,
  \guesses) = $
    \[
  \left \{
    t_a \mapsto a \mid
    \begin{array}{l}
      \ass{c'}{t'}{\load{\Locs}{t'_a}} \in \newstate \land  \newstate(t'_a) \neq  a\\
      \ass{c}{t}{\store{\Locs}{t_a}{t_v}} \in \stract(\newstate, t') \land \\
      \fundefined{\newstate}{t_a}
    \end{array}
      \right  \}
    \]
A prediction occurs whenever a memory store ($t$) is waiting an unresolved address
($\fundefined{\newstate}{t_a}$), while the address ($\storage(t'_a)$) of
a subsequent load ($t'$) has been resolved, and the load  may depend on
the store ($t \in bn(\stract(\newstate, t'))$).
Prediction guesses that the store's address ($t_a$) differs with the load's
address.

\subsubsection{Hardware countermeasures to Store Bypass}
The specification of proposed  hardware countermeasures  oftentimes comes with no precise semantics and is ambiguous. 
ARM introduced  the Speculative Store
Bypass Safe (SSBS) configuration to prevent store bypass
vulnerabilities. The specification of SSBS~\cite{ssbs} is:
\emph{
Hardware is not permitted to load \dots speculatively, in a
manner that could \dots give rise to a \dots side
channel, using an address derived from a register value that has been
loaded from memory \dots (L) that speculatively
reads an entry from earlier in the coherence order from that location
being loaded from than the entry generated by the latest store (S) to
that location using the same virtual address as L.
}
% \begin{quote}
% Hardware is not permitted to load or store speculatively, in a
% manner that could practically give rise to a cache timing side
% channel, using an address derived from a register value that has been
% loaded from memory using a load instruction (L) that speculatively
% reads an entry from earlier in the coherence order from that location
% being loaded from than the entry generated by the latest store (S) to
% that location using the same virtual address as L.
% \end{quote}

InSpectre provides a ground to formalize the behavior of  these hardware
mechanisms.
We formalize SSBS as follows.
Let $\newstate = (\Inst, \storage, \commits, \decodes, \spec, \guesses)$  and $\ass{c}{t}{\load{\restype}{t_a}} \in
\newstate$. 
If $\newstate \triplestep{l}{}  \newstate'$, $\fundefined{\newstate}{t}$, and 
$\fdefined{\newstate'}{t}$,
then
for every $t' \in \sources(t,\newstate)$, if $\newstate(t') \neq \newstate(t_a)$
then $t' \not \in \guesses$.

The reason why SSBS prevents Spectre-STL is simple. The rule
forbids the execution of a load $t$ if any address used to 
identify the last store affecting $t_a$  has been
predicted to differ from $t_a$.
% For instance, in the example of Section~\ref{sec:dep} SSBS stalls the execution of $t_{42}$
% while one of the previous address is guessed: i.e. $\{t_{11}, t_{21},
% t_{31}\} \cap \guesses \neq \emptyset$.
% Observe that SSBS does not prevent Spectre-STL-D, where, in order to
% enable store forward,  the CPU predicts that a store
% affects the same address of a subsequent load.

\subsubsection{New Vulnerability: Spectre-STL-D}
Our model reveals that if a microarchitecture
mispredicts \emph{the existence of a Store-To-Load Dependency} (hence Spectre-STL-D), e.g., in
order to forward temporary store results, a similar vulnerability may
be possible. To model this behavior it is enough to substitute
$\newstate(t'_a) \neq  a$ with $\newstate(t'_a) =  a$ in
$\prediction_{STL}$.
We consider this a new form of Spectre because the implementation of
this microarchitectural feature can be substantially different from
the one required for Spectre-STL (e.g., Feiste et al.~\cite{feiste}
patented a mechanism to implement this feature) and because the
vulnerable programs are different. 
% )

% To this end, we define $\prediction_{STLD}(\newstate, \spec,
%   \guesses)$ as:
%     \[
%   \left \{
%     t_a \mapsto a \mid
%     \begin{array}{l}
%       \ass{c'}{t'}{\load{\Locs}{t'_a}} \in \newstate \land  \newstate(t'_a) =  a\\
%       \ass{c}{t}{\store{\Locs}{t_a}{t_v}} \in \stract(\newstate, t') \land \\
%       \fundefined{\newstate}{t_a}
%     \end{array}
%       \right  \}
%     \]
% A prediction occurs whenever a memory store ($t$) is waiting an unresolved address
% ($\fundefined{\newstate}{t_a}$), while the address ($\storage(t'_a) = a$) of
% a subsequent load ($t'$) has been resolved, and the load  may depend on
% the store ($t \in bn(\stract(\newstate, t'))$).
% Prediction guesses that the store's address ($t_a$) matches the load's
% address.

This feature may cause Spectre-STL-D if a misspeculated dependency is used to perform subsequent
memory accesses.  Consider the following program:\\
\begin{tabular}{ll}
\begin{lstlisting}[mathescape=true]
$a_1:$*(*$b_1$):=$sec$
\end{lstlisting}
  &
    \adjustbox{valign=t}{
\begin{tikzpicture}[node distance=1.6cm,auto,>=latex']
\pomsetstyle
\picloadT{t_{11}}{\Locs}{b_1}
\picexpT[right =0.4cm of t_{11}]{t_{12}}{t_{11}}
\picstoreT[right =0.4cm of t_{12}]{t_{13}}{\Locs}{t_{12}}{t_{sec}}
\picstoreT[right of=t_{13}]{t_{14}}{\Pc}{}{a_2}
\draw[->]  (t_{11}) edge (t_{12}) ;
\draw[->]  (t_{12}) edge (t_{13}) ;
\end{tikzpicture}}
        \vspace*{.1cm}
  \\
  \hline
\begin{lstlisting}[mathescape=true]
$a_2: r_1$:=*(*$b_2$)
\end{lstlisting}
  &
    \adjustbox{valign=t}{
    \begin{tikzpicture}[node distance=2cm,auto,>=latex']
\pomsetstyle
\picloadT{t_{21}}{\Locs}{b_2}
\picloadT[right of= t_{21}]{t_{22}}{\Locs}{t_{21}}
\picstoreT[right of= t_{22}]{t_{23}}{\Regs}{r_1}{t_{22}}
\draw[->]  (t_{21}) edge (t_{22}) ;
\draw[->]  (t_{22}) edge (t_{23}) ;
\end{tikzpicture}
    }
    \vspace*{.2cm}
\end{tabular}
\\
If the CPU executes and fetches $t_{14}$, predicts that $t_{12} = b_2$ (i.e., it mispredicts the
alias \stylecode{*$b_1$==*$b_2$}), executes $t_{13}$, forwards the
result of $t_{13}$ to $t_{21}$, and executes $t_{22}$ 
before the load $t_{11}$ is retired, then the address accessed by
$t_{22}$ depends on $t_{sec}$.
This can produce the secret-dependent sequence of observations
$\il{a_1}::\il{a_2}::\dl{sec}$,
while the sequential semantics always produces the
secret-independent sequence of observations $\il{a_1}::\dl{b_1}::\ds{*b_1}::\il{a_2}::\dl{b_2}::\dl{*b_2}$.
Notice that SSBS may not be effective against Spectre-STL-D.

\subsection{New Vulnerability: Spectre-OoO}\label{sec:newvul:ooo}
\label{sec:newattack:ooo}
A popular countermeasure to prevent sensitive data from 
affecting the execution time and caches
is ``constant time programming'', also known as ``data oblivious
algorithms''. This mechanism ensures that branch conditions 
and  memory addresses are independent of sensitive data.
The following definition formalizes ``ISA constant time''
while abstracting from the specific ISA:
\newcommand{\bisim}{\approx}
\begin{definition}
  \label{def:ct}
  A program is ``ISA constant time'' if for every pair of states $\newstate_1 \sim  \newstate_2$ and every pair of in-order  executions of length
  $n$, $\newstate_1 \singlestep{}{}^n \newstate'_1$ and $\newstate_2
  \singlestep{}{}^n \newstate'_2$, it is the case that $\newstate'_1 \bisim_{ISA} \newstate'_2$, where
  $(\Inst, \storage, \commits, \decodes) \bisim_{ISA} (\Inst', \storage', \commits',
  \decodes')$ iff
  \begin{enumerate}
  \item $\Inst = \Inst'$, $\commits=\commits'$, $\decodes=\decodes'$:
    the sets of microinstructions,  commits and  decodes  are equal.
  \item If $\ass{c}{t}{\load{\Locs}{t_a}} \in \Inst$ or
    $\ass{c}{t}{\store{\Locs}{t_a}{t_v}} \in \Inst$
    then 
    $\den{c}\newstate = \den{c}\newstate'$,
    $\fdefined{\storage}{t} = \fdefined{\storage'}{t}$,
    (whenever defined, guards evaluate the same, and memory operations execute in lockstep)
    and $\den{c}\newstate \Rightarrow (\newstate(t_a) =
    \newstate'(t_a))$
    (the same values are used to address memory)
  \item If $\ass{c}{t}{\store{\Pc}{t_v}} \in \Inst$
    then 
    $\den{c}\newstate = \den{c}\newstate'$,
    $\fdefined{\storage}{t} = \fdefined{\storage'}{t}$,
    and $\den{c}\newstate \Rightarrow (\newstate(t_v) =
    \newstate'(t_v))$
    (the same values are used to update the PC)
  \end{enumerate}
\end{definition}
The following program (and its MIL translation) exemplifies this policy. It loads register $r_1$ from address $b_1$, copies
the value of $r_1$ in $r_2$ if the flag
$z$ is set, and saves the result into $b_2$.\\
\begin{tabular}{ll}
\begin{lstlisting}[mathescape=true]
$a_1: r_1 = *b_1;$
\end{lstlisting}
  &
    \adjustbox{valign=t}{
 \begin{tikzpicture}[node distance=2cm,auto,>=latex']
\pomsetstyle
\picloadT{t_{11}}{\Locs}{b_1}
\picstoreT[right of= t_{11}]{t_{12}}{\Regs}{r_1}{t_{11}}
\picstoreT[right of=t_{12}]{t_{13}}{\Pc}{}{a_2}
\draw[->]  (t_{11}) edge (t_{12}) ;
\end{tikzpicture}
    }
            \vspace*{.1cm}
  \\
  \hline
  \begin{lstlisting}[mathescape=true]
$a_2: cmov \: z, \: r_2, \:r_1;$ 
\end{lstlisting}
  &
    \adjustbox{valign=t}{
\begin{tikzpicture}[node distance=2cm,auto,>=latex']
\pomsetstyle
\picloadT{t_{21}}{\Regs}{z}
\picload[right of=t_{21}]{t_{22}}{t_{21}=1}{\Regs}{r_1}
\picstoreT[right =0.6cm of t_{22}]{t_{24}}{\Pc}{}{a_3}
\picstore[below = \pomsetnewline of t_{22}]{t_{23}}{t_{21}=1}{\Regs}{r_2}{t_{22}}
\draw[->]  (t_{21}) edge (t_{22}) ;
\draw[->]  (t_{21}) edge (t_{23}) ;
\draw[->]  (t_{22}) edge (t_{23}) ;
\end{tikzpicture}
    }
            \vspace*{.1cm}
  \\ \hline
  \begin{lstlisting}[mathescape=true]
$a_3: *b_2 = r_2;$
\end{lstlisting}
  &
    \adjustbox{valign=t}{
\begin{tikzpicture}[node distance=2cm,auto,>=latex']
\pomsetstyle
\picloadT{t_{31}}{\Regs}{r_2}
\picstoreT[right of=t_{31}]{t_{32}}{\Locs}{b_2}{t_{31}}
\draw[->]  (t_{31}) edge (t_{32}) ;
\end{tikzpicture}
    }
    \vspace*{.2cm}
\end{tabular}
\\
Suppose that flag $z$ contains sensitive information and the attacker
observes only the data cache.
The ``conditional move'' instruction in $a_2$  executes in constant
time~\cite{intelct} and is used to re-write branches that may leak
information via the execution time or the instruction cache.
This allows the program to always access address $b_1$ and
$b_2$ unconditionally and execute always the same ISA instructions:
In the sequential model the program always produces the sequence of
observations $\dl{b_1}::\ds{b_2}$.

Programs that are ISA constant time could be insecure in presence of
speculation, as demonstrated by Spectre-PHT~\cite{DBLP:conf/sp/KocherHFGGHHLM019}.
Perhaps surprisingly, it turns out that ISA constant time is not secure even for the
OoO model, in absence of speculation.
In fact, our analysis of conditional noninterference for ISA constant
time programs in the OoO model led to the identification of a
class of vulnerable programs, where secrets influence the existence of
data dependency between registers.
The above program exemplifies this problem:  the data dependency between $t_{11}$ and $t_{32}$ exists only
if $z$ is set. Concretely, consider two states $\newstate_0$ and $\newstate_1$ in which $z=0$ and $z=1$, respectively. Then, $\stract(\newstate_1,t_{31}) = \{t_{23}\}$ and 
$\stract(\newstate_1, t_{23}) = \{t_{12}\}$, while
% $\stract(\newstate_0, t_{31}) =  \{r_2\}$.
$\stract(\newstate_0, t_{31})$ is the microinstruction representing
the initial value of $r_2$.
Therefore, state $\newstate_0$ may produce the observation sequence $\ds{b_2}::\dl{b_1}$ 
only if the flag $z=0$, thus leaking its value through the data cache.

\subsubsection{MIL Constant Time}
Spectre-OoO
% , as discussed in Section~\ref{sec:newvul:ooo}, shows that 
% constant-time execution for ISA instructions is insufficient to enforce security at the  microarchitectural level, e.g., if data dependencies between registers are influenced by secrets.
% this 
motivates the need for a new microarchitecture-aware definition of constant time. 
% that is aware of microarchitectural 
% features.

\begin{definition}
  \label{def:ct}
  A program is ``MIL constant time'' if for every pair of states $\newstate_1 \sim  \newstate_2$ and every pair of in-order  executions of length
  $n$, $\newstate_1 \singlestep{}{}^n \newstate'_1$ and $\newstate_2
  \singlestep{}{}^n \newstate'_2$, it is the case that $\newstate'_1 \bisim_{MIL} \newstate'_2$, where
  $(\Inst, \storage, \commits, \decodes) \bisim_{MIL} (\Inst', \storage', \commits',
  \decodes')$ iff
  \begin{enumerate}
  \item $(\Inst, \storage, \commits, \decodes) \bisim_{ISA} (\Inst', \storage', \commits',
  \decodes')$
  \item If $\ass{c}{t}{\load{\Regs}{t_a}} \in \Inst$ or
    $\ass{c}{t}{\store{\Regs}{t_a}{t_v}} \in \Inst$
    then 
    $\den{c}\newstate = \den{c}\newstate'$,
    $\fdefined{\storage}{t} = \fdefined{\storage'}{t}$,
    and $\den{c}\newstate \Rightarrow (\newstate(t_a) =
    \newstate'(t_a))$
  \end{enumerate}
\end{definition}
Notice that in addition to standard requirements of constant time, MIL constant time
requires that starting from two
$\sim$-indistinguishable states the program makes the same 
accesses to registers.
%
% Then we prove that 
MIL constant time is sufficient to ensure security in the
OoO model:
\begin{theorem}\label{thm:ct}
  If a program $P$ is MIL constant time then
  $P$ is conditionally noninterferent in the OoO model.
\end{theorem}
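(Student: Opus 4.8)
The plan is to prove the stronger claim that, whenever $P$ is MIL constant time and $\newstate_1 \sim \newstate_2$, every OoO execution $\rho_1 = \newstate_1 \doublestep{}{} \cdots$ is matched by an OoO execution $\rho_2 = \newstate_2 \doublestep{}{} \cdots$ with $\trace(\rho_1) = \trace(\rho_2)$. This yields conditional noninterference in the OoO model directly: since $\bisim_{MIL}$ refines $\bisim_{ISA}$, the (deterministic) in-order runs of $\newstate_1$ and $\newstate_2$ already agree step by step on load/store addresses and PC values, hence on observations, so the antecedent of Definition~\ref{def:ci} is automatically satisfied and only the target-model conclusion is left. I would construct $\rho_2$ by \emph{mimicking the schedule} of $\rho_1$: at each step I replay the same rule $\alpha$ on the same name $t$, i.e. the sequence of $\stepparam(\cdot,\cdot)$ values produced by $\rho_1$. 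The core of the argument is an invariant $\mathcal{R}$ on pairs of OoO states that keeps the replayed step enabled and forces it to emit the same observation.

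I would take $\mathcal{R}(\newstate_1,\newstate_2)$, for $\newstate_i=(\Inst_i,\storage_i,\commits_i,\decodes_i)$, to require: (i) $\Inst_1=\Inst_2$, $\commits_1=\commits_2$, $\decodes_1=\decodes_2$; (ii) lockstep definedness $\fdefined{\storage_1}{t}\Leftrightarrow\fdefined{\storage_2}{t}$ for every name $t$; and (iii) agreement on every \emph{observation- and dependency-relevant} quantity: for each $\ass{c}{t}{o}\in\Inst_1$ the guard valuations coincide ($\den{c}\newstate_1=\den{c}\newstate_2$ where defined), and whenever the guard holds the load/store \emph{addresses} and the PC-store \emph{values} recorded in $\storage_1$ and $\storage_2$ are equal. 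Crucially $\mathcal{R}$ does \emph{not} require the data payload of a store or the value returned by a load to agree, so the secret is free to differ; it pins down exactly what drives $\strfor$, $\stract$, the fetch address, and the three observation forms $\dl{\cdot}$, $\ds{\cdot}$, $\il{\cdot}$. Bootstrapping $\newstate_1\sim\newstate_2\Rightarrow\mathcal{R}$ follows from the initial-state convention: the seeding stores have literal (hence equal) addresses and are already in $\storage$ and $\commits$, while $\sim$ equates their public values and $\mathcal{R}$ tolerates differing secret payloads.

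For preservation I would case on the rule named by $\stepparam$. For an internal $\exe$ the guards agree by (iii), so the step is jointly enabled; the computed value may differ, but (ii) keeps definedness in lockstep and no observation is emitted. For a load $\exe$, agreement of all store addresses and guards in (iii) makes $\strfor(\newstate,t)$, and hence $\stract(\newstate,t)$, \emph{identical} in the two states, so the active store is the singleton with the same name $t_s$, the load binds to the same source, its address agrees, and the label (if any) is the same $\dl{\newstate(t_a)}$. Rule $\cmt$ inspects only $\strfor(\newstate,t)$ and $\commits$, which coincide, so stores commit in the same per-location order with identical $\ds{\cdot}$ labels; rule $\ftc$ fetches at $\storage(t)=a$, equal by (iii), with $max(\Inst)$ equal by (i), so the same $\Inst'=\translate(a,max(\Inst))$ is spliced in, preserving $\Inst_1=\Inst_2$ and emitting the same $\il{a}$. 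In every case the observation and the post-state relation $\mathcal{R}$ are preserved, closing the (co)induction over the schedule.

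The main obstacle is the gap between the hypothesis and the invariant: MIL constant time constrains only \emph{in-order} executions, whereas $\mathcal{R}$ must hold along arbitrary \emph{out-of-order} schedules. I would bridge it by exploiting order-independence of the relevant values: by Lemma~\ref{lem:deps-eq} the result a microinstruction computes is a function of its resolved dependencies alone, not of the schedule, so the address, guard, and PC value attached to any executed $t$ coincide with those $t$ receives in the in-order run, where MIL constant time guarantees agreement across $\newstate_1$ and $\newstate_2$. The delicate point is that this presupposes the \emph{active-store resolution} is the same in both states, i.e. that no secret perturbs which store a load binds to; this is precisely what clause~(2) of MIL constant time adds over ISA constant time by forcing register loads and stores into lockstep, ruling out the secret-controlled register dependency behind Spectre-OoO of Section~\ref{sec:newvul:ooo}. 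Memory consistency (Theorem~\ref{thm:ooo:co}) can be invoked to align the out-of-order commit order with that of the in-order witness when transferring this agreement.
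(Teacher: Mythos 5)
Your proposal matches the paper's proof in both structure and substance: the paper likewise establishes a bisimulation on pairs of OoO states (its relation $\ctbisimrel$ is exactly $\bisim_{MIL}$ augmented with the requirement that both states are reached in the same number of steps from $\sim$-related initial states), proceeds by case analysis on the rule returned by $\stepparam$, uses agreement of guards and store addresses to equate $\strfor$ and $\stract$ in the load/commit/fetch cases, and bridges the in-order hypothesis to arbitrary OoO schedules exactly as you suggest --- by extending both OoO prefixes to ``completed'' states, serializing them via the reordering machinery behind Theorem~\ref{thm:ooo:co}, applying MIL constant time to the resulting equal-length in-order runs, and pulling the agreement back by monotonicity of the storage. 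Your sketch is correct and essentially the paper's argument, differing only in that the paper makes the same-length-reachability component of the invariant explicit and spells out the $n+1+m$-step serialization in detail.
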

% The proof (see Appendix~\ref{appendix:proof:ct}) shows that $\bisim$ is a bisimulation for the
% OoO transition relation.
%
The theorem % demonstrates that MIL constant time
% is a sound countermeasure against side channels in microarchitectures that support OoO
% execution. Moreover, it
enables the enforcement of  conditional noninterference for
the OoO model by verifying MIL constant time in the sequential model.
This strategy has the advantage of performing the verification in the sequential model, which is deterministic, thus making it easier to
reuse existing tools for binary code analyses~\cite{DBLP:conf/ccs/BalliuDG14}.

Finally, we remark that MIL constant time is microarchitecture aware. This means that the same ISA program may or may not
satisfy MIL constant time when translated to a given
microarchitecture. In fact, the MIL translation of conditional move
above is not MIL constant time because of the dependency between  the sensitive value in $t_{21}$ and conditional store in $t_{23}$.
However, if a microarchitecture translates the same conditional move
as below, the translation is clearly MIL constant time.
%\vspace{-0.25cm}
\begin{center}
\begin{tikzpicture}[node distance=3cm,auto,>=latex']
\pomsetstyle
\picloadT{t_{1}}{\Regs}{z}
\picloadT[right of=t_{1}]{t_{2}}{\Regs}{r_2}
\picloadT[right of=t_{2}]{t_{3}}{\Regs}{r_1}
\picstoreTT[below = \pomsetnewline of t_{2}]{t_{4}}{\Regs}{r_2}{((\neg t_1 * t_{2}) + (t_{1} * t_3))}
\draw[->]  (t_{1}) edge (t_{4}) ;
\draw[->]  (t_{2}) edge (t_{4}) ;
\draw[->]  (t_{3}) edge (t_{4}) ;
\end{tikzpicture}
\end{center}

\section{Related Work}
\textbf{Speculative semantics and foundations}
Several works have recently addressed the formal foundations of specific
forms of speculation to capture Spectre-like vulnerabilities.
Cheang et al.~\cite{secspec}, Guarnieri et al.~\cite{guarnieri2018spectector}, and Mcilroy et al.~\cite{mcilroy2019spectre}
propose semantics that support branch prediction, thus modeling only Spectre v1. Neither work supports
speculation of target address, speculation of dependencies, or
OoO execution. 
% Similarly, Mcilroy et al.~\cite{mcilroy2019spectre} limit their model
% to cover branch prediction.
Disselkoen at al.~\cite{Disselkoen2019TheCT} propose a pomset-based semantics that supports OoO
execution and branch prediction. Their model targets a higher
abstraction level modeling memory references using logical
program variables. Hence, the model cannot support dynamic dependency resolution, dependency prediction, and speculation of
target addresses.

Like us, Cauligi et al.~\cite{cauligi2019towards} propose a model that captures existing variants of Spectre and independently discover
a vulnerability similar to our Spectre-STL-D. Remarkably, they demonstrate the feasibility of the attack on Intel Broadwell and 
Skylake processors. A key difference between the two models is that Cauligi et al. impose sequential order to instruction retire and memory stores.
While simplifying the proof of memory consistency and verification, it does not reflect the inner workings of modern
CPUs, which reorder memory stores and implement a relaxed consistency model. 
These features are required to capture Spectre-OoO in
Section~\ref{sec:newvul:ooo}.
% , where the leak is caused
% by store-reordering.
Moreover, our model provides a clean separation between the general speculative semantics and 
microarchitecture-specific features, where the latter is obtained by reducing the
nondeterminism of the former. This enables a modular analysis of (combinations of) predictive strategies, as in Spectre-PHT ICache in
Section~\ref{sec:new-attack:fetch}.

% , showing  that speculative fetching is insecure even in absence 
% of speculative execution. 
% \todo{Microinstruction}
 
% \todo[inline]{HW security broken,
% Security of SW cannot be achieved without a good understanding of
% the micro arch. effects. After two years there is still not a comprehensive
% understanding of the security implications of ...
% For the reason the securirty and afficacy of the currently proposed
% countermeasure is not know, and there are continuously new
% vulnerabilities as we point out in this paper.
% 
% existing focus on few aspects.
% 
% In this work we present (Section) the first model capable of
% capturing OoO execution and all forms of speculation that can
% be meaningfully 
% implemented in the context of a high performance pipeline.
% }

\textbf{Cache side channels}
In line with prior works~\cite{secspec,guarnieri2018spectector,cauligi2019towards}, our attacker model
abstracts away the mechanism used by an attacker to profile the sequence of a victim's
memory accesses, providing a general
account of trace-driven attacks~\cite{page2002theoretical}.
Complementary works~\cite{gruss2016flush,disselkoen2017prime,liu2015last,yarom2014flush}
% (e.g.
% Flush+Reload~\cite{yarom2014flush+},
% Prime+Probe~\cite{liu2015last},
% PrimeAbort~\cite{disselkoen2017prime},
% and Flush+Flush~\cite{gruss2016flush+})
show that cache profiling is becoming increasingly steady and precise. % For example,
% in a  attack a concurrent (e.g.
% executing on another core) hostile process 
% uses \stylecode{clflush} to continuously flush a memory
% region and immediately reloads the same data. Fast accesses to the
% data indicate that the victim used the cache line.
% Alternatively techniques to indirectly measure victim memory accesses
% via caches include ,
% , and
% . 
Performance jitters caused by cache usage have been widely exploited to leak sensitive data~\cite{gruss2015cache,kocher1996timing,maurice2017hello,osvik2006cache,neve2006advances,aciiccmez2006trace,yarom2017cachebleed},
e.g., in cryptography software. Miller~\cite{miller18}, and Fogh and Ertl~\cite{fogh18} propose a taxonomy for mitigating speculative execution vulnerabilities.
We refer to a recent survey by Canella et al.~\cite{ge2018survey}  on cache-based countermeasures.

\textbf{Spectre vs Meltdown}
Recent attacks that use microarchitectural effects of speculative
execution have been generally distinguished as Spectre and
Meltdown attacks~\cite{DBLP:conf/uss/CanellaB0LBOPEG19}.
We focus on the former~\cite{kiriansky2018speculative,
  DBLP:conf/sp/KocherHFGGHHLM019, koruyeh2018spectre,
  maisuradze2018ret2spec, horn2018speculative, horn2018reading,
  chen2018sgxpectre, evtyushkinalmbox,speechminer20}, which
exploits speculation to cause a victim program to transiently
access sensitive  memory locations that the attacker is not authorized to
read. 
Meltdown attacks~\cite{lipp2018meltdown} transiently bypass
the hardware security mechanisms that enforce memory isolation.
Importantly, Meltdown attacks can be easily countered in
hardware, while Spectre attacks require hardware-software co-design, which motivates our model.
We remark that the vulnerability in Section~\ref{sec:new-attack:v4} is different from
the recent Microarchitectural Data Sampling attacks~\cite{Schwarz2019ZombieLoad,ridl,canella2019fallout}, since it only requires the CPU to predict memory aliases
with no need of violating memory protection mechanisms.
% exploit  hyper-threading vulnerabilities to forward
% unaccessible stores of a victim to loads of the attacker. Since these
% attacks bypass the hardware protection mechanisms they are classified
% as Meltdown-like attacks. The attack presented in
% Section~\ref{sec:v4:forward} 
Microarchitectures supporting this feature have been
proposed, e.g., in Feiste et al.~\cite{feiste}.
% In this paper we focused on Spectre-like attacks.

\textbf{Tool support}
Several prototypes have been developed to reproduce and detect known Spectre-PHT attacks~\cite{secspec,guarnieri2018spectector,wang2019kleespectre,wangooo}.
% effects of speculative execution.
% Both formalizations of Cheang et al~\cite{secspec} and Guarnieri et
% al.~\cite{guarnieri2018spectector} have been used to devise tools that
% can detect Spectre-PHT. 
% KLEESPECTRE~\cite{wang2019kleespectre} relies
% on symbolic execution of LLVM to detect Spectre-PHT-like vulnerabilities.
Checkmate~\cite{DBLP:conf/micro/TrippelLM18}
synthesizes proof-of-concept attacks by using models of
speculative and OoO pipelines.
Tool support for vulnerabilities beyond Spectre-PHT requires dealing
with a large number of possible predictions and instruction interleavings.
In fact, current tools mainly focus on Spectre-PHT ignoring OoO execution. 

\textbf{Functional Pipeline Correctness}
A number of authors, cf.  \cite{BurchD94,sawada2002verification,AagardCDJ01,ManoliosS05,jhala2001microarchitecture}, have studied the orthogonal problem of functional correctness in the context of concrete pipeline architectures 
involving features such as OoO and speculation, usually using a complex refinement argument based on Burch-Dill style flushing \cite{BurchD94} in order to align OoO executions with their 
sequential counterparts. Our correlate is the serialization proofs for OoO and speculation sketched in appendices \ref{thm:ooo:co} and \ref{appendix:proof:spec:cho}. 
It is of interest to mechanize these proofs and to examine if a generic account of serialization using, e.g., InSpectre can help also in the functional verification of 
concrete pipelines.

\textbf{Hardware countermeasures} While CPU vendors and researchers propose 
countermeasures, it is hard to validate their effectiveness  without a model. InSpectre can 
help modeling  and reasoning about their security guarantees, as in Section~\ref{sec:hcm}.  
% In Section~\ref{sec:analysis:arm} we modeled ARM's Speculative Store
% Bypass Safe~\cite{ssbs}. 
Similarly, InSpectre can model the hardware configurations and fences designed by Intel~\cite{intelfences}  
to stall (part of) an instruction stream in case of speculation. 
% Precise modeling
% these features is critical to analyze if software is secure.
%
Several works~\cite{Kiriansky18,taram2019context,zagieboylo2019using,weisse2019nda,woodruff2014cheri} propose security-aware hardware that prevent
Spectre-like attacks.
% . Context-Sensitive Fencing~\cite{taram2019context} uses dynamic microcode generation
% to introduce fences as per request of a runtime monitor.  Zagieboylo et al.~\cite{zagieboylo2019using} extend
% a RISC-V processor with information-flow control  to counter timing channels. Weisse et al.~\cite{weisse2019nda} propose a dynamic instruction
% scheduler to restrict speculative data propagation. 
% and prevent wrongly-accessed data to affect instructions that may transmit
% information.
% By preventing buffer overflows in case of mispredictions, Cherry~\cite{Cherry} can also 
% prevent some Spectre-like vulnerabilities.  
InSpectre can help formalizing  these hardware
features and analyzing their security.

\section{Concluding Remarks}

\changed{This paper presented InSpectre, the first comprehensive model capable of capturing out-of-order 
execution along with the different forms of speculation that could be implemented in a high-performance pipeline. We used InSpectre to model existing vulnerabilities, to discover three new potential vulnerabilities, and to
reason about the security of existing countermeasures proposed in the literature. There are a number of interesting directions left open in this work.
}

\changed{
\textbf{Foundations of microarchitecture security} We argue that InSpectre pushes the boundary on
foundations of microarchitecture security with respect to the current state-of-the-art substantially. 
Existing models~\cite{cauligi2019towards,secspec,Disselkoen2019TheCT,guarnieri2018spectector,mcilroy2019spectre} 
miss features  like dynamic inter-instruction dependency (except~\cite{cauligi2019towards}]), instruction non-atomicity, 
OoO memory commits, and partial misprediction of rollbacks. These features were essential 
to discover the vulnerabilities, as well as to reason about countermeasures like retpoline or memory fences for data dependency.
For instance, InSpectre would not have captured our Spectre-OoO vulnerability if the memory stores and instruction retire are performed in the \emph{sequential} order. 
Similarly, \emph{static} computation of active  stores would not have exposed Store-To-Load variants of Spectre. 
Moreover, forcing the rollback of all subsequent microinstructions as soon as 
a value is mispredicted  prevents modeling advanced recovery methods used by modern processors, 
including concurrent and partial recovery in case of multiple mispredictions.
}

\changed{A novel feature of our approach is to decompose instructions into smaller microinstruction-like units.
We argue that the modeling of pipelines using ISA level instructions as atomic units is in the long run the wrong approach, not reflecting well the behaviour at the hardware level, and unable to provide the foundation for real pipeline information flow. Non-atomicity is needed to handle, for instance, 
intra-instruction dependencies and interactions between I/D-caches. Therefore, decomposing instructions into smaller 
microinstructions, as we do, appears essential.
}

%\changed{\textbf{Richer models} 
%Going beyond single cores it is of interest also to augment the model with fences and other
%synchronization constructs for multicore applications. 
%}

\changed{
InSpectre lacks explicit support of Meltdown-like vulnerabilities, multicore and hyperthreading, fences, TLBs, cache eviction policies, 
and mechanisms used to update branch predictor tables. Our model can already capture many of these features. In the paper we give Intel's \emph{lfence} as an example.
We focus here on core aspects of out-of-order and speculation, but there is nothing inherent in the framework that 
prevents modeling the above additional features. Also, by providing a general model we cannot currently argue if a 
concrete architecture is secure. For that we need to specialize the model to a given architecture, by adding detail and eliminating nondeterminism.
}

\changed{
\textbf{Tooling}
Tooling is needed to explore more systematically the utility of
the model for exploit search and countermeasure proof, and the
framework needs to be instantiated to different concrete pipeline
architectures and be experimentally validated.}

\changed{One can envisage MIL-based analysis tools like Spectector~\cite{guarnieri2018spectector}, Pitchfork~\cite{cauligi2019towards}, and oo7~\cite{wangooo}.
However, the large nondeterminism introduced by out-of-order and speculation will make such an approach inefficient. 
We are currently taking a different route by modeling concrete microarchitectures within a theorem prover. 
This allows verifying conditional noninterference if the microarchitecture is inherently secure. A failing security 
proof gives a basis for proving countermeasure soundness as in Section~\ref{sec:newvul:ooo}, and the identification 
of sufficient conditions that can be verified in the (more tractable) sequential model. 
}

\bibliographystyle{ACM-Reference-Format}
\bibliography{biblio}

\appendix

 \section{Proofs}
% \todo{RG: check macros, like fundefined, newstate}
 \subsection{Correctness of $t$-equivalence: Lemma~\ref{lem:deps-eq}}\label{proof:teq}
  If $\newstate_1  \sim_t  \newstate_2$ and the $t$'s microinstruction  
  in $\newstate_1$ is $\instr = \ass{c}{t}{o}$, then
  $\deps(t, \newstate_1) = \deps(t, \newstate_2)$,
  $\den{c}\newstate_1 = \den{c}\newstate_2$,
  and if $\den{\inst}\newstate_1 = (v_1, l_1)$ and $\den{\inst}\newstate_2 =
  (v_2, l_2)$ then $v_1 = v_2$. 

  \begin{proof}
    For non-load operations and guards the proof is trivial, since their semantics only
  depends on the value of the bound names of the operation and
  guard. These names are statically identified from $c$ and $o$ and their value is the same in $\newstate_1$ and
  $\newstate_2$ by definition of $\sim_t$.
  For loads, the proof relies on
  showing that for every state $\newstate$, 
  $\stract(\proj{\newstate}{T}, t)
  =
  \stract(\proj{\newstate}{T}, t)
  $, where $T=\deps(t, \newstate)$.
  Let $t' \in \stract(\newstate, t)$. By
  definition, $t'$ and its bound names are in $T$, therefore their
  values are equal in $\newstate$ and $\proj{\newstate}{T}$ and hence $t'
  \in \strfor(\proj{\newstate}{T},t)$. Also, by definition of
  $\deps$, all names referred to by conditions and addresses of
  subsequent stores of $t'$ are in $T$. Therefore
  if there is no subsequent store
  that overwrites $t'$ (i.e.,  $\ass{c''}{t''}{\store{\restype}{t''_a}{t''_v}}$ such that 
  $[c'']\newstate$ and $(\newstate(t''_a) = \newstate(t) \vee
  \newstate(t''_a) = \newstate(t_a))$) then
  there is no store overwriting $t'$ in  $\proj{\newstate}{T}$ and
  hence $t' \in \stract(\proj{\newstate}{T}, t)$.
\end{proof}

%%% Local Variables:
%%% mode: latex
%%% TeX-master: "sp2020-report"
%%% End:

\subsection{OoO Memory Consistency: Theorem~\ref{thm:ooo:co}}\label{prf:oooco}
To prove that $\doublestep{}{×}$ and $\singlestep{}{}$ are memory consistent we
demonstrate a reordering lemma, which allows to
commute transitions
if the $(n+1)$-th transition modified  
name $t_2$, 
$n$-th transition modified name $t_1$, and $t_2 < t_1$.
\begin{center}
\begin{tikzpicture}[node distance=2cm,auto,>=latex']

\tikzstyle{every node}=[font=\small]
\tikzset{pblock/.style = {rectangle, draw=white!50, top
                      color=white,bottom color=white, align=center, minimum width=1cm}}
                      
\node[pblock,name=a1]{$\newstate_1$};                      
\node[pblock,name=a2, right of  = a1]{$\newstate_2$};
\node[pblock,name=a3, right of  = a2]{$\newstate_n$};
\node[pblock,name=a4, right of  = a3]{$\newstate_{n+1}$};
\node[pblock,name=a4', below = 0.5cm of a4]{$\newstate'_{n+1}$};
\node[pblock,name=a5, right of  = a4]{$\newstate_{n+2}$};
                
\draw[->]  (a1) edge  node[above] {} (a2);
\draw[->]  (a2) edge[dashed]  node[midway,sloped,rotate=90]{} (a3); 
\draw[->]  (a3) edge node[above,sloped] {$(\alpha_1, t_1)$}  (a4);
\draw[->]  (a4) edge node[above,sloped] {$(\alpha_2, t_2)$}  (a5);
\draw[->]  (a3) edge node[below left] {$(\alpha_2, t_2)$}  (a4');
\draw[->]  (a4') edge node[below right] {$(\alpha_1, t_1)$}  (a5);
\end{tikzpicture}
\end{center}

We use the following notation. Let
$\newstate_0 :: \dots ::\newstate_n$ a sequence of states,
we define $(\Inst_i, \storage_i, \commits_i, \decodes_i) = \newstate_i$
for $i \in \{0 \dots n\}$,
$(\alpha_i, t_i) = \stepparam(\newstate_{i-1}, \newstate_{i})$
and $(\Inst_{i-1} \cup \hat{\Inst}_{i}, \storage_{i-1} \cup \hat \storage_{i},
\commits_{i-1} \cup \hat \commits_{i}, \decodes_{i-1} \cup \hat \decodes_{i}) =
(\Inst_{i}, \storage_{i}, \commits_{i}, \decodes_{i})$ for $i \in
\{1 \dots n\}$.

We first demonstrate that $\strfor$ and $\stract$ of a microinstruction
$t$ do not depend on names bigger that $t$ and that they are monotonic.
\begin{lemma}\label{lemma:activestore:reorder1}
  Let $\newstate_0$ and $\newstate_1$ be two states, 
  if $bn(\hat \Inst_1) \geq t$ and $\dom{\hat \storage_i} \geq t$ then
  $\strfor(\newstate_1,t) = \strfor(\newstate_0,t)$ and
  $\stract(\newstate_1, t) =  \stract(\newstate_0, t)$
\end{lemma}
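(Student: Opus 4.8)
The plan is to exploit the fact that every quantity entering $\strfor(\cdot,t)$ and $\stract(\cdot,t)$ is computed from storage and instruction data indexed by names strictly below $t$, so that enlarging a state only by material with names $\geq t$ leaves both sets unchanged. (I read the second hypothesis as $\dom{\hat\storage_1} \geq t$, i.e.\ the freshly assigned names are all $\geq t$.)

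First I would record the structural fact supplied by the translation axioms. For every store $\ass{c'}{t'}{\store{\restype}{t'_a}{t'_v}}$ its free names satisfy $\fn(c') \cup \{t'_a, t'_v\} = \fn(\iota) < \bn(\iota) = t'$ by axiom (ii); likewise the address $t_a$ of the microinstruction named $t$ satisfies $t_a < t$. Since both definitions only ever inspect stores with name $t' < t$ (together with $t_a$), every guard and address that is evaluated mentions only names strictly below $t$. This locality-below-$t$ property is the crux.

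Next I would combine monotonicity with the naming hypothesis. As we are in the OoO semantics all components are monotonic, so $\storage_1 = \storage_0 \cup \hat\storage_1$ and $\Inst_1 = \Inst_0 \cup \hat\Inst_1$. Because $\dom{\hat\storage_1} \geq t$, no name $n < t$ lies in $\dom{\hat\storage_1}$, hence $\proj{\storage_0}{D} = \proj{\storage_1}{D}$ where $D = \{n \mid n < t\}$ — the two storages agree on $D$ both in defined-ness and in value. From this I would conclude that for any expression $e$ with $\fn(e) \subseteq D$ the partial evaluations $\den{e}\newstate_0$ and $\den{e}\newstate_1$ coincide, including the undefined case $\fundefined{[e]}{\newstate}$; this is exactly what is needed to treat the "$\vee \fundefined{\cdot}{\cdot}$" guard disjunct and the three-way address disjunct uniformly. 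I would also note that the candidate stores agree: since $\bn(\hat\Inst_1) \geq t$, any store with $t' < t$ in $\Inst_1$ already occurs in $\Inst_0$, and $\Inst_0 \subseteq \Inst_1$, so $\{\ass{c'}{t'}{\store{\restype}{t'_a}{t'_v}} \in \newstate_i \mid t' < t\}$ is independent of $i$. Putting these together, every conjunct of the membership condition of $\strfor(\cdot,t)$ takes the same truth value in $\newstate_0$ and $\newstate_1$, giving $\strfor(\newstate_1,t) = \strfor(\newstate_0,t)$. For $\stract$ I would observe that its extra condition ranges over $t'' \in \strfor(\cdot,t)$ — hence over names $< t$ — and only evaluates $c''$, $t''_a$, $t_a$, and $t'_a$, all with free names in $D$; by the same evaluation agreement these conditions match, yielding $\stract(\newstate_1,t) = \stract(\newstate_0,t)$.

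The main obstacle is not conceptual but the bookkeeping around partiality: I must check that agreement on $D$ propagates to equality of the partial evaluations in \emph{every} branch of the guard and address disjunctions, so that no spurious store is added to or dropped from either set (e.g.\ a store whose address is undefined in $\newstate_0$ must remain undefined in $\newstate_1$, and conversely). Once the locality-below-$t$ property and the storage agreement on $D$ are in place, the remainder is a direct, if careful, case analysis.
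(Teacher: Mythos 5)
Your proposal is correct and follows essentially the same route as the paper's proof: it isolates the same three facts (the set of stores preceding $t$ is unchanged, their guards and addresses evaluate identically because all relevant free names are below $t$ and the new storage entries are all $\geq t$, and the load's own address $t_a < t$ evaluates identically), and then concludes equality of $\strfor$ and $\stract$ exactly as the paper does. Your extra care about agreement of partial evaluations in the undefined branches is a welcome explicit treatment of a point the paper leaves implicit.
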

\begin{proof}
Let $t$ be a load or store accessing address $t_a$ (the other cases are
trivial, since $\stract$ is undefined), hence $t_a < t$.\\
(1) The set of instructions that precedes $t$ is the same in $\newstate_0$ and
$\newstate_1$. In fact, since $bn(\hat \Inst_1) \geq t$ then
$\{\ass{c'}{t'}{o} \in \Inst_0 \cup \hat \Inst_1 \mid  t' < t\} =
\{\ass{c'}{t'}{o} \in \Inst_0 \mid  t' < t\}$.\\
(2) For every store that precedes $t$, the evaluation of condition and
address is the same in $\newstate_0$ and $\newstate_1$. In fact,
let $\iota' = \ass{c'}{t'}{\store{\restype}{t'_a}{t'_v}}$ and $t' < t$
then $n(c') \cup \{t'_a\} < t$. Therefore, $\den{c'}{\storage_0 \cup
\hat \storage_1} = \den{c'}{\storage_0}$ and $(\storage_0 \cup
\hat \storage_1)(t'_a) = \storage_0(t'_a)$.\\
(3) Similarly, since $t_a < t$ then $(\storage_0 \cup
\hat \storage_1)(t_a) = \storage_0(t_a)$.\\
Properties (1, 2, 3) guarantee that $\strfor(\newstate_1,t) =
\strfor(\newstate_0,t)$. Similarly, since $\stract$ depends on the
addresses and conditions of stores in $\strfor(\newstate_1,t)$ and these have 
names smaller than $t$ then
$\stract(\newstate_1,t) =
\stract(\newstate_0,t)$.
\end{proof}
\begin{lemma}\label{lemma:pas:mono}
  if $\newstate_0 \doublestep{l}{} \newstate_1$ and $\ass{c}{t}{o} \in \newstate$ then
  $\strfor(\newstate_1, t) \subseteq \strfor(\newstate_0, t)$ and
  $\stract(\newstate_1, t) \subseteq \stract(\newstate_0, t)$
\end{lemma}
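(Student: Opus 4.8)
The plan is to reduce the whole statement to a single value-preservation property of OoO steps and then to verify the defining clauses of $\strfor$ and $\stract$ one by one. First I would record exactly what a step $\newstate_0 \doublestep{l}{} \newstate_1$ can change: by the monotonicity invariants of the OoO semantics (stated in the remarks on the rules, where all state components are declared monotonic), the storage is extended \emph{without overwriting} — rule $\exe$ fires only when $\fundefined{\storage}{t}$, while $\cmt$ and $\ftc$ leave $\storage$ untouched — and $\Inst,\commits,\decodes$ only grow. Consequently every name or guard that is defined in $\newstate_0$ evaluates to the same value in $\newstate_1$, and $\fundefined{\newstate_1}{x}$ implies $\fundefined{\newstate_0}{x}$, i.e. undefinedness can only be resolved, never created. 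Moreover $\ftc$ introduces names strictly above $max(\Inst_0)\geq t$, so any store $\iota'=\ass{c'}{t'}{\store{\restype}{t'_a}{t'_v}}$ with $t'<t$ occurring in $\newstate_1$ already occurs in $\newstate_0$. I would also dispose of the case where $t$'s microinstruction is not a load or store immediately, since there $\strfor$ and $\stract$ are trivial.

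Next I would prove $\strfor(\newstate_1,t)\subseteq\strfor(\newstate_0,t)$. Take such a store $\iota'$; by the above it already lies in $\newstate_0$ with $t'<t$, so only the guard and address clauses remain. For the guard: if $\fundefined{\den{c'}}{\newstate_0}$ the clause holds in $\newstate_0$ outright; otherwise $\den{c'}{\newstate_0}$ is defined, hence equal to $\den{c'}{\newstate_1}$, which by membership in $\strfor(\newstate_1,t)$ must be true. For the address: if $\fundefined{\newstate_0}{t'_a}$ or $\fundefined{\newstate_0}{t_a}$ the clause holds in $\newstate_0$; otherwise both $t'_a$ and $t_a$ are defined in $\newstate_0$, their values carry over to $\newstate_1$, and membership in $\strfor(\newstate_1,t)$ forces $\den{t'_a}{\newstate_1}=\den{t_a}{\newstate_1}$, hence $\den{t'_a}{\newstate_0}=\den{t_a}{\newstate_0}$. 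Thus $\iota'\in\strfor(\newstate_0,t)$.

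The $\stract$ inclusion is where the real work sits, and I expect it to be the main obstacle. Given $\iota'\in\stract(\newstate_1,t)$, the first requirement $\iota'\in\strfor(\newstate_0,t)$ is already established. For the second requirement I argue by contradiction: suppose some $\iota''=\ass{c''}{t''}{\store{\restype}{t''_a}{t''_v}}\in\strfor(\newstate_0,t)$ overwrites $\iota'$ in $\newstate_0$, i.e. $t''>t'$, $\den{c''}{\newstate_0}$ holds, and $\den{t''_a}{\newstate_0}\in\{\den{t_a}{\newstate_0},\den{t'_a}{\newstate_0}\}$. The delicate point is that $\strfor$ \emph{shrinks}, so I cannot simply quote $\iota''\in\strfor(\newstate_1,t)$ and must re-derive it. Since $c''$ and $t''_a$ are defined in $\newstate_0$, both are preserved, so $\den{c''}{\newstate_1}$ holds and the witnessing value gives $\den{t''_a}{\newstate_1}\in\{\den{t_a}{\newstate_1},\den{t'_a}{\newstate_1}\}$; it only remains to re-establish the address clause of $\strfor(\newstate_1,t)$ for $\iota''$.

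Here I would exploit the fact that $\iota'$ itself survives into $\strfor(\newstate_1,t)$: as $t'_a$ is defined, membership of $\iota'$ yields $\den{t'_a}{\newstate_1}=\den{t_a}{\newstate_1}\vee\fundefined{\newstate_1}{t_a}$. If the original witness was $\den{t''_a}{\newstate_0}=\den{t_a}{\newstate_0}$, preservation gives $\den{t''_a}{\newstate_1}=\den{t_a}{\newstate_1}$ directly; if it was $\den{t''_a}{\newstate_0}=\den{t'_a}{\newstate_0}$, preservation gives $\den{t''_a}{\newstate_1}=\den{t'_a}{\newstate_1}$, which combined with the surviving constraint on $\iota'$ again yields $\den{t''_a}{\newstate_1}=\den{t_a}{\newstate_1}\vee\fundefined{\newstate_1}{t_a}$. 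Either way $\iota''\in\strfor(\newstate_1,t)$, its guard is true, and $\den{t''_a}{\newstate_1}\in\{\den{t_a}{\newstate_1},\den{t'_a}{\newstate_1}\}$, so $\iota''$ overwrites $\iota'$ in $\newstate_1$ as well, contradicting $\iota'\in\stract(\newstate_1,t)$. Hence no such $\iota''$ exists and $\iota'\in\stract(\newstate_0,t)$, which completes both inclusions.
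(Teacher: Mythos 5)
Your proof is correct and follows essentially the same route as the paper's: both reduce to the observation that a single OoO step only extends the storage at a previously undefined name (commits and fetches leave the storage unchanged, and fetches add only instructions with names above $t$), so that defined guards and addresses are preserved, and both conclude the $\stract$ inclusion by showing that the overwriting store --- whose guard and address are already defined in $\newstate_0$ --- persists as an overwriting store in $\newstate_1$. Your argument by contradiction is just the contrapositive of the paper's forward argument, and your explicit re-derivation of the $\strfor(\newstate_1,t)$-membership of the overwriting store in the case where its address matches $t'_a$ rather than $t_a$ (using the fact that $\iota'$ itself survives into $\strfor(\newstate_1,t)$) is in fact slightly more careful than the paper's one-line assertion of that step.
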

\begin{proof}
  The proof is done by case analysis on $\alpha_1$.
  For commits the proof is trivial, since $\hat \Inst_1 =
  \emptyset$ and $\hat \storage_1 = \emptyset$.
  For fetches, $\hat \storage_1 = \emptyset$ and the transition may decode new
  stores in $\hat \Inst_1$. However, these new stores
  have names greater than $max(\Inst_0)$, hence their names are greater than
  $t$. Therefore the additional stores do not affect $\strfor$ and 
  $\stract$.
  
  For executions, $\hat \Inst_1 = \emptyset$, $\hat \storage_1 =
  \{t_1 \mapsto v\}$, and $\fundefined{\storage_0}{t_1}$ for some $v$.
  This store update can make defined the evaluation of the
  condition or expression of a store. In this case, if a store is in
  $\strfor(\newstate_1, t)$ it must also be in $\strfor(\newstate_0, t)$.
  Stores that are in $\strfor(\newstate_0, t)$ but are not in
  $\strfor(\newstate_1, t)$ have undefined conditions or addresses in
  $\newstate_0$ and false condition or non matching address in $\newstate_1$.\\
  To show that $\stract$ does not increase we proceed as follows.
  Let $t'$ be a store in $\strfor(\newstate_0, t) \setminus
  \stract(\newstate_0, t)$.
  There must be a subsequent overwriting store $t''$ in $\strfor(\newstate_0, t)$
  whose condition holds in $\newstate_0$ and address is defined in
  $\newstate_0$.
  Such store cannot have $t_1$ in its free names, hence it is also in
  $\strfor(\newstate_1, t)$. Therefore the store $t''$ overwrites $t'$ in
  $\newstate_1$ too.
\end{proof}

Proof of Theorem~\ref{thm:ooo:co} is done by induction on the length
of traces and relies on Lemma~\ref{lemma:switch-pre} to demonstrate
that $(\alpha_2, t_2)$ can be applied in $\newstate_n$ and
Lemma~\ref{lemma:switch-post} to show that $(\alpha_1, t_1)$ can be
applied in the resulting state $\newstate'_{n+1}$ to obtain
$\newstate_{n+2}$.
\begin{lemma}\label{lemma:switch-pre}
If $\newstate_0 \doublestep{l_1}{} \newstate_1 \doublestep{l_2}{} \newstate_2$ and 
$t_2 < t_1$ then exists $l'_{2}$ such that $\newstate_0 \doublestep{l'_2}{}
(\Inst_{0} \cup \hat \Inst_{2}, \storage_{0} \cup \hat \storage_{2},
\commits_{0} \cup \hat \commits_{2}, \decodes_{0} \cup \hat \decodes_{2}) =
\newstate'$,
$\stepparam(\newstate_0, \newstate') = (\alpha_2, t_2)$, and if $\alpha_2 =
\cmt(a, v)$ then $\alpha_1 \neq \cmt(a, v')$.
\end{lemma}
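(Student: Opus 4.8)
The plan is to prove the lemma by case analysis on which rule $\alpha_2$ fired in the second step, pushing its enabling premises and its effect back from $\newstate_1$ to $\newstate_0$. The whole argument rests on two facts about the step $\newstate_0 \doublestep{l_1}{} \newstate_1$: it touches only the single name $t_1$, and $t_1 > t_2$. Concretely, only $\alpha_1 = \exe$ enlarges the storage, and then $\dom{\hat\storage_1} = \{t_1\}$; only $\alpha_1 = \ftc$ enlarges $\Inst$, and then every new bound name exceeds $max(\Inst_0) \geq t_1 > t_2$; commits change neither. Since $\storage_0 \subseteq \storage_1$, $\commits_0 \subseteq \commits_1$, $\decodes_0 \subseteq \decodes_1$ and $t_1 \neq t_2$, all the purely local premises of $\alpha_2$ transfer immediately: $\fundefined{\newstate_0}{t_2}$ for $\exe$; $\fdefined{\newstate_0}{t_2}$ and $t_2 \notin \commits_0$ for $\cmt$; $\storage_0(t_2) = a$ and $t_2 \notin \decodes_0$ for $\ftc$.

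The premises mentioning $\strfor$ or $\stract$, and the value computed by $\exe$, need Lemma~\ref{lemma:activestore:reorder1}. Taking $t = t_2$ there, the side conditions $\bn(\hat\Inst_1) \geq t_2$ and $\dom{\hat\storage_1} \geq t_2$ hold by the paragraph above, so $\strfor(\newstate_0,t_2) = \strfor(\newstate_1,t_2)$ and $\stract(\newstate_0,t_2) = \stract(\newstate_1,t_2)$. Hence the in-order premises $\bn(\strfor(\newstate_1,t_2)) \subseteq \commits_1$ (resp. $\subseteq \decodes_1$) descend to $\newstate_0$: every name in $\strfor(\newstate_0,t_2)$ is $< t_2 < t_1$, so it differs from the unique name $t_1$ that $\alpha_1$ could have added to $\commits$ or $\decodes$, and therefore already lies in $\commits_0$ (resp. $\decodes_0$). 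For the $\exe$ delta $\hat\storage_2 = \{t_2 \mapsto v\}$ the value is unchanged: internal operations and stores read only free names $< t_2$, on which $\storage_0$ and $\storage_1$ agree, and for a load the unique active store $t_s \in \bn(\stract(\newstate_0,t_2)) = \bn(\stract(\newstate_1,t_2))$ satisfies $t_s < t_2 < t_1$, so $\storage_0(t_s) = \storage_1(t_s)$. The deltas $\hat\commits_2 = \{t_2\}$ and $\hat\decodes_2 = \{t_2\}$ match trivially (the observation $l'_2$ is existentially quantified, so it need not equal $l_2$), which leaves only the freshly fetched set $\hat\Inst_2$ and the commit-address clause as genuine work.

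These two points are where the in-order disciplines of the OoO semantics must be invoked, and I expect them—rather than the routine monotonicity bookkeeping—to be the main obstacle, since both hinge on correctly placing $t_2$ inside $\strfor(\newstate_0,t_1)$ under a shared-resource hypothesis. First, when $\alpha_2 = \ftc$ I must rule out $\alpha_1 = \ftc$, for otherwise the reordered fetch would translate the instruction at $a$ against $max(\Inst_0)$ rather than $max(\Inst_1)$ and $\hat\Inst_2$ would not match. Because all program-counter stores share the single PC resource, if both $t_2$ and $t_1$ were fetched PC-stores with $t_2 < t_1$, then $t_2$—already executed with true guard in $\newstate_0$, as fetching leaves the storage untouched—would lie in $\strfor(\newstate_0,t_1)$; the in-order fetch premise $\bn(\strfor(\newstate_0,t_1)) \subseteq \decodes_0$ would force $t_2 \in \decodes_0 \subseteq \decodes_1$, contradicting $t_2 \notin \decodes_1$. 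Thus $\alpha_1 \in \{\exe,\cmt\}$, whence $\Inst_0 = \Inst_1$ and $\hat\Inst_2 = \translate(a, max(\Inst_1)) = \translate(a, max(\Inst_0))$, exactly what applying $\ftc$ at $\newstate_0$ produces. Second, for the clause ``$\alpha_2 = \cmt(a,v) \Rightarrow \alpha_1 \neq \cmt(a,v')$'', suppose $\alpha_1$ committed $t_1$ to the same address $a$. Then $t_1$ and $t_2$ are memory stores both evaluating to address $a$ with $t_2 < t_1$, so $t_2 \in \strfor(\newstate_0,t_1)$; the in-order commit premise $\bn(\strfor(\newstate_0,t_1)) \subseteq \commits_0$ yields $t_2 \in \commits_0$, contradicting $t_2 \notin \commits_0$. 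This completes every case and establishes that $\alpha_2$ is enabled at $\newstate_0$ with the claimed effect.
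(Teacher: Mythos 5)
Your proposal is correct and follows essentially the same route as the paper's proof: you bound the effect of the first transition to the single name $t_1 > t_2$, transfer the local premises of $\alpha_2$ back to $\newstate_0$ using Lemma~\ref{lemma:activestore:reorder1} for the $\strfor$/$\stract$ conditions, and discharge the two genuinely non-local obligations (excluding a prior fetch when $\alpha_2=\ftc$, and the commit-address clause) by the same contradictions via $t_2 \in \bn(\strfor(\newstate_0,t_1))$ and the in-order premises of $\ftc$ and $\cmt$. Your remark that $l'_2$ need not equal $l_2$ also matches the paper's observation about a load becoming silent after reordering past its committing store.
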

\begin{proof}
  We fist bound the effects of the transitions that modified $t_1$.\\
(1) If $\newstate_1 \doublestep{l_2}{} \newstate_2$ and 
$\stepparam(\newstate_1, \newstate_2) = (\alpha_2, t_2)$,
then exists  $\inst_2 = \ass{c}{t_2}{o} \in I_1$. Since
$\stepparam(\newstate_0,\newstate_1) = (\alpha_1, t_1)$
then $bn(\hat I_{1}) > t_1 > t_2$. Therefore $\ass{c}{t_2}{o} \in I_0$.\\
(2) Similarly, $\dom{\hat \storage_{1}} \subseteq \{t_1\}$, $\hat \commits_{1} \subseteq
  \{t_1\}$, and $\hat \decodes_{1} \subseteq \{t_1\}$.
  
The proof continues by case analysis over the transition rule $\alpha_2$.
\\
\textbf{(Case $\exe$)}
The hypothesis of the rule ensure that 
$\fundefined{\storage_1}{t_2}$,  $\den{c}{\storage_1}$, and
$\den{\inst_2}\newstate_1 = (v, l_2)$.
The conclusion of the rule ensures that $\hat \storage_{2} = \{t_2 \mapsto v\}$,
  $\hat \commits_{2}=\emptyset$, $\hat \decodes_{2}=\emptyset$, and
  $\hat I_{2} = \emptyset$.
  \\
The proof that $t_2$ can be executed in $\newstate_0$ relies on
the fact that all free names of the instruction $t_2$ must be smaller
than $t_2$.
Property (2), $fn(\inst_2) < t_2$, and $t_2 < t_1$ ensure that
$\fundefined{\storage}{t_2}$ and  $\den{c}{\storage}$.\\
The same properties guarantee that 
$\den{\inst_2}\newstate_0 = (v, l_2)$. For internal operations and
stores the proof is trivial,
since $fn(\inst_2) < t_2$, and $t_2 <
t_1$.   The proof for loads uses Lemma~\ref{lemma:activestore:reorder1}
to guarantee that $\stract(\newstate_0, t_2) = \stract(\newstate_1, t_2)$.
  \\
  Hence we can apply rule (exec) to show that
  exists $l'_{2}$ such that
  $\newstate_0 \doublestep{l_2}{} (\Inst_0, \storage_0 \cup \{t_2 \mapsto v\}, \commits_0,
  \decodes_0) = \newstate'$.
  Notice that $l'_{2} \neq l_{2}$. In fact, if $t_{2}$ is a load, $t_{1}$ is the corresponding active store,
  and $\alpha_{1} = \cmt(a,v)$ then the execution of $t_{2}$ needs to access the memory subsystem in $\newstate_{1}$ while it can
  simply forward the value of $t_{1}$ in $\newstate_{0}$: i.e., an observable load becomes silent.
  \\
  \textbf{(Case $\cmt(a,v)$)}
  In this case $o = \store{\Locs}{t_a}{t_v}$.
  The hypothesis of the rule ensure that
  $\storage_1(t_2) = v$, $t_2 \not \in \commits_1$,
  $\bn(\strfor(\newstate_1, t_2)) \subseteq \commits_1$,
  and $\storage_1(t_a) = a$.
  The conclusion of the rule ensures that $\hat \storage_2 = \emptyset$,
  $\hat \commits_2= \{t_2\}$, $\hat \decodes_2=\emptyset$, 
  $\hat I_{2} = \emptyset$, and $l_2 = \ds{a}$.\\
Property (2) and $t_a < t_2 < t_1$ ensure that
$\storage_0(t_2) = v$, $t_2 \not \in \commits_0$, and
$\storage_0(t_a) = a$.
Similarly to the case $\exe$-load,  Lemma~\ref{lemma:activestore:reorder1}
guarantees that $\bn(\strfor(\newstate_1, t_2)) = \bn(\strfor(\newstate,
t_2))$. Since the $\strfor$ are smaller than $t_2$ then 
$\bn(\strfor(\newstate_0, t_2)) \subseteq \commits_0$.
\\
  Hence we can apply rule (commit) to show that
  $\newstate_0 \doublestep{l_2}{} (\Inst_0, \storage_0, \commits_0 \cup \{t_2\},
  \decodes_0) = \newstate'$. Finally, to show that $\alpha_1 \neq \cmt(a,
  v')$ we proceed by contradiction. If $\alpha_1 = \cmt(a, v')$, then
  $\bn(\strfor(\newstate_0, t_1)) \subseteq \commits_0$. However, $t_2
  \in \bn(\strfor(\newstate_0, t_1))$, because they write the same
  address $a$ and $t_2 < t_1$. This contradict that $t_2 \not \in \commits_0$.
  \\
  \textbf{(Case $\ftc$)}
  In this case $o = \store{\Pc}{}{t_v}$.
  The hypothesis of the rule ensure that
  $\storage_1(t_2) = v$, $t_2 \not \in \decodes_1$, and
  $\bn(\strfor(\newstate_1, t_2)) \subseteq \decodes_1$.
  The conclusion of the rule ensures that $\hat \storage_2 = \emptyset$,
  $\hat \commits_2= \emptyset$, $\hat \decodes_2= \{t_2\}$, 
  $\hat I_{2} = \translate(v, max(\Inst_1))$, and $l_2 = \il{a}$.\\
Property (2) and $fn(c) \cup \{t_a\} < t_2 < t_1$ ensure that
$\storage_0(t_2) = v$ and $t_2 \not \in \decodes_0$.
Similarly to the case commit,  Lemma~\ref{lemma:activestore:reorder1}
guarantee that $\bn(\strfor(\newstate_1, t_2)) = \bn(\strfor(\newstate,
t_2))$. Since the $\strfor$ are smaller than $t_2$ then 
$\bn(\strfor(\newstate_0, t_2)) \subseteq \decodes_0$.
\\
To complete the proof we must show that $\hat \Inst_1 = \emptyset$.
We proceed by contradiction: if $\hat \Inst_1 \neq \emptyset$ then
$\alpha_1 = \ftc(\hat \Inst_1)$, hence this transition
fetched $t_1$ and $\bn(\strfor(\newstate_0, t_1)) \subseteq
\decodes_0$. However, $t_2 < t_1$ and both update the program counter,
therefore $t_2 \in \bn(\strfor(\newstate_0, t_1))$. This contradicts
$t_2 \not \in \decodes_0$.\\
Finally,  we can apply rule $\ftc$ to show that
$\newstate_0 \doublestep{l_2}{}$\\$(\Inst_0 \cup \translate(v, max(\Inst_0), \storage_0, \commits_0,
\decodes_0 \cup \{t_2\}) = $\\$(\Inst_0 \cup \translate(v, max(\Inst_1), \storage_0, \commits_0,
\decodes_0 \cup \{t_2\}) = \newstate'$.
\end{proof}

\begin{lemma}
  \label{lemma:switch-post}
  If $\newstate_0 \doublestep{l_1}{} \newstate_1 \doublestep{l_2}{} \newstate_2$
  and $t_2 < t_1$ then there exists $\newstate'$, $l'_{1}$, and $l'_{2}$
 such that
 $\newstate_0 \doublestep{l'_2}{} \newstate' \doublestep{l'_1}{} \newstate_2$,
 $\stepparam(\newstate_0, \newstate') = (\alpha_2, t_2)$,
  $\stepparam(\newstate', \newstate_2) = (\alpha_1, t_1)$,
  and if $\alpha_2 = \cmt(a, v)$ then $l_1 \neq \cmt(a, v')$.
\end{lemma}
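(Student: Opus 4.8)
The plan is to extend Lemma~\ref{lemma:switch-pre}, which already supplies the first leg of the reordered run: it produces an intermediate $\newstate'$ with $\newstate_0 \doublestep{l'_2}{} \newstate'$, $\stepparam(\newstate_0,\newstate') = (\alpha_2, t_2)$, the commit-disjointness side condition, and $\newstate' = (\Inst_0 \cup \hat\Inst_2, \storage_0 \cup \hat\storage_2, \commits_0 \cup \hat\commits_2, \decodes_0 \cup \hat\decodes_2)$. Thus $\newstate'$ differs from $\newstate_0$ only in the $t_2$-indexed additions, with $\hat\Inst_2 = \emptyset$ unless $\alpha_2 = \ftc$ (in which case the new names exceed $max(\Inst_0)$), and $t_1 \neq t_2$ since $t_2 < t_1$. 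It then remains to show (i) that $(\alpha_1, t_1)$ is enabled in $\newstate'$ and (ii) that firing it lands exactly in $\newstate_2$.

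For (i) I would do a case analysis on $\alpha_1 \in \{\exe, \cmt, \ftc\}$, the workhorse being monotonicity (Lemma~\ref{lemma:pas:mono}): from $\newstate_0 \doublestep{}{} \newstate'$ the sets $\strfor(\newstate', t_1)$ and $\stract(\newstate', t_1)$ only shrink relative to $\newstate_0$, so the ``all earlier stores committed/decoded'' premises of $\cmt$ and $\ftc$, which held in $\newstate_0$ over names whose commit/decode status $\alpha_2$ leaves untouched, still hold; the freshness checks $t_1 \notin \commits'$ and $t_1 \notin \decodes'$ follow from $t_1 \neq t_2$. In the $\exe$ case with $t_1$ internal or a store, if $\alpha_2 = \exe$ then $t_2 \notin \fn(t_1)$ (otherwise executing $t_1$ in $\newstate_0$ would need $\fdefined{\newstate_0}{t_2}$, contradicting $\fundefined{\newstate_1}{t_2}$, as $\newstate_0$ and $\newstate_1$ agree on $t_2$), while if $\alpha_2 \in \{\cmt,\ftc\}$ the storage is untouched; either way $t_1$ evaluates identically. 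The problematic configuration of two fetches is ruled out by Lemma~\ref{lemma:switch-pre} itself (its fetch case establishes $\hat\Inst_1 = \emptyset$ whenever $\alpha_2 = \ftc$), so at most one step enlarges the instruction set.

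The hard part will be the $\exe$ case with $t_1$ a load, since $t_2 < t_1$ and active stores depend precisely on names below $t_1$ (Lemma~\ref{lemma:activestore:reorder1}), so executing $t_2$ first could a priori move the load onto a different store. Here I would pin the active-store set down exactly: the load was executable in $\newstate_0$, so $\stract(\newstate_0, t_1) = \{t_s\}$ is a singleton with $\newstate_0(t_a), \newstate_0(t_s)$ defined, whence $t_s, t_a \neq t_2$ (as $t_2$ is undefined in $\newstate_0$); monotonicity gives $\stract(\newstate', t_1) \subseteq \{t_s\}$, and emptiness is impossible because the initial state provides a committed store for every address, so a load with a defined address always has a nonempty active-store set. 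Hence $\stract(\newstate', t_1) = \{t_s\}$, and since $\newstate'(t_s) = \newstate_0(t_s)$ and $t_s \in \commits' \iff t_s \in \commits_0$, the load yields the same value (possibly with a different observation $l'_1$, exactly as $l'_2$ was already allowed to differ in Lemma~\ref{lemma:switch-pre}). For (ii) I then reassemble the four components: the storage, commit, and decode updates of the two steps touch the disjoint names $t_1$ and $t_2$ and hence commute, and the instruction-set update is carried by at most one fetch whose generated names depend only on $max(\Inst_0)$, which the non-fetching step leaves fixed; so the reordered run reaches precisely $\newstate_2$, with $\stepparam(\newstate_0,\newstate') = (\alpha_2, t_2)$, $\stepparam(\newstate', \newstate_2) = (\alpha_1, t_1)$, and the commit-disjointness condition inherited from Lemma~\ref{lemma:switch-pre}.
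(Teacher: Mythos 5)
Your proof follows essentially the same route as the paper's: Lemma~\ref{lemma:switch-pre} supplies the first leg, then a case analysis on $\alpha_1$ re-establishes that rule's premises in $\newstate'$ using the bounded effect of the $t_2$-step together with the monotonicity of $\strfor$/$\stract$ (Lemma~\ref{lemma:pas:mono}), with the load case and the at-most-one-fetch observation handled just as in the paper. The only nit is that your claim $t_s \in \commits' \Leftrightarrow t_s \in \commits_0$ fails exactly when $\alpha_2 = \cmt$ with $t_2 = t_s$ (the case where store forwarding becomes an observable memory load), but this is harmless since the loaded value is unaffected and you already allow $l'_1 \neq l_1$, which the lemma's existential quantification over labels permits.
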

\begin{proof}
The existence of $\newstate'$ and $l'_{2}$ is given by Lemma~\ref{lemma:switch-pre}.
For transition $\newstate' \doublestep{l'_1}{} \newstate_2$
  we first bound the effects of $\newstate_0 \doublestep{l_1}{} \newstate_1$.\\
(1) If $\newstate_0 \doublestep{l_1}{} \newstate_1$ and 
$\stepparam(\newstate_0, \newstate_1) = (\alpha_1, t_1)$,
then there exists  $\inst_1 = \ass{c}{t_1}{o} \in \Inst_0$.  Therefore
$\ass{c}{t_1}{o} \in \Inst_0 \cup \hat \Inst_2 = \Inst'$.\\
(2) $\dom{\hat \storage_2} \subseteq \{t_2\}$, $\hat \commits_2 \subseteq
  \{t_2\}$, and $\hat \decodes_2 \subseteq \{t_2\}$.\\
We continue the proof by case analysis on
  $\alpha_2$.
  \\
  \textbf{(Case $\exe$)}
  The hypothesis of the rule ensure that 
$\fundefined{\storage_0}{t_1}$,  $\den{c}{\storage_0}$, and
$\den{\inst_1}\newstate_0 = (v, l_1)$.
The conclusion of the rule ensures that $\hat \storage_1 = \{t_1 \mapsto v\}$,
  $\hat \commits_1=\emptyset$, $\hat \decodes_1=\emptyset$, and
  $\hat I_{1} = \emptyset$.\\
  (3) Property (2) and $t_2 < t_1$ ensure that
  $\fundefined{\storage'}{t_1}$.
  Moreover, if $\hat \storage_2 = \emptyset$ then $\den{c}{\storage'}
  = \den{c}{\storage_0}$.
  Otherwise, $\hat \storage_2 = \{ t_2 \mapsto v_2\}$ for some $v_2$. In
  this case, $\alpha_2 = \exe$ and $\fundefined{\storage_0}{t_2}$.
  Since $\fdefined{\den{c}}{\storage_0}$ then
  $t_2 \not \in n(c)$, hence $\den{c}{\storage'}
  = \den{c}{\storage_0}$.
  \\
  The same argument is used to guarantee that 
  $\den{\inst_1}\newstate' = (v, l'_1)$. For internal operations and
  stores the proof follows the same approach of (3).  The proof for loads uses
  Lemma~\ref{lemma:pas:mono}. Notice that for internal operations and
  stores $l'_1 = l_1 = \cdot$. For loads, either 
  $l'_1= l_1 = \dl a$ or $l'_1= \dl a$ and $l'_1= \cdot$. The latter
  happens when $\stepparam(\newstate_0, \newstate') = (\cmt(a, v), t_2)$. In
  this case, we have reordered a memory commit before a load thus
  making it not possible to forward the value of the store to the load
  and requiring a new memory interaction. This shows that the
  observations of the OoO model are a subset of the
  observations of in-order model, since the OoO model can
  execute loads before the corresponding stores are committed.
  \\
  Finally, we can apply rule $\exe$ to show that
  $\newstate' \doublestep{l'_1}{} (\Inst', \storage' \cup \{t_1 \mapsto v\}, \commits',
  \decodes') = \newstate_2$.
  \\
  \textbf{(Case $\cmt(a,v)$)}
  In this case $o = \store{\Locs}{t_a}{t_v}$.
  The hypothesis of the rule ensure that
  $\storage_0(t_1) = v$, $t_1 \not \in \commits_0$,
  $\bn(\strfor(\newstate_0, t_1)) \subseteq \commits_0$,
  and $\storage_0(t_a) = a$.
  The conclusion of the rule ensures that $\hat \storage_1 = \emptyset$,
  $\hat \commits_1= \{t_1\}$, $\hat \decodes_1=\emptyset$, 
  $I_{t_1} = \emptyset$, and $l_1 = \ds{a}$.\\
Property (2) and $t_2 < t_1$ ensure that
$\storage'(t_1) = v$ and $\storage'(t_a) = a$.
To show that  $\bn(\strfor(\newstate', t_1)) \subseteq \commits'$ we use
Lemma~\ref{lemma:pas:mono}.
Finally, $t_1 \not\in \commits'$, since $t_1 > t_2$.
To prove that $\alpha_2 \neq \cmt(a,  v')$ we proceed by contradiction. If $\alpha_2 = \cmt(a, v')$ then
  $t_2 \not \in \commits_0$. However, $t_2
  \in \bn(\strfor(\newstate_0, t_1))$, because they write the same
  address $a$ and $t_2 < t_1$. This contradict that $\bn(\strfor(\newstate_0, t_1)) \subseteq \commits_0$. 
\\
Therefore we can apply rule $\cmt$ to show that
  $\newstate' \doublestep{l_1}{} (\Inst', \storage', \commits' \cup \{t_1\},
  \decodes') = \newstate_2$. %To show that $\alpha_1 \neq (a,b')$
  \\
  \textbf{(Case $\ftc$)}
  In this case $o = \store{\Pc}{}{t_v}$.
  The hypothesis of the rule ensure that
  $\storage_0(t_1) = v$, $t_1 \not \in \decodes_0$, and
  $\bn(\strfor(\newstate_0, t_1)) \subseteq \decodes_0$.
  The conclusion of the rule ensures that $\hat \storage_1 = \emptyset$,
  $\hat \commits_1= \emptyset$, $\hat \decodes_1= \{t_1\}$, 
  $I_{t_1} = \translate(v, max(\Inst_1)$, and $l_1 = \il{a}$.
  \\
  Property (2) and $t_2 < t_1$ ensure that
$\storage'(t_1) = v$.
To show that  $\bn(\strfor(\newstate', t_1)) \subseteq \decodes'$ we use
Lemma~\ref{lemma:pas:mono}.
\\
Finally, $t_1 \not\in \decodes'$, since $t_1 > t_2$.
\\
To complete the proof we must show that $\hat \Inst_2 = \emptyset$.
We proceed by contradiction: if $\hat \Inst_2 \neq \emptyset$ then
$\alpha_2 = \ftc$, hence this transition
fetched $t_2$ and $t_2 \not \in \decodes_0$. However, $t_2 < t_1$ and both update the program counter,
therefore $t_2 \in \bn(\strfor(\newstate_0, t_1))$. This contradicts
the hypothesis that $\bn(\strfor(\newstate_0, t_1)) \subseteq \decodes_0$.\\
Finally,  we can apply rule $\ftc$ to show that
$\newstate' \step{l_1}{}$\\$ (\Inst' \cup \translate(v, max(\Inst'), \storage', \commits',
\decodes' \cup \{t_1\}) = $\\$(\Inst' \cup \translate(v, max(\Inst_0), \storage', \commits',
\decodes' \cup \{t_1\}) = \newstate'$.

\end{proof}

%%% Local Variables:
%%% mode: latex
%%% TeX-master: "ccs2020-main"
%%% End:

 \subsection{Memory Consistency of Speculative Semantics: Theorem~\ref{thm:spec:co}}\label{prf:speco}
\label{appendix:proof:spec:cho}
We reduce memory consistency for the speculation model to the OoO case using Theorem~\ref{thm:ooo:co}. Since the OoO semantics
already takes care of reordering, to prove Theorem \ref{thm:spec:co} a bisimulation argument suffices. Intuitively, referring to Figure \ref{fig:spec:lifecycle},  
the states ``decoded'', ``predicted'', ``speculated'' and ``speculatively fetched'' in the speculative semantics all correspond 
in some sense to the state ``decoded'' in the OoO semantics, in that any progress can still be undone to return to the ``decoded'' state.
In a similar vein, the state ``retired'' corresponds to ``executed''
in the OoO semantics, ``fetched'' to ``fetched'' and ``committed'' to
``committed''. The only exception is states that are speculatively
fetched. In this case there is an option to directly retire the
fetched state, without passing through ``retired'' first. The proof
reflects this intuition. 
 
 The main challenges in defining the bisimulation are i) to pin down
 the non-speculated instructions in the speculative semantics and
 relate them correctly to instructions in the OoO semantics, and ii)
 account for speculatively fetched instructions. The latter issue
 arises when retiring an instruction in the speculative semantics that
 has earlier been speculatively fetched. In that case, the
 corresponding decoded microinstructions
 are already in flight, although the bisimilar OoO state will have no
 trace of this. 
 Then the OoO microinstruction will have to be first executed and then
 fetched. The following definitions make this intuition precise.

First say that a name $t'$ is \emph{produced by} the PC store
microinstruction $\ass{c}{t}{\store{\Pc}{}{t_v}} \in \Inst$, if $t'\prec
t$, i.e. $t\in\dom{\spec(t')}$. We would like to conclude that $t$ is
uniquely determined, as we need this to properly relate the
speculative and OoO states.  However, this 
does not hold in general. For a counterexample consider the PC store
microinstruction $\ass{c}{t}{\store{\Pc}{}{t_v}}$. Suppose that the fetch
from $t$ causes a new instruction $\Inst'$ to be allocated with
another PC store instruction $\ass{c'}{t'}{\store{\Pc}{}{t'_v}}$
followed by a PC load $\ass{c''}{t''}{\load{\Pc}{}}$, i.e. such that $t'< t''$. 
At this point, $\spec(t') = [t\mapsto v]$ and $\spec(t'') = [t\mapsto
v]$. After executing the fetched PC store $t'$ and then the PC load,
$\storage(t')=v'$ and  
$\storage(t'') = v'$. At this point, $\spec(t'')$ will map $t$ to $v$
and $t'$ to $v'$. But then $t''$ is produced by both $t$ and $t'$. The
same property holds 
if $t'$ is used as an argument to operations other than a PC load.
This causes us to impose the following wellformedness condition on
instruction translations:
\begin{definition}[Wellformed instruction translation]
\label{def:wft}
The translation function $\translate$ is wellformed if $\translate(v,t)=\Inst$ implies:
\begin{enumerate}
\item $\ass{c}{t}{\store{\Pc}{}{t_v}}, \ass{c'}{t'}{\load{\Pc}{}} \in\Inst$ implies $t'< t$.
\item $\ass{c}{t}{\store{\Pc}{}{t_v}}, \inst\ \in\Inst$ implies $t\notin \fn(\inst)$.
\item For all $\storage$ there is a unique $\ass{c}{t}{\store{\Pc}{}{t_v}}$ such that $\den{c}\storage$.
\end{enumerate} 
\end{definition}
Condition \ref{def:wft}.1 and 2 can be imposed without loss of
generality since any occurrence of $t$ bound to the microinstruction
$\ass{c}{t}{\store{\Pc}{}{v}}$ can be replaced by $v$ itself, and condition
\ref{def:wft}.3 is natural to ensure that any linear control flow
gives rise to a correspondingly linear flow of instructions. We
obtain: 
\begin{proposition}
If $t'$ is produced by $t_1$ and $t'$ is produced by $t_2$ then $t_1=t_2$. \hfill $\Box$
\end{proposition}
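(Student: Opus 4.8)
The plan is to establish the stronger invariant that in every reachable speculative state $h=(\Inst,\storage,\commits,\decodes,\spec,\guesses)$, and for every $t'$ with $\fdefined{\spec}{t'}$, the domain $\dom{\spec(t')}$ contains \emph{at most one} name that is bound to a program counter store. The proposition is then the instance of this invariant restricted to PC-store names. I would prove the invariant by induction on the length of the computation reaching $h$, examining how each rule alters $\spec(t')$. Only $\prd$, $\ftc$, $\exe$ and $\pexe$ install or extend $\spec(t')$; the rules $\cmt$, $\ret$ and $\rbk$ either leave $\spec(t')$ untouched or delete entries, so they trivially preserve the bound.

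First I would clear the easy cases. If $\spec(t')$ was last set by $\prd$ then $\spec(t')=\emptyset$ and the claim is vacuous. If it was set by $\ftc$ then $\dom{\spec(t')}=\{t\}$ for the single PC store $t$ that triggered the fetch, so there is exactly one PC-store producer. The substantive case is that $t'$ has been executed, so $\spec(t')$ records $\proj{\storage}{\deps(t',\newstate)}$ extended, for the sake of the rollback rule, with the fetch dependency installed when $t'$ was decoded. I would then analyse $\deps(t',\newstate)=\fn(\inst)\cup\asn(\newstate,t')\cup\sources(\newstate,t')$, where $\inst$ is the microinstruction named $t'$, summand by summand.

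Two observations bound the PC stores contributed by $\deps$. First, by the translation property that distinct ISA instructions carry no direct cross-dependency, every free name of any microinstruction lies in its own translation, and by wellformedness condition~\ref{def:wft}.2 no microinstruction refers to the PC store of its own translation; hence $\fn(\inst)$, and likewise every guard- and address-name gathered into $\sources(\newstate,t')$, is a non-PC-store name, so $\fn(\inst)$ and $\sources(\newstate,t')$ contribute \emph{no} PC store. Second, $\asn(\newstate,t')=\bn(\stract(\newstate,t'))$ ranges only over stores to the same resource as the load $t'$, so it contains a PC store only if $t'$ is a PC load; and since $t'$ has executed, the load rule forces $\stract(\newstate,t')$ to be a singleton, so $\asn(\newstate,t')$ contributes \emph{at most one} PC store. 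Together these show that $\deps(t',\newstate)$ holds at most one PC-store name.

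The remaining, and principal, obstacle is to align this single active PC store with the PC store $t_f$ that speculatively fetched $t'$ in the case that $t'$ is a PC load, since a priori $\dom{\spec(t')}$ could then list both $t_f$ and the active store $t_a$ — exactly the situation of the counterexample preceding the statement. Here I would combine the naming discipline of $\ftc$, which allocates a fetched translation with names larger than all names already present via $max(\Inst)$, with wellformedness conditions~\ref{def:wft}.1 and~\ref{def:wft}.3. Condition~\ref{def:wft}.1 places every PC load strictly before every PC store of its own translation, so $t_a$ is never drawn from $t'$'s own translation; the fetch discipline then forces $t_f$ to be the most recent preceding PC store, and condition~\ref{def:wft}.3 (a unique enabled PC store per translation) rules out any other enabled PC store in between. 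Consequently $t_a=t_f$, so $\dom{\spec(t')}$ still lists a single PC store and the invariant is preserved. Securing this coincidence of ``fetched by'' and ``loaded from'' is the delicate step, and it is precisely what the three wellformedness conditions were introduced to guarantee.
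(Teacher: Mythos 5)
The paper never actually proves this proposition: it exhibits the counterexample, introduces the three wellformedness conditions expressly to rule it out, and then asserts the statement with a bare $\Box$. Your proposal therefore supplies an argument where the paper supplies none, and it is essentially the argument the paper intends. The induction over reachable states, the dismissal of $\cmt$, $\ret$, $\rbk$ as non-increasing, the direct handling of $\prd$ and $\ftc$, and the decomposition of $\deps(t',\newstate)$ into $\fn(\inst)\cup\asn(\newstate,t')\cup\sources(\newstate,t')$ — using the no-direct-inter-instruction-dependency property plus wellformedness condition 2 to show $\fn$ and $\sources$ contribute no PC-store name, and the singleton requirement in the load semantics to bound $\asn$ — is exactly the use the wellformedness conditions were designed for. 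You also correctly read the $\spec\cup\{t\mapsto\cdot\}$ update as a merge with the fetch-installed snapshot, which is what the paper's own counterexample presupposes.

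The one step that deserves more care — in your proof and equally in the paper's unstated one — is the final alignment $t_a=t_f$ for a speculatively fetched PC load. Conditions 1 and 3 give a \emph{unique enabled} PC store per translation \emph{for a fixed storage}, but the storage at the time the load executes need not be the storage at the time of the fetch: a $\pexe$ correction can falsify the guard of $t_f$ without $t_f$ having been rolled back yet, in which case $\stract$ at load-execution time may select the sibling PC store of $t_f$'s translation, and $\dom{\spec(t')}$ would then hold two distinct PC-store names. Closing this either requires arguing that such a load execution cannot occur in the reachable states the bisimulation quantifies over, or strengthening your invariant to account for pending rollbacks. Since the paper offers no proof at all, your proposal is strictly more informative; just be aware that this coincidence of ``fetched by'' and ``loaded from'' is where the residual risk sits, and your appeal to condition 3 alone does not yet discharge it.
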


Consider now microinstructions (images of $\translate$)
$\Inst_1$ and $\Inst_2$ such that
$bn{(\Inst_1)}\cap bn{(\Inst_2)}=\emptyset$. Say that $\Inst_1$
\emph{produces} $\Inst_2$, 
$\Inst_1<\Inst_2$, if there is $\ass{c}{t}{\store{\Pc}{}{t_v}} \in \Inst_1$ such that for each $t'\in\Inst_2$, $t'$ is produced by $t$.  Clearly, if $\Inst'$ is added to the set of microinstructions due to a fetch from $\Inst$ then $\Inst<\Inst'$. Say then that $\Inst$ (and by extension states containing $\Inst$) is \emph{wellformed}
by the partitioning $\Inst_{retired},\Inst_1,\ldots,\Inst_n$, if
$\Inst=\bigcup\{\Inst_{retired},\Inst_1,\ldots,\Inst_n\}$,
$\Inst_{retired}$ is retired, and for each $\Inst_i$, $1\leq i\leq n$ there is
$\Inst\in\{\Inst_{retired},\Inst_1,\ldots,\Inst_n\}$ such that $\Inst
< \Inst_i$. Moreover we require that $<^*$ on the partitions
$\{\Inst_{retired},\Inst_1,\ldots,\Inst_n\}$ is well-founded and that
the partitions are maximal. Note
that if $\Inst$ is wellformed by
$\Inst_{retired},\Inst_1,\ldots,\Inst_n$ 
then the partitioning is unique. We note also that all reachable states  in the
speculative semantics are wellformed and each partition corresponds
to the translation of one single ISA instruction.
We say that an ISA instruction $\Inst_i$ in the
partitioning $\Inst_1,\ldots,\Inst_n$ is \emph{unconditionally
  fetched}, if $\Inst_{retired}< \Inst_i$ and let $\Inst_{uf}$ be the
union of $\Inst_{retired}$ and the 
instructions that have been unconditionally fetched.

We can now proceed to define the bisimulation $R$. We restrict
attention to reachable states in both the OoO and speculative
semantics in order to keep the definition of $R$ manageable and be
able to implicitly make use of simple invariant properties such as
 $\dom{\spec}\cap\dom{\spec(t)}\Rightarrow 
t\notin\commits$ (no instruction with a speculated dependency is committed).
Let $(\Inst_1,  \storage_1, \commits_1, \decodes_1)\ R\  (\Inst_2,
\storage_2,  \commits_2, \decodes_2, \spec_2, \guesses_2) $ if 
\begin{enumerate}
\item $\Inst_2$ is wellformed by the partitioning $\Inst_{2,retired},\Inst_{2,1},\ldots,\Inst_{2,n}$.
\item There is a bijection $\widetilde{\cdot}$ from $\Inst_{2,uf}$ to $\Inst_1$.
\item $\commits_2 = \widetilde{\commits_1}$,
\item $\mapminus{\storage_2}{\dom{\spec_2}} = \widetilde{\storage_1}$,
\item $\decodes_2 \setminus\dom{\spec_2} = \widetilde{\decodes_1}$,
%\item $\{  \ass{c}{t}{o} \in \Inst_2 \mid
%  \forall \ass{c'}{t'}{\store{\pc}{\_}{t'_v}} \in \Inst_2. t\prec t' \Rightarrow
%  %t' \in \dom{\spec_2(t)}\Rightarrow 
%  t' \not\in \dom{\spec_2}\} = \Inst_1$. 
 \end{enumerate}
In 3.-5. the bijection $\widetilde{\cdot}$ is pointwise extended to sets and expressions.

Note that, from 2. and 4. we get that a microinstruction $\widetilde{t}$ in $\Inst_1$ has been executed iff $t\notin\dom{\spec_2}$.
 
%\todo[inline]{Illustrate proof strategy pictorially using lemma 1 and lemma 2 below}

We prove that $R$ is a weak bisimulation in two steps. We first show
that all speculative transitions up until retire or non-speculative
fetch are reversible. To prove this it is sufficient to show that each
of the conditions 1.--4. is invariant under $\prd$, $\exe$, $\pexe$, $\rbk$, and $\ftc$, the latter under the condition that
the fetched instruction is in $\spec_2$. These transitions are
identified by $T1$ in the following picture:
\begin{center}
\begin{tikzpicture}[node distance=2.5cm,auto,>=latex']
\tikzstyle{every node}=[font=\small]
\tikzset{pblock/.style = {rectangle, draw=white!50, top
                      color=white,bottom color=white, align=center, minimum width=0.75cm}}
                      
\node[pblock,name=a1]{$(\newstate_2, \spec_2, \guesses_2)$};                      
\node[pblock,name=a2, right of  = a1]{$(\newstate'_2, \spec'_2, \guesses'_2)$};
\node[pblock,name=a3, right of  = a2]{$(\newstate''_2, \spec''_2, \guesses''_2)$};
\node[pblock,name=a4, right of  = a3]{$(\newstate'''_2, \spec'''_2, \guesses'''_2)$};
                
\node[pblock,name=a'1, below = 1cm of  a1]{$\newstate_1$};                      
\node[pblock,name=a'2, right of  = a'1]{$\newstate'_1$};
\node[pblock,name=a'3, right of  = a'2]{$\newstate''_1$};
\node[pblock,name=a'4, right of  = a'3]{$\newstate'''_1$};

\draw[->]  (a1) edge  node[above] {T1} (a2);
\draw[->]  (a2) edge  node[above]{T2} (a3); 
\draw[->]  (a3) edge  node[above]{T3} (a4); 

\draw[->]  (a'1) edge  node[above] {T2} (a'2);
\draw[->]  (a'2) edge  node[above] {\exe} (a'3);
\draw[->]  (a'3) edge  node[above] {\ftc} (a'4);

\draw[-]  (a1) edge  node[left] {$R$} (a'1);
\draw[-]  (a2) edge  node[left] {$R$} (a'1);
\draw[-]  (a3) edge  node[left] {$R$} (a'2);
\draw[-]  (a4) edge  node[left] {$R$} (a'4);
\end{tikzpicture}
\end{center}

%\todo[inline]{From here need to update proof to new bisimulation conditions. In particular $1\mapsto 3, 2\mapsto 4, 3\mapsto 5$, and 4 is replaced by the new 1,2}

\begin{lemma}
\label{lem:bisim:one}
If $\newstate_1\ R\ (\newstate_2,\spec_2,\guesses_2)$ and 
$(\newstate_2,\spec_2,\guesses_2)\triplestep{}{} (\newstate_2',\spec_2',\guesses_2')$ is an instance of $\prd$, $\exe$, $\pexe$,$\rbk$, or speculative $\ftc$ then $\newstate_1\ R\ (\newstate_2',\spec_2',\guesses_2')$.
\end{lemma}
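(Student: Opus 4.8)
The plan is to fix the OoO state $\newstate_1$ throughout and prove the claim by case analysis on which of the five rules instantiates the speculative step $(\newstate_2,\spec_2,\guesses_2)\triplestep{}{}(\newstate_2',\spec_2',\guesses_2')$, checking in each case that conditions 1.--5.\ of $R$ still relate $\newstate_1$ to the new speculative state. The guiding observation, which makes all five cases go through, is that $\prd$, $\exe$, $\pexe$, $\rbk$ and speculative $\ftc$ only ever touch \emph{speculated} names, i.e.\ names in $\dom{\spec_2}$, or allocate fresh ones, whereas $R$ deliberately projects these names away: conditions 4.\ and 5.\ restrict $\storage_2$ and $\decodes_2$ to the complement of $\dom{\spec_2}$, and the bijection $\widetilde{\cdot}$ of condition 2.\ has domain $\Inst_{2,uf}$, which excludes every speculatively fetched group. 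Hence the ``retired fragment'' of the speculative state---the only part seen through $\widetilde{\cdot}$---is unchanged, so the same bijection witnesses $R$ and $\newstate_1$ need not move.

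For $\prd$ the step records $\spec_2'=\spec_2\cup\{t\mapsto\emptyset\}$ and sets $t\in\guesses_2'$, so $t$ enters $\dom{\spec_2'}$ exactly as it enters $\dom{\storage_2'}$; since $t$ is an internal operation it is absent from $\decodes_2$ and $\commits_2$, and the empty snapshot adds no production edge, so 1.--5.\ are immediate. For $\exe$ and $\pexe$ the executed name $t$ is added to (resp.\ kept in) $\dom{\spec_2'}$, so $\mapminus{\storage_2'}{\dom{\spec_2'}}$ and $\decodes_2'\setminus\dom{\spec_2'}$ coincide with their pre-transition values and conditions 4.\ and 5.\ survive; I would use the reachability invariant that a name lies in $\decodes_2$ only if it is a defined PC store to rule out $t\in\decodes_2$ in the $\exe$ case. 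Crucially neither rule retires anything nor fetches, so $\Inst_{2,retired}$ and the inter-group production relation $<$ are untouched, $\Inst_{2,uf}$ is stable, and the bijection is reused verbatim; that $t$ may live in a speculatively fetched group is harmless precisely because such a $t$ is outside $\Inst_{2,uf}$ anyway. Speculative $\ftc$ allocates a new group $\Inst$ produced by a PC store $t\in\dom{\spec_2}$: the new names carry snapshots $t'\mapsto\proj{\storage_2}{\{t\}}$, so they enter $\dom{\spec_2'}$ but, being freshly decoded, stay out of $\dom{\storage_2'}$, and since their unique producer $t$ is not retired the new group is speculatively fetched and disjoint from $\Inst_{2,uf}$; thus conditions 4.\ and 5.\ are preserved and the partitioning is extended by one well-founded leaf.

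The main obstacle is the second form of $\rbk$, the only rule in this list that actually modifies $\Inst_2$, since it deletes the set $\Delta^+$ of microinstructions transitively produced by a mispredicted speculatively fetched PC store $t$ (and clears $\Delta^*=\Delta^+\cup\{t\}$ from $\storage_2,\decodes_2,\spec_2,\guesses_2$). Here I would argue, using the uniqueness-of-producer Proposition and the production-tree structure, that every group meeting $\Delta^+$ has its unique producer inside $\Delta^*\subseteq\dom{\spec_2}$, hence is not a child of $\Inst_{2,retired}$ and is therefore speculatively fetched; consequently $\Delta^+\cap\Inst_{2,uf}=\emptyset$ and the bijection domain, with it $\widetilde{\cdot}$, is preserved. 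Because $\Delta^*\subseteq\dom{\spec_2}$, removing it from both $\storage_2$ and $\spec_2$ (resp.\ $\decodes_2$ and $\spec_2$) leaves $\mapminus{\storage_2'}{\dom{\spec_2'}}$ and $\decodes_2'\setminus\dom{\spec_2'}$ fixed, giving conditions 4.\ and 5.; and since committed names are retired (via $\fundefined{\spec_2}{t}$ in the commit premise) no element of $\Delta^*$ is committed, so condition 3.\ holds. What remains is to check that deleting this whole subtree leaves a wellformed partitioning (condition 1.), which follows from well-foundedness of $<^*$ and maximality of the partitions; the first form of $\rbk$ ($t\notin\decodes_2$) is the degenerate case $\Delta^+=\emptyset$ and is subsumed by the same bookkeeping. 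I expect the delicate point throughout to be keeping the partitioning and $\Inst_{2,uf}$ accounting precise under fetch and rollback, rather than any of the storage computations, which are routine once the ``speculated names are invisible'' slogan is in place.
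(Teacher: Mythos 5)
Your proposal is correct and follows essentially the same route as the paper's proof: a case analysis over the five rules, driven by the observation that these transitions only affect names in $\dom{\spec_2}$ (or freshly allocated ones outside $\Inst_{2,uf}$), which conditions 2., 4.\ and 5.\ of $R$ project away, with the rollback case handled exactly as you describe via the uniqueness-of-producer argument showing $\Delta^+\cap\Inst_{2,uf}=\emptyset$. The paper likewise treats both forms of $\rbk$ together and uses the partial-order property of $\prec^*$ where you invoke the production-tree structure.
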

\begin{proof}
Let $\newstate_2 = (\Inst_2,\storage_2,\commits_2,\decodes_2)$ and $(\newstate_2',\spec_2',\guesses_2')=(\Inst_2',\storage_2',\commits_2',\decodes_2',\spec_2',\guesses_2')$.

\noindent \textbf{(Case \prd)} We get $\newstate_2' = (\Inst_2,\storage_2[t\mapsto v],\commits_2,\decodes_2,\spec_2,\guesses_2\cup\{t\})$ and note that conditions 1.--5. are trivially satisfied by the assumptions.
\\
\textbf{(Case \exe)} We get $\newstate_2\doublestep{}{} \newstate_2'$
with $\stepparam(\newstate_2,\newstate_2')=(\exe, t)$, $t\not\in \guesses_2$,
$\spec_2' = \spec_2\cup\{t\mapsto \proj{\storage_2}{\deps(t,\newstate_2)}\}$ and $\guesses_2'=\guesses_2$. Cond. 1 and 2 are straightforward since $\Inst_{2,uf}$ and
$\Inst_1$ are not affected by the transition.
For cond. 3 and 5 we get $\commits_2' = \commits_2$ and
\[
\decodes_2'\cap \overline{\dom{\spec_2'}} = 
\decodes_2\cap \overline{\dom{\spec_2}}\ ,
\]
since $\dom{\spec_2'}\setminus \dom{\spec_2}=\{t\}$ and $t\not\in\dom{\decodes_2}$. For cond. 4,
\begin{eqnarray*}
\mapminus{\storage_2'}{\dom{\spec_2'}}
& = & \mapminus{\storage_2[t\mapsto v]}{\dom{\spec_2\cup\{t\mapsto \proj{\storage_2}{\deps(t,\newstate_2)}}} \\
& = & \mapminus{\storage_2[t\mapsto v]}{\dom{\spec_2}\cup\{t\}} \\
& = & \mapminus{\storage_2}{\dom{\spec_2}}\ .
\end{eqnarray*} 
%
%Finally, $\ass{c}{t'}{o} \in \Inst_2'$ has the property that $t'' \in \dom{\spec_2'(t')}$ implies $t'' \in \dom{\spec_2'} $ if and only if this property holds for $\Inst_2$ and $\spec_2$ as well, which is sufficient to establish 4.

\noindent {\textbf{(Case \pexe)}} In this case
$\proj{\newstate_2}{\overline\{t\}} \doublestep{}{} \newstate_2'$ with
$\stepparam(\newstate_2,\newstate_2')=(\exe, t)$, $t\in \guesses_2$,
$\spec_2' = \spec_2\cup\{t\mapsto \proj{\storage_2}{\deps(t,\newstate_2)}\}$ and $\guesses_2'=\guesses_2\setminus\{t\}$. Cond. 1 and 2 again are immediate. For cond. 3, $\commits_2'=\commits_2$ and for cond. 5, 
\[
\decodes_2'\cap \overline{\dom{\spec_2'}} = \decodes_2\cap \overline{\dom{\spec_2}}
\]
since, again, $\dom{\spec_2'}\setminus \dom{\spec_2}=\{t\}$ and $t\not\in\dom{\decodes_2}$.
Finally for cond. 2,
\begin{eqnarray*}
\mapminus{\storage_2'}{\dom{\spec_2'}}
& = & \mapminus{\storage_2[t\mapsto v]}{\dom{\spec_2}\cup\{t\}} \\
& = & \mapminus{\storage_2[t\mapsto v]}{\dom{\spec_2}\cup\{t\}}  \\
& = & \mapminus{\storage_2}{\dom{\spec_2}} 
\end{eqnarray*}

\noindent \textbf{(Case \rbk)} We get that  $(\Inst_2,\storage_2,\commits_2,\decodes_2)\not\sim_{t}
 (\Inst_2,\spec_2(t),\commits_2,\decodes_2)$ and $t\not\in\guesses_2$. 
 We get $\Inst_2'= \Inst_2\setminus \Delta^+$, $\storage_2' = \proj{\storage_2}{\overline{\Delta^*}}$,
 $\commits_2'=\commits_2$, $\decodes_2'=\decodes_2\setminus\Delta^*$, $\spec_2'=\spec_2\setminus\Delta^*$, and $\guesses_2'=\guesses_2\setminus\Delta^*$. 
 For cond. 1 and 2 first note that $t\in\dom{\spec_2}$. If $t$ is not a PC store the effect of $\rbk$ is to remove $t$ from $\storage_2$, $\decodes_2$, $\spec_2$, $\guesses_2$. This does not affect the bijection $\widetilde{\cdot}$, so 1 and 2 remain valid also for $(\newstate_2',\spec_2',\guesses_2')$. If $t$ \emph{is} a PC store
 then we need to observe the following: Since $t$ is speculated, $t$ is a member of some "macro"-instruction (= partition) $\Inst_{2,i}$. The set
 $\Delta^+$ contains all instructions/partitions $\Inst_{2,j}$ such that $\Inst_{2,i}<^+ \Inst_{2,j}$. In particular, no such $\Inst_{2,j}$ is in $\Inst_{2,fu}$, since
 otherwise $\Inst_{2,j}$ would have been added by a retired PC store microinstruction. It follows that the bijection $\widetilde{\cdot}$ is not affected by
 the removal of $\Delta^+$, and 1 and 2 are reestablished for the new speculative state. 
 
 For cond. 3,  $\commits_2' = \commits_2$. For cond. 4, we calculate:
\begin{eqnarray*}
\proj{\storage_2'}{\overline{\dom{\spec_2'}}} 
 & = & \proj{(\proj{\storage_2}{\overline{\Delta^*}})}{\overline{\dom{\spec_2'}}} \\
  & = & \proj{\storage_2}{\overline{\dom{\spec_2'}\cup\guesses_2'}\cup\overline{\Delta^*}} \\
   & = & \proj{\storage_2}{\overline{\dom{\spec_2\setminus\Delta^*}}\cup\overline{\Delta^*}} \\
 & = & \proj{\storage_2}{\overline{\dom{\spec_2}}\cup\overline{\Delta^*}} \\
 & = & \proj{\storage_2}{\overline{\dom{\spec_2}}}\ .
\end{eqnarray*}
Note that the final step uses prop. \ref{prop:dep:po}.
\begin{proposition}
\label{prop:dep:po}
The relation $\prec^*$ is a partial order.
\end{proposition}
\begin{proof}
By induction in the length of derivation.
\end{proof}

 Finally for cond. 5:
 \begin{eqnarray*}
\decodes_2'\setminus\dom{\spec_2'}
 & = & (\decodes_2\setminus\Delta^*)\setminus(\dom{\spec_2}\setminus\Delta^*) \\
 & = & \decodes_2\setminus(\dom{\spec_2}) \ .
  \end{eqnarray*}

\noindent \textbf{(Case speculative \ftc)} 
We get that $\newstate_2\doublestep{}{} \newstate_2'$, $\stepparam(\newstate_2,\newstate_2')=(\ftc(\Inst_2''),t)$,
$\guesses_2'=\guesses_2$, $\spec_2'=\spec_2\cup\{t'\mapsto \proj{\storage}{\{t\}}\mid t'\in\Inst_2''\}$, and, since $t$ is speculated,
$t\in\dom{\spec_2}$. Also, we find $\ass{c}{t}{\store{\Pc}{}{t_v}} \in
\Inst_2$, $\storage_2(t) = v$, $t\notin\decodes_2$,
$ \bn(\strfor(\newstate_2, t)) \subseteq \decodes_2$, $\Inst_2'=\Inst_2\cup\Inst_2''$, $\storage_2'=\storage$,
$\commits_2'=\commits_2$, and $\decodes_2' = \decodes\cup\{t\}$. For cond. 1 and 2 observe that no instruction  $\Inst'$ added by
the fetch can belong to $\Inst_{2,fu}'$, since all such instructions are produced by a retired PC store instruction.
\\
For cond. 3, $\commits_2'=\commits_2$ is immediate.
\\
For cond. 4 we calculate:
\begin{eqnarray*}
\proj{\storage_2'}{\overline{\dom{\spec_2'}}} 
 & = & \proj{\storage_2}{\overline{\dom{\spec_2}\cup\{t'\mid t'\in\Inst_2''\}}} \\
 & = & \proj{\storage_2}{\overline{\dom{\spec_2}}}
 \end{eqnarray*}
 \\
 Finally for cond. 5:
 \begin{eqnarray*}
 \decodes_2'\setminus\dom{\spec_2'} 
  & = & (\decodes_2\cup\{t\})\setminus(\dom{\spec_2}\cup\{t'\mid t'\in\Inst_2''\}) \\
  & = &  \decodes_2 \setminus\dom{\spec_2}
 \end{eqnarray*}
%
%For 4, let first $t_1\leftarrow c_1?o_1\in\Inst_2' $ such that for all $t_1'\leftarrow c_1'?st\ \pc\ {t_1'}_{v_1'}\in\Inst_2'$, if $t_1\prec t_1'$ then $t_1'\notin \dom{\spec_2'}$.
%Suppose $t_2\leftarrow c_2?o_2\in\Inst_2$. Let $t_2'\leftarrow c_2'?st\ \pc\ {t_2'}_{v_2'}\in\Inst_2$ and $t_2\prec t_2'$. By the assumption,
%$t_2'\notin \dom{\spec_2'}$, but then $t_2\notin\dom{\spec_2}$  either. For the converse direction let $t_1\leftarrow c_1?o_1\in \Inst_2$ such that for all
%$t_1'\leftarrow c_1'?st\ \pc\ {t_1'}_{v_1'}\in\Inst_2$, if $t_1\prec t_1'$ then $t_1'\notin \dom{\spec_2}$. Let $t_2'\leftarrow c_2'?st\ \pc\ {t_2'}_{v_2'}\in\Inst_2'$, and $t_2\prec t_2'$. If $t_2'\leftarrow c_2'?st\ \pc\ {t_2'}_{v_2'}\in\Inst_2$ then $t_2'\notin \dom{\spec_2'}$, and if instead $t_2'\leftarrow c_2'?st\ \pc\ {t_2'}_{v_2'}\in\Inst_2''$ then $t_2\not\prec t_2'$. This completes the proof of lemma \ref{lem:bisim:one}.
%
\end{proof}

The following Lemma handle cases for $\ret$ when $t \not \in
\decodes_2$, $\cmt$, $\ftc$ when $t \not \in \dom{\delta_2}$ (which
are identified by $T2$ in Figure), and the cases for 
$\ret$ when $t \in \decodes_2$ (which
are identified by $T3$ in Figure).

\begin{lemma}
\label{lem:bisim:two}
Assume that $\newstate_1\ R\ (\newstate_2,\spec_2,\guesses_2)$.
\begin{enumerate}
\item
If $(\newstate_2,\spec_2,\guesses_2)\triplestep{}{}(\newstate_2',\spec_2',\guesses_2')$ is an instance of $\cmt$, $\ftc$, or $\ret$ then 
$\newstate_1 \doublestep{}{}^{\ast}\newstate_1'$ such that $\newstate_1'\ R\ (\newstate_2',\spec_2',\guesses_2')$.
\item
If $\newstate_1 \doublestep{}{}\newstate_1'$ then $(\newstate_2,\spec_2,\guesses_2)\triplestep{}{}(\newstate_2',\spec_2',\guesses_2')$
such that $\newstate_1'\ R\ (\newstate_2',\spec_2',\guesses_2')$.
\end{enumerate}
\end{lemma}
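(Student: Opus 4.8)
The plan is to prove the two simulation claims by case analysis on the transition rule applied, re-establishing the five conditions of $R$ in each case by the same set-algebra used in Lemma~\ref{lem:bisim:one}, now with both semantics making genuine progress. The organizing observation is that the three committing speculative rules correspond to the three OoO rules under the bijection $\widetilde{\cdot}$: $\cmt$ on $t$ is answered by OoO $\cmt$ on $\widetilde{t}$, a non-speculative $\ftc$ on $t$ (case $t\notin\dom{\spec_2}$) by OoO $\ftc$ on $\widetilde{t}$, and $\ret$ on $t$ by OoO $\exe$ on $\widetilde{t}$ --- the last because, by condition~4, retiring is exactly the act of removing $t$ from $\dom{\spec_2}$, which surfaces $\storage_2(t)$ in the projected store and so must be produced by executing $\widetilde{t}$ on the OoO side. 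Throughout I would use the two standing facts that $\widetilde{t}$ is executed in $\newstate_1$ iff $t\notin\dom{\spec_2}$ (condition~4), and that committed and decoded names transport across $\widetilde{\cdot}$ (conditions~3 and~5).

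For part~1 (speculative to OoO) the $\cmt$ and non-speculative $\ftc$ cases are direct: the premises of the OoO rule ($t$ executed or retired, not yet committed or decoded, and $\bn(\strfor(\newstate,\widetilde t))\subseteq\commits_1$ or $\subseteq\decodes_1$) transfer along $\widetilde{\cdot}$, and for $\ftc$ the bijection is extended to the freshly decoded $\Inst_2''$ and $\Inst_1''$, which match because both are $\translate$ of the same address $a=\storage_2(t)=\storage_1(\widetilde t)$. For $\ret$ with $t\notin\decodes_2$ the match is a single OoO $\exe$ on $\widetilde{t}$: enabledness and equality of the produced value follow from the retire premise $(\Inst_2,\storage_2,\commits_2,\decodes_2)\sim_t(\Inst_2,\spec_2(t),\commits_2,\decodes_2)$ together with Lemma~\ref{lem:deps-eq}, which forces $\widetilde t$'s guard and value in $\newstate_1$ to agree with those recorded in the snapshot at execution time. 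Recomputing $\mapminus{\storage_2}{\dom{\spec_2'}}$ then re-establishes condition~4 while conditions~3 and~5 are unchanged.

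The crux --- and the only place a single speculative step is matched by a two-step OoO sequence $\newstate_1\doublestep{}{}\newstate_1''\doublestep{}{}\newstate_1'''$ --- is the $T3$ case: retiring a PC store $t$ with $t\in\decodes_2$, i.e.\ one already \emph{speculatively} fetched. On the speculative side the instructions $\Inst_2''$ produced by $t$ are already in flight, but only conditionally, so they lie outside $\Inst_{2,uf}$ and have no image under $\widetilde{\cdot}$; retiring $t$ makes it a retired PC store, whereupon $\Inst_2''$ becomes \emph{unconditionally} fetched and must enter the domain of the bijection. The OoO side therefore has to catch up in two moves, first $\exe$ on $\widetilde t$ (matching the value that surfaces in the projected store) and then $\ftc$ on $\widetilde t$ (allocating $\Inst_1''$), after which $\widetilde{\cdot}$ is extended by the name correspondence between $\Inst_2''$ and $\Inst_1''$. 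Checking that this keeps $\widetilde{\cdot}$ a bijection and the partitioning $\Inst_{2,retired},\Inst_{2,1},\ldots$ coherent is where I expect the real work to lie, relying on Definition~\ref{def:wft} and the uniqueness of the producing PC store to ensure no instruction of $\Inst_2''$ was already present in $\Inst_{2,uf}$.

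For part~2 (OoO to speculative) I would run the same correspondence backwards, answering OoO $\cmt$, $\ftc$, or $\exe$ on $t_1$ by $\cmt$, $\ftc$, or $\ret$ on $\widetilde{\cdot}^{-1}(t_1)$. The delicate point, dual to the $T3$ case, is reconciling the relative order in which the two semantics may fetch versus retire a PC store: when OoO executes a PC store that the speculative state has already speculatively fetched, the matching $\ret$ is a $T3$ move whose OoO image bundles an $\exe$ with a $\ftc$, so one must argue --- from the wellformedness of all reachable speculative states and the invariant that no committed instruction carries a speculated dependency --- that $R$ is re-established at the correct synchronization point. This bookkeeping of the fetch/retire ordering between the two semantics, rather than any single rule's verification, is the principal obstacle, and it is precisely what the wellformed-translation apparatus (Definition~\ref{def:wft}) and the notion of unconditionally fetched instructions are introduced to control.
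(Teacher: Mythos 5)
Your proposal is correct and follows essentially the same route as the paper: the rule correspondence $\cmt\leftrightarrow\cmt$, non-speculative $\ftc\leftrightarrow\ftc$, $\ret\leftrightarrow\exe$ under $\widetilde{\cdot}$, the use of Lemma~\ref{lem:deps-eq} together with the $\sim_t$ premise of $\ret$ to justify the executed value, and the identification of the retire-of-a-speculatively-fetched PC store as the one case needing a two-step OoO answer ($\exe$ then $\ftc$) with an extension of the bijection justified by wellformedness. The only detail you leave implicit is the paper's explicit case split in the converse $\exe$ direction --- whether $t$ is already in $\dom{\storage_2}\cap\dom{\spec_2}$ (retire directly) or not (speculatively execute, then retire) --- but this is subsumed by your ``run the correspondence backwards'' plan and does not affect correctness.
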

\begin{proof}
% Assume first that 
  Assume first that 
  $(\newstate_2,\spec_2,\guesses_2)\triplestep{}{}(\newstate_2',\spec_2',\guesses_2')$.
Let $\newstate_i = (\Inst_i,\storage_i,\commits_i,\decodes_i)$ and $\newstate_i'=(\Inst_i',\storage_i',\commits_i',\decodes_i')$. We prove that
$\newstate_1'\ R\ (\newstate_2',\spec_2',\guesses_2')$ and proceed by cases, first from the speculative to the OoO semantics. 

\noindent \textbf{(Case $\cmt$)} We get
$\newstate_2\doublestep{}{}\newstate_2'$, and $t\notin\dom{\spec_2}$,
$\ass{c}{t}{\store{\Locs}{t_a}{t_v}} \in \Inst_2$, $\storage_2(t) = v$, $t\notin\commits_2$ and  $\bn(\strfor(\newstate_2, t)) \subseteq \commits_2$. Since $t$ is not speculated we get $t\in\Inst_{2,retired}$. Since $t\notin\dom{\spec_2}$ we get $\storage_1(\widetilde{t}) = \widetilde{v}$ by cond. 4, $\widetilde{t}\notin\widetilde{\commits_1}$ by cond. 3,  $\widetilde{t}\in\widetilde\Inst_2$ by cond. 2, and,
since all members of $\bn(\strfor(\newstate_2, t))$ are non-speculated, by cond. 2, 3, 4, $\bn(\strfor(\newstate_1, \widetilde{t})) \subseteq \commits_1$.
It follows that $\newstate_1' = (\Inst_1,\storage_1,\commits_1\cup\{\widetilde{t},\decodes_1)$, and the conditions 1.-5. for $\newstate_1'$ and $\newstate_2'$ are easily verified.

\noindent \textbf{(Case non-speculative $\ftc$)} For a non-speculative fetch we get $t\notin\dom{\spec_2}$. Also, $\newstate_2\doublestep{}{} \newstate_2'$,
$\stepparam(\newstate_2,\newstate_2') = (\ftc(\Inst_2'),t)$,
$\ass{c}{t}{\store{\Pc}{}{t_v}} \in\Inst_2$, $\storage_2(t) = v$, $t\notin\decodes$,
$\bn(\strfor(\newstate_2, t)) \subseteq \commits_2$, $\Inst_2' = \Inst_2\cup\Inst_2''$, $\storage_2'=\storage_2$, $\commits_2'=\commits_2$, and $\decodes_2'=\decodes_2\cup\{t\}$. Since $\ass{c}{t}{\store{\Pc}{}{t_v}}\in\Inst_2$ and $t\notin\dom{\spec_2}$ we obtain
that $\ass{c}{t}{\store{\Pc}{}{t_v}} \in\Inst_{2,retired}$ and hence that $\widetilde{t}\leftarrow\widetilde{c}?st\ \Pc\ \widetilde{t_v}\in\Inst_1$. Also, $\storage_1(\widetilde{t})=\widetilde{v}$ by cond. 4, 
$t\notin \decodes_1$ by cond. 5, and $\bn(\strfor(\newstate_1, \widetilde{t})) \subseteq \decodes_1$. But then
$\newstate_1\doublestep{}{} \newstate_1'$ where $\Inst_1' = \Inst_1\cup \widetilde{\Inst_2'}$ (by wellformedness), $\storage_1' =\storage_1$,
$\commits_1'=\commits_1$, and $\decodes_1' = \decodes_1\cup\{\widetilde{t}\}$.

\noindent \textbf{(Case $\ret$ and $t \not \in \decodes_2$)} Assume that $\newstate_2\doublestep{}{}\newstate_2'$, $t\in\dom{\storage_2}$, $\dom{\spec_2(t)}\cap \dom{\spec_2} =\emptyset$, $t\notin\guesses_2$, and $(\Inst_2,\storage_2,\commits_2,\decodes_2) \sim_t (\Inst_2,\spec_2(t),\commits_2,\decodes_2)$ such that $\Inst_2'=\Inst_2$, 
$\storage_2' =\storage_2$, $\commits_2'=\commits_2$, $\decodes_2'=\decodes_2$, $\spec_2'=\proj{\spec_2}{\overline{\{t\}}}$, and
$\guesses_2'=\guesses_2$.
By Lemma~\ref{lem:deps-eq} $\den{\inst}{(\Inst_2,\storage_2,\commits_2,\decodes_2)} = \storage_2(t)$.
By cond. 4, whenever $t'\notin\dom{\spec_2}$, $\storage_1(\widetilde{t'}) = \storage_2(t')$. We know that $\storage_1(\widetilde{t_i}) = \widetilde{\storage_2(t_i)}$ by 4. Let $\ass{c}{t}{o} \in \Inst_2$ be the
microinstruction bound to $t$ in $\Inst_2$. Since $\dom{\spec_2}\cap\dom{\spec_2(t)}=\emptyset$ we know that $t\in\Inst_{2,uf}$ and hence $\widetilde{t}\in\dom{\Inst_1}$ by 2. It follows that $\widetilde{t}$ can be executed resulting in $\Inst_1'=\Inst_1$, $\storage_1' = \storage_1[\widetilde{t}\mapsto \storage_2'(t)]$
such that $\storage_1'(\widetilde{t}) = \widetilde{\storage_2'(t)}$, $\commits_1'=\commits_1$, and $\decodes_1'=\decodes_1$. This is sufficient to
reestablish $R$.

\noindent \textbf{(Case $\ret$ and $t \in \decodes_2$)}
In this case we must additionally
account for the microinstructions produced by $t$: i.e. the partition
$\hat \Inst_{2}$ of $\Inst_2$ such that $\{\ass{c}{t}{o}\} < \hat
\Inst_{2}$. The microsintructions are not in $\Inst_{2,uf}$, since $t
\in \dom{\spec_2}$, hence are not covered by the bijection $\widetilde \cdot$.
However, microinstructions in $\hat \Inst_{2}$ are in $\Inst'_{2,uf}$.
For this reason, in order to restablish the bisimulation, the OoO must
perform a further step from $\newstate'_1$ and apply $\ftc$ to $t$. 
This allows to extend the bijection  
$\widetilde \cdot$ to relate the newly decoded
microinstructions to $\hat \Inst_{2}$.

\noindent \textbf{(Converse direction)}
For the converse direction, from the OoO semantics to the speculative semantics the steps for commits and non-speculative retires follow the commit case above closely. The only delicate case is for $\exe$. So assume $\newstate_1\step{}{}_{OoO}\newstate_1'$ such that $\widetilde{\instr} = \widetilde{t}\leftarrow \widetilde{c}?\widetilde{o}\in\Inst_1$, $\widetilde{t}\notin\dom{\storage_1}$, $[\widetilde{c}]\storage_1$ is true, $[\widetilde{\instr}]\newstate_1 = (\widetilde{v},\widetilde{l})$, $\Inst_1'=\Inst_1$, $\storage_1' = \storage_1[\widetilde{t}\mapsto\widetilde{v}]$, $\commits_1'=\commits_1$, and $\decodes_1'=\decodes_1$. We get that $\instr = t\leftarrow c?o\in\Inst_{2,uf}$. There are two cases. Either $t\notin\dom{\storage_2}$ ($t$ has not yet been executed), or $t\in\dom{\storage_2}\cap\dom{\spec_2}$.
In the former case, the execution step can be mirrored in the speculative semantics and then retired. In the latter case, the execution step can be retired directly.
This completes the proof of lemma \ref{lem:bisim:two}.
\end{proof}

We now obtain theorem \ref{thm:spec:co} as a corollary of lemma \ref{lem:bisim:one} and \ref{lem:bisim:two}.
\hfill $\Box$

%%% Local Variables:
%%% mode: latex
%%% TeX-master: "sp2020-report"
%%% End:

 \newcommand{\ctbisimrel}{\mathbf{R}}
\subsection{MIL Constant Time Security:
  Theorem~\ref{thm:ct}}
The proof is done by showing that the relation $\ctbisimrel$ is a bisimulation for the
OoO transition relation, where $\newstate \ctbisimrel \newstate'$
iff $\newstate \bisim \newstate'$ and there exist $\newstate_0 \sim_L
  \newstate'_0$ and $n$ such that $\newstate_0 \doublestep{}{}^n \newstate$ and $\newstate'_0
  \doublestep{}{}^n \newstate'$.

Let $(\Inst, \storage, \commits. \decodes) = \newstate$,
$(\Inst', \storage', \commits'. \decodes') = \newstate'$,
$\newstate \doublestep{}{} \newstate_1 = (I \cup I_t, \storage \cup \storage_t, \commits
\cup \commits_t, \decodes \cup \decodes_t)$, and $\stepparam(\newstate,
\newstate_1) = (\alpha, t)$.
The proof is done by case analysis on $\alpha$.

\textbf{(Case $\exe$)}
The hypothesis of the rule ensure that 
$\inst = \ass{c}{t}{o} \in \Inst$,
  $\fundefined{\storage}{t}$,  $\den{c}{\storage}$, and
  $\den{\inst}\newstate = (v, l)$.
  The conclusion  ensures that $\hat \storage = \{t \mapsto v\}$,
  $\hat \commits=\emptyset$, $\hat \decodes=\emptyset$, and
  $\hat \Inst = \emptyset$.
  Relation $\bisim$ ensures that  $\inst \in \Inst'$,
  $\fundefined{\storage'}{t}$,  and $\den{c}{\storage'}$.

  We must show that exists $v'$ such that $\den{\inst}\newstate' = (v', l)$.
  For $o = e$ and $o = \store{\restype}{t_a}{t_v}$ the proof is trivial. In fact,
  since $\den{\inst}\newstate$ then all free names of $o$ are defined in
  $\newstate$ and $\bisim$ ensures that the same names are defined in
  $\newstate'$.
  For $o = \load{\restype}{t_a}$, $\den{\inst}\newstate = (v, l)$ ensures
  that $\bn(\stract(\newstate, t)) = \{t_s\}$, $\fdefined{\newstate}{t_a}$, and
  $\fdefined{\newstate}{t_s}$. Relation $\bisim$ ensures that
  $\newstate'(t_a)=\newstate(t_a)$, $\fdefined{\newstate'}{t_s}$,
  $\Inst=\Inst'$ (hence there are the same store
  instructions), and that for every store
  $\ass{c'}{t'}{\store{\restype}{t'_a}{t'_v}}$,
  $\den{c'}\newstate=\den{c'}{\newstate'}$, and
  $\newstate(t'_a)=\newstate'(t'_a)$. Therefore
  $\bn(\stract(\newstate', t)) = \bn(\stract(\newstate, t))$
  and $\den{\inst}\newstate' = (\newstate'(t_s), l')$. Finally, since
  relation $\bisim$ guarantee that $(t_s \in \commits) \Leftrightarrow
  (t_s \in \commits')$ then $l'=l$.
  \\
  These properties enable applying rule ($\exe$) to show that
  $\newstate' \doublestep{l}{} (\Inst', \storage' \cup \{t \mapsto \storage'(t_s)\}, \commits',
  \decodes') = \newstate'_1$.
  \\
  To prove that $\bisim$ is preserved we use 
  Theorem~\ref{thm:ooo:co}.
  Let $\ass{c'}{t'}{o'} \in \Inst$. Let $\ANames'$ be $fn(c')$ if 
  $o'$ is neither a load or a store; $fn(c') \cup
  \{t'_a\}$ if $o'$ is a memory or register access and $t'_a$ is the
  corresponding address;
  $fn(c') \cup
  \{t'_v\}$ if $o'$ is a program counter update and $t'_v$ is the
  corresponding value.
  If $t \not \in \ANames'$ then the proof is trivial, since
  $\den{c}{\storage \cup \{t  \mapsto v\}} = \den{c}{\storage} =
  \den{c}{\storage'} = \den{c}{\storage' \cup \{t  \mapsto
    \storage'(t_s)\}}$ (the same holds for the address in case of a
  resource accesses or program counter update).

\begin{center}
\begin{tikzpicture}[node distance=2cm,auto,>=latex']

\tikzstyle{every node}=[font=\small]
\tikzset{pblock/.style = {rectangle, draw=white!50, top
                      color=white,bottom color=white, align=center, minimum width=0.75cm}}
                      
\node[pblock,name=a1]{$\newstate_0$};                      
\node[pblock,name=a2, right of  = a1]{$\newstate$};
\node[pblock,name=a3, right of  = a2]{$\newstate_1$};
\node[pblock,name=a4, right of  = a3]{$\newstate_{S}$};
                
\node[pblock,name=a'1, below = 0.5cm of a1]{$\newstate'_0$};                      
\node[pblock,name=a'2, right of  = a'1]{$\newstate'$};
\node[pblock,name=a'3, right of  = a'2]{$\newstate'_1$};
\node[pblock,name=a'4, right of  = a'3]{$\newstate'_{S}$};

\draw[->]  (a1) edge[dashed]  node[above] {$n$} (a2);
\draw[->]  (a2) edge  node[above]{$(\alpha, t)$} (a3); 
\draw[->]  (a3) edge[dashed] node[above,sloped] {$m$}  (a4);
\draw[->]  (a1) edge[dashed,bend left]  node[above] {$n+1+m$} (a4);

\draw[->]  (a'1) edge[dashed]  node[above] {$n$} (a'2);
\draw[->]  (a'2) edge  node[midway,sloped]{} (a'3); 
\draw[->]  (a'3) edge[dashed] node[above,sloped] {$m$}  (a'4);
\draw[->]  (a'1) edge[dashed,bend right]  node[below] {$n+1+m$} (a'4);

\draw[-]  (a1) edge  node[left] {$\sim_L$} (a'1);
\draw[-]  (a2) edge  node[left] {$\ctbisimrel$} (a'2);
\draw[-]  (a4) edge  node[left] {$\bisim$} (a'4);
\draw[-]  (a3) edge  node[left] {$\ctbisimrel$} (a'3);
\end{tikzpicture}
\end{center}
For $t \in fn(c')$ we reason as follows.
States $\newstate_1$ and
  $\newstate'_1$ are the $(n+1)$-th states of two
  OoO traces $\rho = \newstate_0 \doublestep{}{}^{n+1} \newstate_1$ and $\rho' = \newstate'_0 \doublestep{}{}^{n+1} \newstate'_1$ such that $\newstate_0
  \sim_{L} \newstate'_0$.
  There is a trace $\rho_1 = \newstate_0 \doublestep{}{}^{n+1}
  \newstate_1 \doublestep{}{}^{m} \newstate_s$ that has
  prefix $\rho$, such that $\completed{\newstate_s, t''}$
  for every $t'' \leq max(bn(\Inst'))$. Notice that this state is ``sequential''.
  Since in the OoO semantics the storage is monotonic  then
  $\newstate_1(t) = \newstate_s(t)$.
  Theorem~\ref{thm:ooo:co} permits to connect this trace to a
  sequential trace, which enables to use the MIL constant-time hypothesis.
  In fact, there exists an ordered
  execution $\pi$ of $\rho_1$ that ends in $\newstate_s$: $\pi = \newstate_0 \singlestep{}{}^{n+1+m}
  \newstate_s$.
  For the same reason, $\rho'$ is a prefix
  of a trace $\rho'_1$ that ends in a sequential state $\newstate'_s$,
  $\newstate'_s(t) = \newstate'_1(t)$, and there exists a sequential trace $\pi'$
  of $n+1+m$ steps  that
  ends in $\newstate'_s$.
  Since $\completed{\newstate_s, t''}$ and $\completed{\newstate'_s, t''}$
  then $\fdefined{\den{c'}}{\newstate_s}$ and
  $\fdefined{\den{c'}}{\newstate'_s}$.
  Therefore, we can use the assumption on MIL constant-time to show
  that
  $\newstate_s \bisim \newstate'_s$, hence $\den{c'}{\newstate_s} = \den{c'}{\newstate'_s}$. Since $t \in fn(c)$,
  either $\fundefined{\den{c'}}{\newstate_s}$ and
  $\fundefined{\den{c'}}{\newstate'_s}$ or $\den{c'}{\newstate_1} =
  \den{c'}{\newstate'_1}$.
  The same reasoning is used if $t'$ is a resource accesses and $t
  = t'_a$, or if $t'$ is program counter update and $t
  = t'_v$.

\textbf{(Case \cmt)}
The hypothesis  ensure that
$\ass{c}{t}{\store{\Locs}{t_a}{t_v}} \in \Inst$,
  $\storage(t) = v$, $t \not \in \commits$,
  $\bn(\strfor(\newstate, t)) \subseteq \commits$,
  and $\storage(t_a) = a$.
  The conclusion  ensures that $\storage_{t} = \emptyset$,
  $\commits_{t}= \{t\}$, $\decodes_{t}=\emptyset$, 
  $I_{t} = \emptyset$, and $l = \ds{a}$.
  Also, the invariant guarantees that $\den{c}{\storage}$.\\
  The relation $\bisim$ ensures that   $\ass{c}{t}{\store{\Locs}{t_a}{t_v}} \in \Inst'$,
  $\exists v' . \storage'(t) = v'$, $t \not \in \commits'$,
  $\storage'(t_a) = a$, and $\den\storage'[c]$.\\
  To show that $\bn(\strfor(\newstate', t)) = \bn(\strfor(\newstate, t))$ we use
  the same reasoning used to prove that $\bn(\stract(\newstate', t)) =
  \{t_s\}$  of case $\exe$ when $o = \load{t_a}$.\\
  Relation $\bisim$ ensures that $\bn(\strfor(\newstate', t)) \subseteq
  \commits'$. Therefore  we can apply rule ($\cmt$) to show that
  $\newstate' \doublestep{l}{} (\Inst', \storage', \commits' \cup \{t\},
  \decodes') = \newstate'_1$ hence $\newstate_1 \bisim \newstate'_1$.

\textbf{(Case $\ftc$)}
The hypothesis of the rule ensure that
$\ass{c}{t}{\store{\Pc}{}{t_v}} \in \Inst$,
  $\storage(t) = v$, $t \not \in \decodes$,
  $\bn(\strfor(\newstate, t)) \subseteq \decodes$.
  The conclusion  ensures that $\storage_{t} = \emptyset$,
  $\commits_{t}= \emptyset$, $\decodes_{t}=\{t\}$, 
  $I_{t} = \translate(v, max(\Inst))$, and $l = \il{v}$.
  Also, the invariant guarantees that $\den{c}{\storage}$.\\
  The relation $\bisim$ ensures that   $\ass{c}{t}{\store{\Pc}{t_v}} \in \Inst'$,
  $\exists v' . \storage'(t) = v'$, $t \not \in \decodes'$,
  and $\den{c}{\storage'}$.\\
  To show that $\bn(\strfor(\newstate', t)) = \bn(\strfor(\newstate, t))$ we use
  the same reasoning used to prove that $\bn(\stract(\newstate', t)) =
  \{t_s\}$  of case $\exe$ when $o = \load{t_a}$.\\
  Relation $\bisim$ ensures that $\bn(\strfor(\newstate', t)) \subseteq
  \decodes'$.
  We can apply rule ($\ftc$) to show that
  $\newstate' \doublestep{\il{v'}}{} (\Inst' \cup   \translate(v', max(\Inst))), \storage', \commits',
  \decodes' \cup \{t\}) = \newstate'_1$.
  \\
  To show that $v'=v$ and  $\bisim$ is reestablished for $
  \translate(v, max(\Inst))$ we use a similar reasoning to case $\exe$. We
  find two sequential traces that end with the fetch of $t$ and 
  use MIL constant time to show that the value used for the PC
  update must be the same and that the parameter and conditions of the
  newly decoded microinstructions are equivalent in the two states.

\end{document}

%%% Local Variables:
%%% mode: latex
%%% TeX-master: t
%%% End: